\newif\iflongversion
\theoremstyle{plain}
\newtheorem{theorem}{Theorem}
\newtheorem{lemma}[theorem]{Lemma}
\theoremstyle{definition}
\newtheorem{definition}{Definition}
\theoremstyle{remark}
\renewcommand{\ltrue}{\top}
\newcommand{\pusall}{\ddiamond{c}{\ltrue}}
\newcommand{\qusall}{\ddiamond{d}{\ltrue}}
\renewcommand{\mexists}[2][]
{\textcolor{blue}{\exists}#2\,}
\def\qedhere{\qed}
\def\qedhere{}
\definecolor{vgreen}{rgb}{.1,.5,0}
\definecolor{vred}{rgb}{.7,0,0}
\definecolor{vblue}{rgb}{.1,.15,.62}
\tikzstyle{transition}=[-stealth]
\tikzstyle{similar state}=[double]
\tikzstyle{transition exists}=[dashed]
\tikzstyle{equivalence}=[<->,double]
\tikzstyle{strategy preimage}=[draw=vblue,fill=vblue!20,shape=ellipse]
\tikzstyle{winning condition}=[draw=vred,fill=vred!50,text=white,shape=circle]
\tikzstyle{winning annotation}=[ultra thick,black,->]
\renewcommand{\allvars}{\mathbf{V}}%
\renewcommand*{\applyusubst}[2]{#1{#2}}%
\newcommand*{\usubstgroup}[1]{(#1)}
\renewcommand{\with}{\mathrel{:}}
\newcommand*{\ignore}[1]{}
\newcommand*{\linterpretationsconst}[2]{\mathcal{I}}
\newcommand{\I}{\vdLint[const=I,state=\omega]}
\newcommand{\It}{\vdLint[const=I,state=\nu]}%
\newcommand{\Iz}{\vdLint[const=I,state=\mu]}
\newcommand{\If}{\DALint[const=I,flow=\varphi]}
\newcommand*{\Iff}[1][\zeta]{\vdLint[const=I,state=\varphi(#1)]}%
\newcommand{\Ia}{\iadjointSubst{\sigma}{\I}}%
\newcommand{\Iminner}{\imodif[const]{\Ie}{\,\usarg}{d}}%
\newcommand{\Iminnerorg}{\imodif[const]{\I}{\,\usarg}{d}}%
\newcommand{\Iiminner}{\imodif[const]{\I}{\,\usarg}{d}}%
\newcommand{\Ie}{\vdLint[const=I,state=\nu]}
\newcommand{\Iae}{\iconcat[state=\nu]{\Ia}}
\newcommand{\IIalt}{\vdLint[const=I,state=\tilde{\omega}]}%
\let\Ialt\IJalt
\newcommand{\Iaz}{\iconcat[state=\mu]{\Ia}}%
\newcommand{\bebecomes}{\mathrel{::=}}
\newcommand{\alternative}{~|~}
\providecommand{\dfn}[2][]{\emph{#2}}
\newcommand*{\iwinreg}[3][]{\iaccess[#1]{#2}\big(#3\big)}
\newcommand*{\iwin}[3][]{\iget[state]{#2} \in \iwinreg[#1]{#2}{#3}}
\newcommand*{\inowin}[3][]{\iget[state]{#2} \not\in \iwinreg[#1]{#2}{#3}}
\newcommand*{\restrictto}[2]{#1{\uparrow}#2}
\newcommand*{\selectlike}[3]{#1{\downarrow} #2{\scriptstyle(#3)}}
\newcommand*{\iselectlike}[3]{\selectlike{#1}{\iget[state]{#2}}{#3}}
\newcommand*{\genDE}[1]{\theta}%
\newcommand{\ivr}{\psi}
  \renewcommand{\iadjointSubst}[2]{%
    \useinterpretation{#2}%
    \edef\tmpadjointconst{{#1}^*_{\Interpretation@state}{\Interpretation@const}}%
    \iconcat[const=\tmpadjointconst]{#2}
  }
\newcommand*{\usubstapp}[3]{{#2}^{#1}{#3}}%
\newcommand*{\usubstappp}[4]{{#3}^{#2}_{#1}{#4}}%
\newcommand{\rref}[2][]{\prettyref{#2}}
\begin{document}

\newsavebox{\USarg}%
\sbox{\USarg}{$\boldsymbol{\cdot}$}

\newsavebox{\UScarg}%
\sbox{\UScarg}{$\boldsymbol{\_}$}

\newsavebox{\Lightningval}%
\sbox{\Lightningval}{$\scriptstyle\textcolor{vred}{\lightning}$}
\irlabel{clash|clash\usebox{\Lightningval}} %

\title{Uniform Substitution At One Fell Swoop}

\author{Andr\'e Platzer\thanks{
  Computer Science Department, Carnegie Mellon University, Pittsburgh, USA
\newline
  Fakult\"at f\"ur Informatik, Technische Universit\"at M\"unchen
}} 
\date{}

\maketitle
\allowdisplaybreaks
\thispagestyle{empty}

\maketitle

\begin{abstract}
Uniform substitution of function, predicate, program or game symbols is the core operation in parsimonious provers for hybrid systems and hybrid games.
By postponing soundness-critical admissibility checks, this paper introduces a uniform substitution mechanism that proceeds in a linear pass homomorphically along the formula.
Soundness is recovered using a simple variable condition at the replacements performed by the substitution.
The setting in this paper is that of differential hybrid games, in which discrete, continuous, and adversarial dynamics interact in differential game logic \dGL.
This paper proves soundness and completeness of one-pass uniform substitutions for \dGL.
\end{abstract}

\section{Introduction}

After a number of false starts on substitution \cite{HilbertAckermann28,HilbertBernays34,Quine34}, even by prominent logicians, did Church's \emph{uniform substitution} \cite[\S35,40]{Church_1956} provide a mechanism for substituting function and predicate symbols with terms and formulas in first-order logic.
Given a mechanism for applying a uniform substitution $\sigma$ to formulas $\phi$ with result denoted $\applyusubst{\sigma}{\phi}$ uniform substitutions are used with Church's proof rule:
\[
      \cinferenceRule[US|US]{uniform substitution}
      {\linferenceRule[formula]
        {\phi}
        {\applyusubst{\sigma}{\phi}}
      }{}
\]
Contrary to casual belief, quite some care is needed in the substitution process, even of only function symbols \cite{DBLP:journals/ndjfl/Schneider80}, in order to prevent replacing functions with terms that denote incompatible values in different places depending on which variables are being used in the replacements and in which formula contexts.
Due to their subtleties, there have even been passionate calls for banishing substitutions \cite{DBLP:journals/jsl/Henkin53} and using more schemata.
This paper moves in the opposite direction, making substitutions even more subtle, but also faster and, nevertheless, sound.

The biggest theoretical advantage of uniform substitutions is that they make instantiation explicit, so that proof calculi can use axioms (concrete object-level formulas) instead of axiom schemata (meta-level concepts standing for infinitely many formulas).
Their biggest practical advantage is that this avoidance of schemata enables parsimonious theorem prover implementations that only consist of copies of concrete formulas as axioms together with \emph{one} algorithm implementing the application of uniform substitutions (plus renaming).
Similar advantages exist for concrete axiomatic proof rules instead of rule schemata \cite{DBLP:journals/jar/Platzer17}.
This design obviates the need for algorithms that recognize all of the infinitely many instances of schemata and check all of their (sometimes pretty subtle) side conditions to soundly reject improper reasoning.
These practical advantages have first been demonstrated for hybrid systems \cite{DBLP:conf/cade/FultonMQVP15} and for hybrid games \cite{DBLP:conf/cade/Platzer18} proving, where uniform substitution led to significant reductions in soundness-critical size (down from 66000 to 1700 lines of code) or implementation time (down from months to minutes) compared to conventional prover implementations.

These uses of the uniform substitution principle required generalizations from first-order logic \cite{Church_1956} to differential dynamic logic \dL for hybrid systems \cite{DBLP:journals/jar/Platzer17} and differential game logic \dGL for hybrid games \cite{DBLP:conf/cade/Platzer18}, including substitutions of programs or games, respectively.
The presence of variables whose values change imperatively over time, and of differential equations \(\D{x}=\genDE{x}\) that cause intrinsic links of variables $x$ and their time-derivatives $\D{x}$, significantly complicate affairs compared to the simplicity of first-order logic \cite{Church_1956,DBLP:journals/ndjfl/Schneider80} and $\lambda$-calculus \cite{DBLP:journals/jsl/Church40}.
Pure $\lambda$-calculus has a single binder and rests on the three pillars of $\alpha$-conversions (for bound variables), $\beta$-reductions (by capture-avoiding substitutions), and $\eta$-conversions (versus free variables), which provide an elegant, deep, but solid foundation for functional programs (with similar observations for first-order logic).
Despite significant additional challenges,\footnote{%
The area of effect that an assignment to a variable has is non-computable and even a single occurrence of a variable may have to be both free and bound to ensure correctness.
Such overlap is an inherent consequence of change, which is an intrinsic feature of dynamical systems theory (the mathematics of change) and game theory (the mathematics of effects resulting from strategic interaction by player decisions).
}
just two elementary operations, nevertheless, suffice as a foundation for imperative programs and even hybrid games: bound renaming and uniform substitution (based on suitably generalized notions of free and bound variables).
Uniform substitutions generalize elegantly and in highly modular ways \cite{DBLP:journals/jar/Platzer17,DBLP:conf/cade/Platzer18}.
Much of the conceptual simplicity in the correctness arguments in these cases, however, came from the fact that Church-style uniform substitutions are applied by checking \emph{at each operator} admissibility, i.e., that no free variable be introduced into a context in which it is bound.
Such checks simplify correctness proofs, because they check each admissibility condition at every operator where they are necessary for soundness.
The resulting substitution mechanism is elegant but computationally suboptimal, because it repeatedly 
checks admissibility recursively again and again at every operator.
For example, applying a uniform substitution $\sigma$ checks at every sequential composition $\alpha;\beta$ again that the entire substitution $\sigma$ is admissible for the remainder $\beta$
compared to the bound variables of the result of having applied $\sigma$ to $\alpha$:
\begin{equation}
    \applyusubst{\sigma}{\usubstgroup{\alpha;\beta}} = (\applyusubst{\sigma}{\usubstgroup{\alpha}}; \applyusubst{\sigma}{\usubstgroup{\beta}}) \quad \text{if $\sigma$ is $\boundvarsdef{\applyusubst{\sigma}{\usubstgroup{\alpha}}}$-admissible for $\beta$}
  \label{eq:multipass-usubst-compose}
\end{equation}
where $\sigma$ is $U$-admissible for $\beta$ iff the free variables of the replacements for the part of $\sigma$ having function/predicate symbols that occur in $\beta$ do not intersect $U$, which, here, are the bound variables $\boundvarsdef{\applyusubst{\sigma}{\usubstgroup{\alpha}}}$ computed from the result of applying the substitution $\sigma$ to $\alpha$ \cite{DBLP:conf/cade/Platzer18}.
This mechanism is sound \cite{DBLP:journals/jar/Platzer17,DBLP:conf/cade/Platzer18}, even verified sound for hybrid systems in Isabelle/HOL and Coq \cite{DBLP:conf/cpp/BohrerRVVP17}, but computationally redundant due to its repeated substitution application and admissibility computations.

The point of this paper is to introduce a more liberal form of uniform substitution that \emph{substitutes at one fell swoop}, forgoing admissibility checks during the operators where they would be needed with a monadic computation of taboo sets to make up for that negligence by checking cumulative admissibility conditions locally only \emph{once} at each replacement that the uniform substitution application performs.
This \emph{one-pass uniform substitution} is computationally attractive, because it operates linearly in the output, which matters because uniform substitution is the dominant logical inference in uniform substitution provers \cite{DBLP:conf/cade/FultonMQVP15}.
The biggest challenge is, precisely, that correctness of substitution can no longer be justified for all operators where it is needed (because admissibility is no longer recursively checked at every operator).
The most important technical insight of this paper is that modularity of correctness arguments can be recovered, regardless, using a neighborhood semantics for taboos.
\iflongversion
\footnote{Interestingly, a one-pass formulation of uniform substitutions for \dL had originally been conjectured by the author, but abandoned because no soundness argument had been found. Simple modular soundness proofs were possible due to the simplicity of the prior Church-style uniform substitution formulations for \dL \cite{DBLP:journals/jar/Platzer17} and \dGL \cite{DBLP:conf/cade/Platzer18}.}
\fi
Another value of this paper is its straightforward completeness proof based on \cite{DBLP:journals/tocl/Platzer15,DBLP:journals/jar/Platzer17}.
Overall, the findings of this paper make it possible to verify hybrid games (and systems) with faster small soundness-critical prover cores than before \cite{DBLP:conf/cade/QueselP12,DBLP:conf/cade/Platzer18}, which, owing to their challenges, are the only two verification tools for hybrid games.
Uniform substitutions extend to differential games \cite{DBLP:journals/tams/ElliottK74,DBLP:journals/indianam/EvansSouganidis84}, where soundness is challenging \cite{DBLP:journals/tac/MitchellBT05}, leading to the first basis for a small prover core for differential hybrid games \cite{DBLP:journals/tocl/Platzer17}.
The accelerated proving primitives are of interest for other dynamic logics \cite{Harel_et_al_2000,KeYBook2016}.

\section{Preliminaries: Differential Game Logic}

This section recalls the basics of differential game logic \cite{DBLP:journals/tocl/Platzer15,DBLP:conf/cade/Platzer18}, the logic for specifying and verifying hybrid games of two players with differential equations.

\subsection{Syntax}

The set of all variables is $\allvars$, including for each variable $x$ a differential variable $\D{x}$ (e.g., for an ODE for $x$).
Higher-order differential variables $\D[2]{x}$ etc.\ are not used in this paper, so a finite set $\allvars$ suffices.
The terms $\theta$ of (differential-form) \dGL are polynomial terms with real-valued function symbols and \emph{differential terms} $\der{\theta}$ that are used to reduce reasoning about differential equations to reasoning about equations of differentials \cite{DBLP:journals/jar/Platzer17}.
Hybrid games $\alpha$ describe the permitted discrete and continuous actions by player Angel and player Demon.
Besides the operators of first-order logic of real arithmetic, \dGL formulas $\phi$ can be built using \(\ddiamond{\alpha}{\phi}\), which expresses that Angel has a winning strategy in the hybrid game $\alpha$ to reach the region satisfying \dGL formula $\phi$. Likewise, \(\dbox{\alpha}{\phi}\) expresses that Demon has a winning strategy in the hybrid game $\alpha$ to reach the region satisfying $\phi$.

\begin{definition}[Terms]
\emph{Terms} are defined by the following grammar
(with $\theta$,$\eta$, $\theta_1$,$\dots$,$\theta_k$ as terms, $x\in\allvars$ as variable, and $f$ as function symbol of arity $k$):
\[
  \theta,\eta ~\bebecomes~
  x
  \alternative
  f(\theta_1,\dots,\theta_k)
  \alternative
  \theta+\eta
  \alternative
  \theta\cdot\eta
  \alternative \der{\theta}
\]
\end{definition}

\begin{definition}[\dGL formulas] \label{def:dGL-formula}
The \emph{formulas of differential game logic {\dGL}} are defined by the following grammar (with $\phi,\psi$ as \dGL formulas, $p$ as predicate symbol of arity $k$, $\theta,\eta,\theta_i$ as terms, $x$ as variable, and $\alpha$ as hybrid game):
  \[
  \phi,\psi ~\bebecomes~
  \theta\geq\eta \alternative
  p(\theta_1,\dots,\theta_k) \alternative
  \lnot \phi \alternative
  \phi \land \psi \alternative
  \lexists{x}{\phi} \alternative 
  \ddiamond{\alpha}{\phi}
  \]
\end{definition}

The usual operators can be derived, e.g., \(\lforall{x}{\phi}\) is \(\lnot\lexists{x}{\lnot\phi}\) and similarly for $\limply,\lbisubjunct$ and truth $\ltrue$.
Existence of Demon's winning strategy in hybrid game $\alpha$ to achieve $\phi$ is expressed by the \dGL formula \(\dbox{\alpha}{\phi}\), which can be expressed indirectly as \(\lnot\ddiamond{\alpha}{\lnot\phi}\), thanks to the hybrid game determinacy theorem \cite[Thm.\,3.1]{DBLP:journals/tocl/Platzer15}.

\begin{definition}[Hybrid games] \label{def:dGL-HG}
The \emph{hybrid games of differential game logic {\dGL}} are defined by the following grammar (with $\alpha,\beta$ as hybrid games, $a$ as game symbol, $x$ as variable, $\theta$ as term, and $\ivr$ as \dGL formula):
\[
  \alpha,\beta ~\bebecomes~
  a\alternative
  \pupdate{\pumod{x}{\theta}}
  \alternative
  \pevolvein{\D{x}=\genDE{x}}{\ivr}
  \alternative
  \ptest{\ivr}
  \alternative
  \alpha\cup\beta
  \alternative
  \alpha;\beta
  \alternative
  \prepeat{\alpha}
  \alternative
  \pdual{\alpha}
\]
\end{definition}

The operator precedences make all unary operators, including modalities and quantifiers, bind stronger.
Just like the meaning of function and predicate symbols is subject to interpretation, the effect of game symbol $a$ is up to interpretation.
In contrast, the assignment game \(\pupdate{\pumod{x}{\theta}}\) has the specific effect of changing the value of variable $x$ to that of term $\theta$.
The differential equation game \(\pevolvein{\D{x}=\genDE{x}}{\ivr}\) allows Angel to choose how long she wants to follow the (vectorial) differential equation \(\D{x}=\genDE{x}\) for any real duration within the set of states where evolution domain constraint $\ivr$ is true.
Differential equation games with trivial \(\ivr=\ltrue\) are just written \(\pevolve{\D{x}=\genDE{x}}\).
The test game \(\ptest{\ivr}\) challenges Angel to satisfy formula $\ivr$, for if $\ivr$ is not true in the present state she loses the game prematurely.
The choice game \(\pchoice{\alpha}{\beta}\) allows Angel to choose if she wants to play game $\alpha$ or game $\beta$.
The sequential game \(\alpha;\beta\) will play game $\beta$ after game $\alpha$ terminates (unless a player prematurely lost the game while playing $\alpha$).
The repetition game \(\prepeat{\alpha}\) allows Angel to decide, after having played any number of $\alpha$ repetitions, whether she wants to play another round (but she cannot play forever).
Finally, the dual game \(\pdual{\alpha}\) will have both players switch sides: every choice that Angel had in $\alpha$ will go to Demon in \(\pdual{\alpha}\), and vice versa, while every condition that Angel needs to meet in $\alpha$ will be Demon's responsibility in \(\pdual{\alpha}\), and vice versa.

Substitutions are fundamental but subtle. 
For example, a substitution $\sigma$ that has the effect of replacing $f(x)$ with $x^2$ and $a(x)$ with $z\itimes y$ is unsound for the following formula while a substitution that replaces $a(x)$ with $z\itimes x^2$ would be fine:
\begin{equation}
    \linfer[clash]
    {\ddiamond{\pevolve{\D{x}=f(x)\syssep\D{y}=a(x)\itimes y}}{\,x\geq1} \lbisubjunct \ddiamond{\pevolve{\D{x}=f(x)}}{\,x\geq1}}
    {\ddiamond{\pevolve{\D{x}=x^2\syssep\D{y}={z\itimes y}\itimes y}}{\,x\geq1} \lbisubjunct \ddiamond{\pevolve{\D{x}=x^2}}{\,x\geq1}}
    \label{eq:firstbadsubstexample}
\end{equation}
The introduction of a new variable $z$ by the substitution $\sigma$ is acceptable, but, even if $y$ was already present previously, its introduction by $\sigma$ makes the inference unsound (e.g., when $x=y=1/z=1/2$), because this equates a system with a solution that is exponential in $y$ with a hyperbolic solution of more limited duration, even if both solutions are already hyperbolic of limited time from $x$.
By contrast, the use of the previously present variable $x$ to form \(\D{x}=x^2\) is fine.
The difference is that, unlike $z$, variable $y$ has a differential equation that changes the value of $y$ and, while $x$ also does, $f(x)$ and $a(x)$ may explicitly depend on $x$.
It is crucial to distinguish correct and incorrect substitutions in all cases.

\subsection{Semantics}

A \dfn{state} \(\iget[state]{\I}\) is a mapping from the set of all variables $\allvars$ to the reals $\reals$. 
The state \(\iget[state]{\imodif[state]{\I}{x}{r}}\) agrees with state \(\iget[state]{\I}\) except for variable $x$ whose value is $r\in\reals$ in \(\iget[state]{\imodif[state]{\I}{x}{r}}\).
The set of all states is denoted $\linterpretations{\Sigma}{\allvars}$ and the set of all its subsets is denoted \(\powerset{\linterpretations{\Sigma}{\allvars}}\).

The semantics of function, predicate, and game symbols is independent from the state.
They are interpreted by an \dfn{interpretation} $\iget[const]{\I}$ that maps each arity $k$ function symbol $f$ to a $k$-ary smooth function \(\iget[const]{\I}(f) : \reals^k\to\reals\), each arity $k$ predicate symbol $p$ to a $k$-ary relation \(\iget[const]{\I}(p) \subseteq \reals^k\), and each game symbol $a$ to a monotone \(\iget[const]{\I}(a) : \powerset{\linterpretations{\Sigma}{\allvars}} \to \powerset{\linterpretations{\Sigma}{\allvars}}\) where \(\iget[const]{\I}(a)(X) \subseteq \linterpretations{\Sigma}{\allvars}\) are the states from which Angel has a winning strategy to achieve $X\subseteq\linterpretations{Sigma}{V}$ in game $a$.
Differentials \(\der{\theta}\) have a differential-form semantics \cite{DBLP:journals/jar/Platzer17}: the sum of partial derivatives by all variables $x\in\allvars$ multiplied by the values of their associated differential variable $\D{x}$.

\begin{definition}[Semantics of terms] \label{def:dL-valuationTerm}
The \emph{semantics of a term} $\theta$ in interpretation $\iget[const]{\I}$ and state $\iget[state]{\I}\in\linterpretations{\Sigma}{V}$ is its value \m{\ivaluation{\I}{\theta}} in $\reals$.
It is defined inductively as
\begin{compactenum}
\item \m{\ivaluation{\I}{x} = \iget[state]{\I}(x)} for variable $x\in\allvars$
\item \(\ivaluation{\I}{f(\theta_1,\dots,\theta_k)} = \iget[const]{\I}(f)\big(\ivaluation{\I}{\theta_1},\dots,\ivaluation{\I}{\theta_k}\big)\) for function symbol $f$
\item \m{\ivaluation{\I}{\theta+\eta} = \ivaluation{\I}{\theta} + \ivaluation{\I}{\eta}}
\item \m{\ivaluation{\I}{\theta\cdot\eta} = \ivaluation{\I}{\theta} \cdot \ivaluation{\I}{\eta}}
\item
\m{\ivaluation{\I}{\der{\theta}}
= \sum_{x\in\allvars} \iget[state]{\I}(\D{x}) \Dp[x]{\ivaluation{\I}{\theta}}
}
for the differential \(\der{\theta}\) of $\theta$
\end{compactenum}
\end{definition}

\noindent
The semantics of differential game logic in interpretation $\iget[const]{\I}$ defines, for each formula $\phi$, the set of all states \(\imodel{\I}{\phi}\), in which $\phi$ is true.
Since hybrid games appear in \dGL formulas and vice versa, the semantics \(\iwinreg[\alpha]{\I}{X}\) of hybrid game $\alpha$ in interpretation $\iget[const]{\I}$ is defined by simultaneous induction as the set of all states from which Angel has a winning strategy in hybrid game $\alpha$ to achieve $X\subseteq\linterpretations{\Sigma}{\allvars}$.

\begin{definition}[\dGL semantics] \label{def:dGL-semantics}
The \emph{semantics of a \dGL formula} $\phi$ for each interpretation $\iget[const]{\I}$ with a corresponding set of states $\linterpretations{\Sigma}{\allvars}$ is the subset \m{\imodel{\I}{\phi}\subseteq\linterpretations{\Sigma}{\allvars}} of states in which $\phi$ is true.
It is defined inductively as follows
\begin{enumerate}
\item \(\imodel{\I}{\theta\geq\eta} = \{\iget[state]{\I} \in \linterpretations{\Sigma}{\allvars} \with \ivaluation{\I}{\theta}\geq\ivaluation{\I}{\eta}\}\)
\item \(\imodel{\I}{p(\theta_1,\dots,\theta_k)} = \{\iget[state]{\I} \in \linterpretations{\Sigma}{\allvars} \with (\ivaluation{\I}{\theta_1},\dots,\ivaluation{\I}{\theta_k})\in\iget[const]{\I}(p)\}\)
\item \(\imodel{\I}{\lnot\phi} = \scomplement{(\imodel{\I}{\phi})}\)
\(= \linterpretations{\Sigma}{\allvars} \setminus \imodel{\I}{\phi}\)
is the complement of \(\imodel{\I}{\phi}\)
\item \(\imodel{\I}{\phi\land\psi} = \imodel{\I}{\phi} \cap \imodel{\I}{\psi}\)
\item
{\def\Im{\imodif[state]{\I}{x}{r}}%
\(\imodel{\I}{\lexists{x}{\phi}} =  \{\iget[state]{\I} \in \linterpretations{\Sigma}{\allvars} \with \iget[state]{\Im} \in \imodel{\I}{\phi} ~\text{for some}~r\in\reals\}\)
}
\item \(\imodel{\I}{\ddiamond{\alpha}{\phi}} = \iwinreg[\alpha]{\I}{\imodel{\I}{\phi}}\)
\end{enumerate}
A \dGL formula $\phi$ is \emph{valid in $\iget[const]{\I}$}, written \m{\iget[const]{\I}\models{\phi}}, iff it is true in all states, i.e., \m{\imodel{\I}{\phi}=\linterpretations{\Sigma}{\allvars}}.
Formula $\phi$ is \emph{valid}, written \m{\entails\phi}, iff \m{\iget[const]{\I}\models{\phi}} for all interpretations $\iget[const]{\I}$.
\end{definition}

\begin{definition}[Semantics of hybrid games] \label{def:HG-semantics}
The \emph{semantics of a hybrid game} $\alpha$ for each interpretation $\iget[const]{\I}$ is a function \m{\iwinreg[\alpha]{\I}{{\cdot}}} that, for each set of states \m{X\subseteq\linterpretations{\Sigma}{\allvars}} as Angel's winning condition, gives the \emph{winning region}, i.e., the set of states \m{\iwinreg[\alpha]{\I}{X} \subseteq \linterpretations{\Sigma}{\allvars}} from which Angel has a winning strategy to achieve $X$ in $\alpha$ (whatever strategy Demon chooses).
It is defined inductively as follows
\begin{enumerate}
\item \(\iwinreg[a]{\I}{X} = \iget[const]{\I}(a)(X)\)
\item \(\iwinreg[\pupdate{\pumod{x}{\theta}}]{\I}{X} = \{\iget[state]{\I} \in \linterpretations{\Sigma}{\allvars} \with \modif{\iget[state]{\I}}{x}{\ivaluation{\I}{\theta}} \in X\}\)

\item \(\iwinreg[\pevolvein{\D{x}=\genDE{x}}{\ivr}]{\I}{X} = \{\iget[state]{\I} \in \linterpretations{\Sigma}{\allvars} \with 
      \iget[state]{\I}=\iget[state]{\Iff[0]}\) on $\scomplement{\{\D{x}\}}$ and \(\iget[state]{\Iff[r]}\in X\)
  for some function \m{\iget[flow]{\If}:[0,r]\to\linterpretations{\Sigma}{\allvars}} of some duration $r\in\reals$
  satisfying \m{\imodels{\If}{\D{x}=\genDE{x}\land\ivr}}$\}$
  \\
  where \m{\imodels{\If}{\D{x}=\genDE{x}\land\ivr}}
  iff \(\imodels{\Iff[\zeta]}{\D{x}=\genDE{x}\land\ivr}\)
  and \(\iget[state]{\Iff[0]}=\iget[state]{\Iff[\zeta]}\) on $\scomplement{\{x,\D{x}\}}$ 
  for all \(0{\leq}\zeta{\leq}r\)
  and \(\D[t]{\iget[state]{\Iff[t]}(x)}(\zeta)\) exists and equals \(\iget[state]{\Iff[\zeta]}(\D{x})\) for all \(0{\leq}\zeta{\leq}r\) if \m{r{>}0}.

\item \(\iwinreg[\ptest{\ivr}]{\I}{X} = \imodel{\I}{\ivr}\cap X\)
\item \(\iwinreg[\pchoice{\alpha}{\beta}]{\I}{X} = \iwinreg[\alpha]{\I}{X}\cup\iwinreg[\beta]{\I}{X}\)
\item \(\iwinreg[\alpha;\beta]{\I}{X} = \iwinreg[\alpha]{\I}{\iwinreg[\beta]{\I}{X}}\)
\item \(\iwinreg[\prepeat{\alpha}]{\I}{X} = \capfold\{Z\subseteq\linterpretations{\Sigma}{\allvars} \with X\cup\iwinreg[\alpha]{\I}{Z}\subseteq Z\}\)
which is a least fixpoint \cite{DBLP:journals/tocl/Platzer15}

\item \(\iwinreg[\pdual{\alpha}]{\I}{X} = \scomplement{(\iwinreg[\alpha]{\I}{\scomplement{X}})}\)
\end{enumerate}
\end{definition}

\noindent
Along \(\pevolvein{\D{x}=\genDE{x}}{\ivr}\), variables $x$ and $\D{x}$ enjoy an intrinsic link since they co-evolve.

\subsection{Static Semantics}

Sound uniform substitutions check free and bound occurrences of variables to prevent unsound replacements of expressions that might have incorrect values in the respective replacement contexts.
The whole point of this paper is to skip admissibility checks such as that in \rref{eq:multipass-usubst-compose}. Free (and, indirectly, bound) variables will still have to be consulted to tell apart acceptable from unsound occurrences.

Hybrid games even make it challenging to characterize free and bound variables. Both are definable based on whether or not their values affect the existence of winning strategies under variations of the winning conditions \cite{DBLP:conf/cade/Platzer18}.
The \emph{upward projection} \(\restrictto{X}{V}\) increases the winning condition $X\subseteq\linterpretations{\Sigma}{\allvars}$ from variables $V\subseteq\allvars$ to all states that are ``on $V$ like $X$'', i.e., similar on $V$ to states in $X$.
The \emph{downward projection} \(\iselectlike{X}{\I}{V}\) shrinks the winning condition $X$, fixing the values of state $\iget[state]{\I}$ on variables $V\subseteq\allvars$ to keep just those states of $X$ that agree with $\iget[state]{\I}$ on $V$.
\begin{definition} \label{def:projections}%
  The set \(\restrictto{X}{V} =
  \{ \iget[state]{\It} \in \linterpretations{\Sigma}{\allvars} \with \mexists{\iget[state]{\I}\in X}{\iget[state]{\I}=\iget[state]{\It} ~\text{on}~V}\} \supseteq X\)
  extends $X\subseteq\linterpretations{\Sigma}{\allvars}$ to the states that agree on $V\subseteq\allvars$ with some state in $X$ (written $\mexistsquantifier$).
  The set \(\iselectlike{X}{\I}{V} = \{ \iget[state]{\It}\in X \with \iget[state]{\I}=\iget[state]{\It} ~\text{on}~V\} \subseteq X\)
  selects state $\iget[state]{\I}$ on $V\subseteq\allvars$ in $X\subseteq\linterpretations{\Sigma}{\allvars}$.
\end{definition}

Projections make it possible to (\emph{semantically!}) define free and bound variables of hybrid games by expressing variable dependence and ignorance. Such semantic characterizations increase modularity and are used for the correctness of syntactic analyzes that compute supersets \cite[Sect.\,2.4]{DBLP:journals/jar/Platzer17}.
Variable $x$ is free in hybrid game $\alpha$ iff two states that only differ in the value of $x$ differ in membership in the winning region of $\alpha$ for some winning condition $\restrictto{X}{\scomplement{\{x\}}}$ that does not distinguish values of $x$. Variable $x$ is bound in hybrid game $\alpha$ iff it is in the winning region of $\alpha$ for some winning condition $X$ but not for the winning condition \(\iselectlike{X}{\I}{\{x\}}\) that limits the new value of $x$ to stay at its initial value $\iget[state]{\I}(x)$.

\begin{definition}[Static semantics] \label{def:static-semantics}%
The \emph{static semantics} defines the \emph{free variables}, which are all variables that the value of an expression depends on,
as well as \emph{bound variables}, $\boundvarsdef{\alpha}$, which can change their value during game $\alpha$, as:
\allowdisplaybreaks%
\[\begin{array}{ll}
  \freevarsdef{\theta} &= 
  \big\{x \in \allvars \with \mexists{\iget[const]{\I},\iget[state]{\I},\iget[state]{\IIalt}} \text{such that}~ \iget[state]{\I}=\iget[state]{\IIalt} ~\text{on}~\scomplement{\{x\}} ~ \text{and} ~ \ivaluation{\I}{\theta}\neq\ivaluation{\IIalt}{\theta}\big\}
  \\
  \freevarsdef{\phi} &= \big\{x \in \allvars \with \mexists{\iget[const]{\I},\iget[state]{\I},\iget[state]{\IIalt}} \text{such that}~\iget[state]{\I}=\iget[state]{\IIalt} ~\text{on}~\scomplement{\{x\}} ~ \text{and} ~ \imodels{\I}{\phi}\not\ni\iget[state]{\IIalt}\big\}
  \\
  \freevarsdef{\alpha} &= \big\{x \in \allvars \with \mexists{\iget[const]{\I},\iget[state]{\I},\iget[state]{\IIalt},X}
  \text{with}~\iget[state]{\I}=\iget[state]{\IIalt} ~\text{on}~\scomplement{\{x\}} ~\text{and}~
  \iget[state]{\I} \in \iwinreg[\alpha]{\I}{\restrictto{X}{\scomplement{\{x\}}}} \not\ni \iget[state]{\IIalt}\big\}
  \\
  \boundvarsdef{\alpha} &= \big\{x \in \allvars \with \mexists{\iget[const]{\I},\iget[state]{\I},X} \text{such that} ~ \iwinreg[\alpha]{\I}{X} \ni \iget[state]{\I} \not\in \iwinreg[\alpha]{\I}{\iselectlike{X}{\I}{\{x\}}} \big\}
\end{array}\]
\end{definition}

Beyond assignments, note complications with ODEs such as \rref{eq:firstbadsubstexample}, where, due to their nature as the solution of a fixpoint condition, the \emph{same} occurrences of variables are free, because they depend on their initial values, but they are also bound, because their values change along the ODE.
All occurrences of $x$ and $y$ but not $z$ on the right-hand side of \(\pevolve{\D{x}=x^2\syssep\D{y}={z\itimes x^2}\itimes y}\) and occurrences of $x,y,\D{x},\D{y}$ also after this ODE are bound, since they are affected by this change.
Variables $x,y,z$ but not $\D{x},\D{y}$ are free in this ODE.
The crucial need for overlap of free and bound variables is most obvious for ODEs, but also arises for loops, e.g., \(\prepeat{(\pupdate{\pumod{x}{x+1}};\pevolve{\D{x}=-x})}\).
If $x$ were not classified as free, its initial value could be overwritten incorrectly. If $x$ were not classified as bound, its initial value could be incorrectly copy-propagated across the loop. This also applies to the \emph{same} occurrence of $x$ in $x+1$ and $-x$, respectively. If it were not classified as a bound but a free occurrence, it could be incorrectly replaced by a term of the same initial value. If it were not classified as a free but a bound occurrence, it could, e.g., be boundly renamed, incorrectly losing its initial link.%
\footnote{These intricate variable relationships in games and the intrinsic link of $x$ and $\D{x}$ from ODEs significantly complicate substitutions beyond what is supported for first-order logic \cite{Church_1956,DBLP:journals/ndjfl/Schneider80}, $\lambda$-calculi \cite{DBLP:journals/jsl/Church40}, de Bruijn indices \cite{DeBruijn72}, or higher-order abstract syntax \cite{DBLP:conf/pldi/PfenningE88}.}

Coincidence lemmas \cite{DBLP:conf/cade/Platzer18} show truth-values of \dGL formulas only depend on their free variables (likewise for terms and hybrid games).
The bound effect lemma \cite{DBLP:conf/cade/Platzer18} shows only bound variables change their value when playing games.
Supersets satisfy the same lemmas, so corresponding \emph{syntactic} free and bound variable computations can be used correctly and are defined accordingly \cite{DBLP:journals/jar/Platzer17,DBLP:conf/cade/Platzer18}.
Since $\freevarsdef{}$ and $\boundvarsdef{}$ are the smallest such sets, no smaller sets can be correct, including, e.g., the usual definitions that classify occurrences mutually exclusively.

\begin{lemma}[Coincidence for terms \cite{DBLP:conf/cade/Platzer18}] \label{lem:coincidence-term}
  $\freevarsdef{\theta}$ is the smallest set with the coincidence property for $\theta$:
  If \(\iget[state]{\I}=\iget[state]{\Ialt}\) on $\freevarsdef{\theta}$,
  then \m{\ivaluation{\I}{\theta}=\ivaluation{\Ialt}{\theta}}.
\end{lemma}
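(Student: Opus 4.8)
The statement has two parts: first, that $\freevarsdef{\theta}$ has the coincidence property, and second, that it is the \emph{smallest} such set. I would prove the first part by a straightforward structural induction on the term $\theta$, and the second part by a direct argument from the definition of $\freevarsdef{\theta}$.

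\emph{Coincidence property, by induction on $\theta$.} Assume $\iget[state]{\I}=\iget[state]{\Ialt}$ on $\freevarsdef{\theta}$; I must show $\ivaluation{\I}{\theta}=\ivaluation{\Ialt}{\theta}$. (Note that since the interpretation of function symbols is state-independent, I may freely assume $\iget[const]{\I}=\iget[const]{\Ialt}$ throughout; the hypothesis only constrains the state components.) For $\theta=x$ a variable, $\freevarsdef{x}=\{x\}$, so $\iget[state]{\I}(x)=\iget[state]{\Ialt}(x)$ and hence $\ivaluation{\I}{x}=\ivaluation{\Ialt}{x}$ by \rref{def:dL-valuationTerm}. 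For $\theta=f(\theta_1,\dots,\theta_k)$, one has $\freevarsdef{\theta_i}\subseteq\freevarsdef{\theta}$ for each $i$ (this monotonicity itself follows from the semantic definition of $\freevarsdef{}$, since any variable whose variation changes $\ivaluation{\I}{\theta_i}$ changes $\ivaluation{\I}{f(\dots)}$ when the other arguments are held fixed — strictly speaking one needs to know these sets are syntactically computed supersets, or argue purely semantically), so the inductive hypotheses give $\ivaluation{\I}{\theta_i}=\ivaluation{\Ialt}{\theta_i}$ for all $i$, and applying $\iget[const]{\I}(f)$ to both sides yields the claim. The cases $\theta+\eta$ and $\theta\cdot\eta$ are analogous, using additivity resp.\ multiplicativity of the semantics. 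For the differential $\der{\theta}$: here $\freevarsdef{\der{\theta}}$ must include both the variables $x$ that $\ivaluation{\I}{\theta}$ depends on and their differential variables $\D{x}$, since the semantics $\sum_{x}\iget[state]{\I}(\D{x})\,\Dp[x]{\ivaluation{\I}{\theta}}$ involves both; granting that $\freevarsdef{\theta}\cup\{\D{x}:x\in\freevarsdef{\theta}\}\subseteq\freevarsdef{\der{\theta}}$, agreement of $\I$ and $\Ialt$ on this set forces agreement of each summand — the factors $\iget[state]{\I}(\D{x})$ agree directly, and the partial derivatives $\Dp[x]{\ivaluation{\I}{\theta}}$ agree because they are determined by the function $r\mapsto\ivaluation{\imodif[state]{\I}{x}{r}}{\theta}$, which by the inductive hypothesis (applied at each modified state, whose components still agree off the single coordinate being differentiated) is the same for $\I$ and $\Ialt$.

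\emph{Minimality.} Suppose $S$ is any set with the coincidence property for $\theta$; I must show $\freevarsdef{\theta}\subseteq S$. Take $x\in\freevarsdef{\theta}$. By the definition of $\freevarsdef{\theta}$ in \rref{def:static-semantics}, there exist $\iget[const]{\I}$, $\iget[state]{\I}$, $\iget[state]{\IIalt}$ with $\iget[state]{\I}=\iget[state]{\IIalt}$ on $\scomplement{\{x\}}$ but $\ivaluation{\I}{\theta}\neq\ivaluation{\IIalt}{\theta}$. If $x\notin S$, then $S\subseteq\scomplement{\{x\}}$, so $\iget[state]{\I}=\iget[state]{\IIalt}$ on $S$, and the coincidence property of $S$ would force $\ivaluation{\I}{\theta}=\ivaluation{\IIalt}{\theta}$, a contradiction. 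Hence $x\in S$, and since $x$ was arbitrary, $\freevarsdef{\theta}\subseteq S$.

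\emph{Main obstacle.} The routine cases ($x$, $+$, $\cdot$) are immediate. The genuine subtlety is the differential term case: one must handle the partial derivative carefully, observing that differentiation with respect to $x$ only probes states that still agree with the original pair off the coordinate $x$, so the inductive hypothesis applies uniformly along the whole one-parameter family and the derivatives coincide. A secondary point to be careful about is the monotonicity fact $\freevarsdef{\theta_i}\subseteq\freevarsdef{f(\theta_1,\dots,\theta_k)}$ (and its analogues): with the purely semantic definition in use here, this should be argued directly — varying a variable free in some $\theta_i$ while fixing the remaining arguments produces a variation in the value of the whole term — rather than taken for granted from a syntactic recursion.
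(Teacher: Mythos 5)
This lemma is not proved in the paper at all --- it is imported verbatim from the prior work it cites --- so there is no in-paper proof to compare against; your proposal has to stand on its own, and its minimality half does: that argument is correct and standard. The gap is in the coincidence half. Your structural induction silently relies on monotonicity of the \emph{semantic} free-variable sets of \rref{def:static-semantics} under term constructors, e.g.\ \(\freevarsdef{\theta}\subseteq\freevarsdef{\theta\cdot\eta}\) and \(\freevarsdef{\theta}\cup\{\D{x}\with x\in\freevarsdef{\theta}\}\subseteq\freevarsdef{\der{\theta}}\), and these inclusions are false. For a nullary function symbol $f$, the term \(\der{f()}\) evaluates to $0$ in every interpretation and state, so \(\freevarsdef{\der{f()}\cdot x}=\emptyset\) although \(x\in\freevarsdef{x}\); and \(\ivaluation{\I}{\der{y}}=\iget[state]{\I}(\D{y})\), so \(\freevarsdef{\der{y}}=\{\D{y}\}\), which does not contain \(y\in\freevarsdef{y}\) --- the very inclusion you say you are ``granting'' in the differential case. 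Consequently, in the induction steps for \(\theta\cdot\eta\) and \(\der{\theta}\) you cannot invoke the inductive hypothesis: agreement of the two states on \(\freevarsdef{\theta\cdot\eta}\) does not yield agreement on \(\freevarsdef{\theta}\). You flag this worry yourself, but the semantic patch you sketch (``varying a variable free in a subterm varies the whole term'') is exactly what these examples refute; monotonicity holds for the \emph{syntactic} overapproximations used in implementations, not for the semantic sets the lemma is about.

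The repair needs no induction on $\theta$ at all. By \rref{def:static-semantics}, \(x\notin\freevarsdef{\theta}\) means that for every interpretation and every pair of states agreeing on \(\scomplement{\{x\}}\) the values of $\theta$ coincide. Suppose \(\iget[state]{\I}=\iget[state]{\Ialt}\) on \(\freevarsdef{\theta}\). The set of variables on which the two states differ is contained in \(\scomplement{\freevarsdef{\theta}}\) and is finite, since the paper fixes $\allvars$ finite; enumerate it as \(x_1,\dots,x_n\) and interpolate states \(\iget[state]{\I}=\omega_0,\omega_1,\dots,\omega_n=\iget[state]{\Ialt}\), where $\omega_j$ agrees with $\omega_{j-1}$ except that \(\omega_j(x_j)=\iget[state]{\Ialt}(x_j)\). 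Consecutive states differ only at \(x_j\notin\freevarsdef{\theta}\), so each step preserves the value of $\theta$ by the definition of \(\freevarsdef{\theta}\) read contrapositively, and chaining the $n$ equalities gives \(\ivaluation{\I}{\theta}=\ivaluation{\Ialt}{\theta}\). This one-variable-at-a-time chaining is the entire content of the coincidence direction; your minimality paragraph can then stand as written.
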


\begin{figure}[tb]
\centering
\begin{tikzpicture}[scale=0.8, every node/.style={transform shape}]
  \useasboundingbox (-3.2,-1.6) rectangle (2,1.2); 
    \node[draw=vgreen,fill=vgreen!20,shape=rectangle,minimum height=2.7cm,minimum width=1cm] (Xalt) at (1.2,-0.2) {};
    \node[draw=vblue,fill=vblue!20,shape=ellipse,minimum height=1.3cm,minimum width=3cm,opacity=0.5] at (0.1,-0.8) {};
    \node at (1.2,0.2) {\m{\restrictto{X}{V}}};
    \node[winning condition,minimum width=0.5cm] (X) at (1.2,-0.8) {$X$};
    \node at (0,-1.2) {\(\iwinreg[\alpha]{\I}{X}\)};
    \node (v) at (-1,-0.8) {$\iget[state]{\I}$};
    \node (valt) at (-1,0.8) {$\iget[state]{\Ialt}$};
  \path
    (v) edge [similar state] node [left,align=right] {on $V\supseteq\freevarsdef{\alpha}$} (valt)
            edge [transition] node [above] {$\alpha$} (1.1,-0.8)
    (valt) edge [transition,transition exists] node [above] {$\alpha$}
            (1.1,0.8);
\end{tikzpicture}
\qquad
\begin{tikzpicture}[scale=0.8, every node/.style={transform shape}]
  \useasboundingbox (-3.2,-1.3) rectangle (2,1.3); 
    \node[winning condition,shape=ellipse,minimum height=2.5cm,minimum width=1.2cm] (X) at (1.2,0) {};
    \node[draw=vblue,fill=vblue!20,shape=ellipse,minimum height=1.5cm,minimum width=3.3cm,opacity=0.5,rotate=-30] at (0.1,0.1) {};
    \node[winning condition,draw=none,fill=none] at (1.2,0.8) {$X$};
    \node[winning condition,draw=vgreen,fill=vgreen!20,text=black,minimum width=0.5cm,inner sep=0pt] (selectX) at (1.2,-0.5) {$X{\downarrow}\iget[state]{\I}$};
    \node[rotate=-30] at (-0.2,-0.1) {\(\scriptstyle\iwinreg[\alpha]{\I}{\iselectlike{X}{\I}{\scomplement{\boundvarsdef{\alpha}}}}\)};
    \node (v) at (-1,0.8) {$\iget[state]{\I}$};
  \path
    (v) edge [transition] node [above,pos=0.7] {$\alpha$} (1,0.8);
  \path
    (v) edge [transition,dashed] node [above] {$\alpha$} (selectX);
\end{tikzpicture}
  \caption{Illustration of coincidence and bound effect properties of hybrid games}
  \label{fig:coincidence-bound}
\end{figure}

\begin{lemma}[Coincidence for formulas \cite{DBLP:conf/cade/Platzer18}] \label{lem:coincidence}
  $\freevarsdef{\phi}$ is the smallest set with the coincidence property for $\phi$:
  If \(\iget[state]{\I}=\iget[state]{\Ialt}\) on $\freevarsdef{\phi}$,
  then \m{\imodels{\I}{\phi}} iff \m{\imodels{\Ialt}{\phi}}.
\end{lemma}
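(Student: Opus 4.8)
\emph{Proof plan.} This is the standard coincidence lemma for \dGL formulas \cite{DBLP:conf/cade/Platzer18}, and I would follow that template. The statement has two halves. \emph{Minimality} falls straight out of \rref{def:static-semantics}: if $V$ has the coincidence property for $\phi$ and $x\notin V$, then any two states $\iget[state]{\I},\iget[state]{\IIalt}$ agreeing on $\scomplement{\{x\}}$ in particular agree on $V$, so by the coincidence property of $V$ they cannot disagree about $\phi$; by the defining condition of $\freevarsdef{\phi}$ this forces $x\notin\freevarsdef{\phi}$, hence $\freevarsdef{\phi}\subseteq V$. So the content is that $\freevarsdef{\phi}$ itself has the coincidence property.

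That I would prove by a single structural induction carried out \emph{simultaneously} over \dGL formulas $\phi$ and hybrid games $\alpha$ (they are mutually recursive), establishing alongside it the two companion facts of \cite{DBLP:conf/cade/Platzer18}: the coincidence lemma for games (\rref{fig:coincidence-bound}, left: if $\iget[state]{\I}=\iget[state]{\IIalt}$ on $V\supseteq\freevarsdef{\alpha}$, then $\iget[state]{\I}\in\iwinreg[\alpha]{\I}{\restrictto{X}{V}}$ iff $\iget[state]{\IIalt}\in\iwinreg[\alpha]{\I}{\restrictto{X}{V}}$) and the bound effect lemma (\rref{fig:coincidence-bound}, right: playing $\alpha$ changes only $\boundvarsdef{\alpha}$, i.e.\ $\iget[state]{\I}\in\iwinreg[\alpha]{\I}{X}$ iff $\iget[state]{\I}\in\iwinreg[\alpha]{\I}{\iselectlike{X}{\I}{\scomplement{\boundvarsdef{\alpha}}}}$). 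The atomic and propositional formula cases are routine: $\theta\geq\eta$ and $p(\theta_1,\dots,\theta_k)$ reduce to \rref{lem:coincidence-term} on the argument terms, since agreement on $\bigcup_i\freevarsdef{\theta_i}$ forces equal values and hence equal membership in $\iget[const]{\I}(p)$; $\lnot\phi$ and $\phi\land\psi$ hold because complement and intersection preserve ``being determined by a set of variables''; and for $\lexists{x}{\phi}$ I would observe that if $\iget[state]{\I}=\iget[state]{\IIalt}$ on $\freevarsdef{\phi}\setminus\{x\}$ then $\imodif[state]{\I}{x}{r}$ and $\imodif[state]{\IIalt}{x}{r}$ agree on all of $\freevarsdef{\phi}$ for every witness $r\in\reals$, so the induction hypothesis for $\phi$ transfers the existential witness.

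The delicate formula case is $\ddiamond{\alpha}{\phi}$, with $\imodel{\I}{\ddiamond{\alpha}{\phi}}=\iwinreg[\alpha]{\I}{\imodel{\I}{\phi}}$ and $\freevarsdef{\ddiamond{\alpha}{\phi}}=\freevarsdef{\alpha}\cup(\freevarsdef{\phi}\setminus\boundvarsdef{\alpha})$. Given $\iget[state]{\I}=\iget[state]{\IIalt}$ on this set, my plan is: use the induction hypothesis on $\phi$ so that $\imodel{\I}{\phi}$ is determined by $\freevarsdef{\phi}$; use the bound effect lemma to pass from the winning condition $\imodel{\I}{\phi}$ to its slice $\iselectlike{\imodel{\I}{\phi}}{\I}{\scomplement{\boundvarsdef{\alpha}}}$, which fixes the coordinates outside $\boundvarsdef{\alpha}$ to those of $\iget[state]{\I}$; and then invoke the coincidence lemma for games to transfer the win to $\iget[state]{\IIalt}$, using that $\iget[state]{\I}$ and $\iget[state]{\IIalt}$ agree on $\freevarsdef{\phi}\setminus\boundvarsdef{\alpha}$. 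The crux --- and the reason $\freevarsdef{\ddiamond{\alpha}{\phi}}$ drops $\boundvarsdef{\alpha}$ from $\freevarsdef{\phi}$ rather than keeping all of $\freevarsdef{\phi}$ free --- is the bookkeeping for variables in $\freevarsdef{\phi}\cap\boundvarsdef{\alpha}$, which $\phi$ reads but $\alpha$ overwrites before $\phi$ is evaluated; getting these right is precisely what forces the combined use of the two companion lemmas, and is where the right \emph{forms} of those lemmas (the design choice of \cite{DBLP:conf/cade/Platzer18}) earn their keep. The sequential game $\alpha;\beta$, with $\iwinreg[\alpha;\beta]{\I}{X}=\iwinreg[\alpha]{\I}{\iwinreg[\beta]{\I}{X}}$ and $\freevarsdef{\alpha;\beta}=\freevarsdef{\alpha}\cup(\freevarsdef{\beta}\setminus\boundvarsdef{\alpha})$, is the game-level twin of this case.

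The remaining game cases are comparatively mechanical: game symbols $a$ have $\freevarsdef{a}=\allvars$, so there is nothing to transfer; $\pupdate{\pumod{x}{\theta}}$ and $\ptest{\ivr}$ reduce to \rref{lem:coincidence-term} and to the formula case; $\pchoice{\alpha}{\beta}$ is a union of winning regions; $\prepeat{\alpha}$ follows by (transfinite) induction on the least-fixpoint approximants $Z_{\kappa+1}=\restrictto{X}{V}\cup\iwinreg[\alpha]{\I}{Z_\kappa}$ of \rref{def:HG-semantics}, the game induction hypothesis keeping each approximant determined by $V$ and unions preserving that; and $\pdual{\alpha}$ works because complements of $V$-determined sets are $V$-determined. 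The main obstacle is the differential equation game $\pevolvein{\D{x}=\genDE{x}}{\ivr}$: since $\genDE{x}$ and $\ivr$ read only variables in $\freevarsdef{\alpha}$, the flows from $\iget[state]{\I}$ and from $\iget[state]{\IIalt}$ solve the same initial value problem on the ODE variables, so by uniqueness of solutions they agree on those variables (in particular on $x$) throughout, remain simultaneously inside $\imodel{\I}{\ivr}$, and end in states agreeing on $V$, whence membership in $\restrictto{X}{V}$ transfers. Making this rigorous demands care with the intrinsic $x$/$\D{x}$ coupling of \rref{def:dGL-semantics}, with exactly which coordinates the flow may and may not vary, and with the degenerate duration-zero case; that, together with the free-versus-bound bookkeeping threaded through $\alpha;\beta$ and $\ddiamond{\alpha}{\phi}$, is where essentially all of the work sits.
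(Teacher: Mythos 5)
The paper itself contains no proof of this lemma: it is stated with a citation and imported wholesale from prior work, so there is no in-paper argument to compare against. Judged on its own terms, your plan has one genuine gap, and it sits exactly at the point this paper is careful about. In \rref{def:static-semantics}, $\freevarsdef{\phi}$ is defined \emph{semantically}, by quantifying over single-variable variations of states; it is \emph{not} defined by the recursive clauses your induction leans on. Identities such as $\freevarsdef{\ddiamond{\alpha}{\phi}}=\freevarsdef{\alpha}\cup(\freevarsdef{\phi}\setminus\boundvarsdef{\alpha})$ describe the \emph{syntactic} overapproximation and, for the semantic sets, hold in general only as inclusions $\subseteq$ (the paper explicitly notes that the semantic sets are the smallest correct ones and that syntactic computations yield supersets). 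Consequently your structural induction establishes the coincidence property for a recursively computed \emph{superset} of $\freevarsdef{\phi}$. That does not descend to the smaller set: knowing that states agreeing on some $V\supseteq\freevarsdef{\phi}$ assign $\phi$ the same truth value says nothing yet about states that agree only on $\freevarsdef{\phi}$ itself.

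The missing --- and in fact the only needed --- step is direct. Since $\allvars$ is finite here, two states agreeing on $\freevarsdef{\phi}$ differ on finitely many variables, all outside $\freevarsdef{\phi}$; walking from one state to the other by changing one such variable at a time and invoking the defining condition of $\freevarsdef{\phi}$ at each step preserves the truth value of $\phi$ in the given interpretation. (Equivalently: each set $\scomplement{\{x\}}$ with $x\notin\freevarsdef{\phi}$ has the coincidence property by definition, sets with the coincidence property are closed under intersection via an interpolating state, and $\freevarsdef{\phi}$ is the intersection of these.) Together with your minimality argument, which is correct as written, this already proves the lemma; the entire simultaneous structural induction is not needed for this statement. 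Where that induction does earn its keep is in the adjacent claim the paper alludes to, namely that the syntactic free/bound variable computations are sound supersets satisfying the same coincidence and bound effect properties; your sketch (the $\ddiamond{\alpha}{\phi}$ and $\alpha;\beta$ bookkeeping, the ODE case) is broadly the right shape for that purpose, but it is answering a different question from the one posed.
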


\begin{lemma}[Coincidence for games \cite{DBLP:conf/cade/Platzer18}] \label{lem:coincidence-HG}
  $\freevarsdef{\alpha}$ is the smallest set with the coincidence property for $\alpha$:
  If \(\iget[state]{\I}=\iget[state]{\Ialt}\) on $V\supseteq\freevarsdef{\alpha}$,
  then \(\iwin[\alpha]{\I}{\restrictto{X}{V}}\)
  iff \(\iwin[\alpha]{\Ialt}{\restrictto{X}{V}}\);
  see \rref{fig:coincidence-bound}(left).
\end{lemma}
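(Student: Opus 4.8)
The plan is to prove the statement by structural induction on $\alpha$, using \rref{lem:coincidence-term} for embedded terms and \rref{lem:coincidence} for embedded formulas (in a self-contained development these run as one simultaneous induction over the term/formula/game grammar). Call two states \emph{$V$-equal} if they agree on all variables in $V$, and a set $Y$ of states \emph{$V$-stable} if membership in $Y$ depends only on the $V$-coordinates, equivalently $Y=\restrictto{Y}{V}$. Two things must be established: that $\freevarsdef{\alpha}$ \emph{has} the coincidence property, and that it is the $\subseteq$-least set that does. \emph{Leastness} is read off \rref{def:static-semantics}: if $S$ has the coincidence property for $\alpha$ and $x\in\freevarsdef{\alpha}$, take the witnesses $\iget[const]{\I},\iget[state]{\I},\iget[state]{\IIalt},X$ from the definition of $\freevarsdef{\alpha}$, so $\iget[state]{\I}$ and $\iget[state]{\IIalt}$ are $\scomplement{\{x\}}$-equal while $\iget[state]{\I}\in\iwinreg[\alpha]{\I}{\restrictto{X}{\scomplement{\{x\}}}}\not\ni\iget[state]{\IIalt}$; were $x\notin S$, then $S\subseteq\scomplement{\{x\}}$, the coincidence property for $S$ would apply with $V=\scomplement{\{x\}}$, and since $\iget[const]{\I}=\iget[const]{\IIalt}$ makes $\iwinreg[\alpha]{\I}{\cdot}$ and $\iwinreg[\alpha]{\IIalt}{\cdot}$ coincide, it would contradict that membership; hence $x\in S$ and $\freevarsdef{\alpha}\subseteq S$.

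For the coincidence property itself, observe that here $\I$ and $\Ialt$ share their interpretation $\iget[const]{\I}$ and that $\iwinreg[\alpha]{\I}{\cdot}$ depends only on $\iget[const]{\I}$ (\rref{def:HG-semantics}), so the goal collapses to: for all $X$ and all $V\supseteq\freevarsdef{\alpha}$, the set $\iwinreg[\alpha]{\I}{\restrictto{X}{V}}$ is $V$-stable. The induction needs only a few bookkeeping facts: $\restrictto{X}{V}$ is $V$-stable; $V$-stability is preserved under arbitrary unions, intersections, and complements; $V\subseteq W$ implies $\restrictto{(\restrictto{X}{V})}{W}=\restrictto{X}{V}$; and the bound effect lemma (playing $\alpha$ changes at most the variables in $\boundvarsdef{\alpha}$). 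The easy cases: for a game symbol $a$, $\freevarsdef{a}=\allvars$ forces $V=\allvars$ so $V$-stability is vacuous; for $\pupdate{\pumod{x}{\theta}}$, \rref{lem:coincidence-term} makes $\ivaluation{\I}{\theta}$ the same in all $V$-equal states, and the resulting updates are $(V\cup\{x\})$-equal while $\restrictto{X}{V}$ is $V$-stable hence $(V\cup\{x\})$-stable; for $\ptest{\ivr}$, \rref{lem:coincidence} makes $\imodel{\I}{\ivr}$ be $\freevarsdef{\ivr}$-stable, hence $V$-stable; $\pchoice{\alpha}{\beta}$ and $\pdual{\alpha}$ follow from the induction hypothesis for the subgame(s) since unions and complements preserve $V$-stability; and for $\prepeat{\alpha}$ one writes $\iwinreg[\prepeat{\alpha}]{\I}{\restrictto{X}{V}}$ as the limit of the monotone Knaster--Tarski iteration $X_0=\emptyset$, $X_{\kappa+1}=\restrictto{X}{V}\cup\iwinreg[\alpha]{\I}{X_\kappa}$, each stage $V$-stable by transfinite induction because the induction hypothesis for $\alpha$ (with $V\supseteq\freevarsdef{\alpha}$) maps $V$-stable sets to $V$-stable sets.

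The two delicate cases are sequential composition and differential equations. For $\alpha;\beta$ one only knows $V\supseteq\freevarsdef{\alpha;\beta}$, which may omit variables of $\freevarsdef{\beta}$ (namely those also written by $\alpha$); the remedy is to pass to $W:=V\cup\boundvarsdef{\alpha}$, which does contain $\freevarsdef{\beta}$ since a variable read by $\beta$ but never written by $\alpha$ is already read by $\alpha;\beta$, so the induction hypothesis for $\beta$ gives that $Z:=\iwinreg[\beta]{\I}{\restrictto{X}{V}}=\iwinreg[\beta]{\I}{\restrictto{(\restrictto{X}{V})}{W}}$ is $W$-stable, and then $\iwinreg[\alpha]{\I}{Z}$ is argued $V$-stable by combining the induction hypothesis for $\alpha$ (with $V\supseteq\freevarsdef{\alpha}$) with the bound effect lemma for $\alpha$ — the coordinates along which $Z$ is finer than $V$-stable lie in $\boundvarsdef{\alpha}$, which $\alpha$ may overwrite, so they are invisible to winning $\alpha$. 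For $\pevolvein{\D{x}=\genDE{x}}{\ivr}$: given a solution $\varphi:[0,r]\to\linterpretations{\Sigma}{\allvars}$ witnessing $\iget[state]{\I}\in\iwinreg[\pevolvein{\D{x}=\genDE{x}}{\ivr}]{\I}{\restrictto{X}{V}}$ and a $V$-equal state $\iget[state]{\Ialt}$, build the adjoint solution $\tilde{\varphi}$ copying $\varphi$ on $\{x,\D{x}\}$ and equal to $\iget[state]{\Ialt}$ off $\{x,\D{x}\}$; by the bound effect lemma $\varphi$ is constant off $\{x,\D{x}\}$, so $\tilde{\varphi}$ and $\varphi$ stay $(V\cup\{x,\D{x}\})$-equal throughout, whence \rref{lem:coincidence-term} on $\genDE{x}$ (whose free variables lie in $V\cup\{x\}$) shows $\tilde{\varphi}$ solves the same differential equation, \rref{lem:coincidence} on $\ivr$ keeps $\tilde{\varphi}$ inside the evolution domain, and the endpoint $\tilde{\varphi}(r)$, being $V$-equal to $\varphi(r)\in\restrictto{X}{V}$, lies in $\restrictto{X}{V}$; the converse direction is symmetric.

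I expect the differential-equation case to be the main obstacle: it forces one to actually construct, for every admissible solution issuing from $\iget[state]{\I}$, a genuinely admissible solution issuing from the $V$-equal state $\iget[state]{\Ialt}$ that tracks it exactly on $x,\D{x}$ and is frozen elsewhere, which relies on existence and uniqueness/continuation of ODE flows (Picard--Lindel\"{o}f) and on pinning down exactly which variables the evolution-domain formula may inspect — precisely the point at which \rref{lem:coincidence} must be invoked. The sequential case is the second pressure point, because the coincidence set has to be temporarily enlarged by $\boundvarsdef{\alpha}$ and then brought back down using the bound effect lemma.
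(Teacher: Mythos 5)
The paper itself states this lemma without proof --- it is imported from \cite{DBLP:conf/cade/Platzer18} --- so I am comparing your argument against what \rref{def:static-semantics} actually supports. Your leastness direction is correct, and so is the reduction to showing that $\iwinreg[\alpha]{\I}{\restrictto{X}{V}}$ is $V$-stable. The gap is in the main direction: a structural induction on $\alpha$ is the wrong tool for the \emph{semantic} sets of \rref{def:static-semantics}. Every one of your inductive cases tacitly needs a containment between the semantic free/bound variables of the compound game and those of its parts --- $\freevarsdef{\alpha}\subseteq V$ and $\freevarsdef{\beta}\subseteq V$ for $\pchoice{\alpha}{\beta}$, the containment $\freevarsdef{\beta}\setminus\boundvarsdef{\alpha}\subseteq\freevarsdef{\alpha;\beta}$ for composition, $\freevarsdef{\genDE{x}}\subseteq V\cup\{x\}$ for ODEs. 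Those containments hold for the \emph{syntactic} recursive computations by construction, but they are false for the smallest sets with the coincidence property, which collapse under semantic masking. Concretely, for $\pchoice{\ptest{y\geq0}}{\ptest{0\geq y}}$ the winning region of $\restrictto{X}{V}$ is $\restrictto{X}{V}$ itself, so $\freevarsdef{\pchoice{\ptest{y\geq0}}{\ptest{0\geq y}}}=\emptyset$ although $y\in\freevarsdef{\ptest{y\geq0}}$; with $V=\emptyset$ your outer hypothesis holds but the induction hypothesis for the subgame is inapplicable. Likewise $\freevarsdef{\ptest{0\geq1};\beta}=\emptyset=\boundvarsdef{\ptest{0\geq1}}$ for every $\beta$, refuting the containment your composition case rests on, and $\pevolvein{\D{x}=y}{0\geq1}$ does the same to your ODE case. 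What you have written is, in substance, the proof that the syntactic free-variable overapproximation has the coincidence property --- a lemma that is also needed in the overall development, but a different one.

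For the lemma as stated no induction on $\alpha$ is needed at all. By contraposition of \rref{def:static-semantics}, $x\notin\freevarsdef{\alpha}$ \emph{is} the single-variable coincidence property: for all $I$, all $X$, and all states agreeing on $\scomplement{\{x\}}$, membership in $\iwinreg[\alpha]{\I}{\restrictto{X}{\scomplement{\{x\}}}}$ coincides. Given $V\supseteq\freevarsdef{\alpha}$ and $\iget[state]{\I}=\iget[state]{\Ialt}$ on $V$, enumerate $\scomplement{V}=\{x_1,\dots,x_n\}$ (finite because $\allvars$ is finite here) and change the $x_i$ one at a time from their values in $\iget[state]{\I}$ to those in $\iget[state]{\Ialt}$; each step preserves membership in $\iwinreg[\alpha]{\I}{\restrictto{X}{V}}$ by the single-variable property applied with $Y=\restrictto{X}{V}$, which is legitimate because $Y$ is $V$-stable and $V\subseteq\scomplement{\{x_i\}}$, hence $\restrictto{Y}{\scomplement{\{x_i\}}}=Y$. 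The bookkeeping facts you listed (projection identities, finiteness of $\scomplement{V}$) are exactly what this short chaining argument needs; the ODE flow construction and the bound effect lemma are not.
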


\begin{lemma}[Bound effect \cite{DBLP:conf/cade/Platzer18}] \label{lem:bound}
  $\boundvarsdef{\alpha}$ is the smallest set with the bound effect property for $\alpha$: 
  \(\iwin[\alpha]{\I}{X}\) iff \(\iwin[\alpha]{\I}{\iselectlike{X}{\I}{\scomplement{\boundvarsdef{\alpha}}}}\);
  see \rref{fig:coincidence-bound}(right).
\end{lemma}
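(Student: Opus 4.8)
The plan is to prove separately the two assertions bundled in ``$\boundvarsdef{\alpha}$ is the smallest set with the bound effect property for $\alpha$'': (i) $\boundvarsdef{\alpha}$ itself has the bound effect property, and (ii) every $B\subseteq\allvars$ with the bound effect property for $\alpha$ satisfies $\boundvarsdef{\alpha}\subseteq B$. Three elementary facts are used throughout. First, the \emph{monotonicity of winning regions}: \(X\subseteq Y\) implies \(\iwinreg[\alpha]{\I}{X}\subseteq\iwinreg[\alpha]{\I}{Y}\); this is a routine structural induction on $\alpha$, since game symbols denote monotone operators and sequential composition, choice, duality, and the least-fixpoint construction for repetition all preserve monotonicity. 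Second, the projection identities \(\iselectlike{(\iselectlike{X}{\I}{V})}{\I}{W}=\iselectlike{X}{\I}{V\cup W}\) and \(\iselectlike{X}{\I}{\emptyset}=X\), both immediate from \rref{def:projections} because the selecting state $\iget[state]{\I}$ is the same in both layers. Third, since \(\iselectlike{X}{\I}{V}\subseteq X\) always, monotonicity already gives that \(\iwin[\alpha]{\I}{\iselectlike{X}{\I}{V}}\) implies \(\iwin[\alpha]{\I}{X}\), so only the forward direction of any bound effect property carries content.

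For (i), unfold \rref{def:static-semantics}: a variable $x$ fails to lie in $\boundvarsdef{\alpha}$ exactly when, for all $\iget[const]{\I},\iget[state]{\I},X$, \(\iwin[\alpha]{\I}{X}\) implies \(\iwin[\alpha]{\I}{\iselectlike{X}{\I}{\{x\}}}\); with the converse supplied for free by the third fact above, this is the \emph{single-variable stability} ``\(\iwin[\alpha]{\I}{X}\) iff \(\iwin[\alpha]{\I}{\iselectlike{X}{\I}{\{x\}}}\)'' for every interpretation $\iget[const]{\I}$ with state $\iget[state]{\I}$ and every winning condition $X$. Since $\allvars$ is finite, enumerate \(\scomplement{\boundvarsdef{\alpha}}=\{x_1,\dots,x_n\}\) and chain these stabilities: apply the $x_1$-stability to $X$, then the $x_2$-stability to the new winning condition \(\iselectlike{X}{\I}{\{x_1\}}\) (legitimate because the stability is quantified over \emph{all} winning conditions), collapsing the nested projection with the projection identity, and so on. After $n$ steps this gives \(\iwin[\alpha]{\I}{X}\) iff \(\iwin[\alpha]{\I}{\iselectlike{X}{\I}{\{x_1,\dots,x_n\}}}\), i.e.\ iff \(\iwin[\alpha]{\I}{\iselectlike{X}{\I}{\scomplement{\boundvarsdef{\alpha}}}}\); for $n=0$ the claim is trivial since the projection is then just $X$. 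Hence $\boundvarsdef{\alpha}$ has the bound effect property.

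For (ii), let $B$ have the bound effect property for $\alpha$ and suppose, toward a contradiction, that \(x\in\boundvarsdef{\alpha}\setminus B\). By \rref{def:static-semantics} there are $\iget[const]{\I},\iget[state]{\I},X$ with \(\iwin[\alpha]{\I}{X}\) and \(\inowin[\alpha]{\I}{\iselectlike{X}{\I}{\{x\}}}\). Since \(x\notin B\) we have \(\{x\}\subseteq\scomplement{B}\), hence \(\iselectlike{X}{\I}{\scomplement{B}}\subseteq\iselectlike{X}{\I}{\{x\}}\), and monotonicity of winning regions yields \(\inowin[\alpha]{\I}{\iselectlike{X}{\I}{\scomplement{B}}}\). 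But \(\iwin[\alpha]{\I}{X}\) together with \(\inowin[\alpha]{\I}{\iselectlike{X}{\I}{\scomplement{B}}}\) contradicts the bound effect property of $B$. Therefore \(\boundvarsdef{\alpha}\subseteq B\), and combining (i) and (ii) shows that $\boundvarsdef{\alpha}$ is the smallest set with the bound effect property for $\alpha$.

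The step I expect to demand the most care is the iteration in (i): one must verify that the single-variable stability really holds in \emph{both} directions for each non-bound variable (the harder direction being precisely the negation of membership in $\boundvarsdef{\alpha}$, the easier one coming from monotonicity), and that it may legitimately be re-applied to winning conditions that are themselves already downward projections---which is exactly what the universal quantifier over $X$ in \rref{def:static-semantics} and the projection identity deliver. An alternative would be a direct structural induction on $\alpha$ that builds $\boundvarsdef{\alpha}$ compositionally, but that route first needs a separate syntactic overapproximation of bound variables together with its own soundness and minimality proofs, which is more work than the finite-iteration argument above; the same finite-iteration trick over free variables underlies the coincidence lemmas \rref{lem:coincidence-term}--\rref{lem:coincidence-HG}.
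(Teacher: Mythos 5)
The paper states this lemma only by citation to prior work and includes no proof of its own, so there is nothing in-text to compare against; judged on its merits, your argument is correct and is essentially the standard one for this lemma. Minimality follows immediately from the definition of $\boundvarsdef{\alpha}$ together with monotonicity of winning regions and antitonicity of $\iselectlike{X}{\I}{V}$ in $V$, and the bound effect property itself follows, exactly as you argue, by iterating the single-variable stability over the finitely many variables of $\scomplement{\boundvarsdef{\alpha}}$ (finiteness of $\allvars$ is guaranteed by the paper's restriction to first-order differential variables), collapsing nested projections via $\iselectlike{(\iselectlike{X}{\I}{V})}{\I}{W}=\iselectlike{X}{\I}{V\cup W}$ --- no gaps.
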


The correctness of one-pass uniform substitution will become more transparent after defining when one state is a variation of another on a set of variables.
For a set $U\subseteq\allvars$, state $\iget[state]{\Ialt}$ is called a \emph{$U$-variation} of state $\iget[state]{\I}$ iff $\iget[state]{\Ialt}=\iget[state]{\I}$ on complement $\scomplement{U}$.
Variations satisfy properties of monotonicity and transitivity.
If $\iget[state]{\Ialt}$ is a $U$-variation of $\iget[state]{\I}$, then $\iget[state]{\Ialt}$ is a $V$-variation of $\iget[state]{\I}$ for all $V\supseteq U$.
If $\iget[state]{\Ialt}$ is a $U$-variation of $\iget[state]{\I}$ and $\iget[state]{\I}$ is a $V$-variation of $\iget[state]{\Iz}$, then $\iget[state]{\Ialt}$ is a $(U\cup V)$-variation of $\iget[state]{\Iz}$.
Coincidence lemmas say that the semantics is insensitive to variations of nonfree variables.
If $\iget[state]{\Ialt}$ is a $U$-variation of $\iget[state]{\I}$
and $\freevarsdef{\phi}\cap U=\emptyset$,
then \(\imodels{\I}{\phi}\) iff \(\imodels{\Ialt}{\phi}\).

\section{Uniform Substitution}

Uniform substitutions for \dGL affect terms, formulas, and games \cite{DBLP:conf/cade/Platzer18}.
A \dfn{uniform substitution} $\sigma$ is a mapping
from expressions of the
form \(f(\usarg)\) to terms $\applysubst{\sigma}{f(\usarg)}$,
from \(p(\usarg)\) to formulas $\applysubst{\sigma}{p(\usarg)}$,
and from game symbols \(a\) to hybrid games $\applysubst{\sigma}{a}$.
Here $\usarg$ is a reserved function symbol of arity 0 marking the position where the argument, e.g., argument $\theta$ to $p(\usarg)$ in formula $p(\theta)$, will end up in the replacement $\applysubst{\sigma}{p(\usarg)}$ used for $p(\theta)$.
Vectorial extensions would be accordingly for other arities $k\geq0$.

The key idea behind the new recursive one-pass application of uniform substitutions is that it simply applies $\sigma$ by na\"ive homomorphic recursion without checking any admissibility conditions along the way. But the mechanism makes up for that soundness-defying negligence by passing a cumulative set $U$ of taboo variables along the recursion that are then forbidden from being introduced free by $\sigma$ \emph{at the respective replacement} of function $f(\usarg)$ and predicate symbols $p(\usarg)$, respectively.
No corresponding condition is required at substitutions of game symbols $a$, since games already have unlimited access to and effect on the state.

\begin{figure}[tb]
  \renewcommand{\mequiv}{=}%
  \begin{displaymath}
    \begin{array}{@{}rcll@{}}
    \usubstapp{U}{\sigma}{\usubstgroup{x}} &=& x & \text{for variable $x\in\allvars$}\\
    \usubstapp{U}{\sigma}{\usubstgroup{f(\theta)}} &=& (\usubstapp{U}{\sigma}{f})(\usubstapp{U}{\sigma}{\theta})
  \mdefeq \usubstapp{\emptyset}{\{\usarg\mapsto\usubstapp{U}{\sigma}{\theta}\}}{\applysubst{\sigma}{f(\usarg)}} &
  \text{if}~\freevarsdef{\applysubst{\sigma}{f(\usarg)}} \cap U = \emptyset
  \\
  \usubstapp{U}{\sigma}{\usubstgroup{\theta+\eta}} &=& \usubstapp{U}{\sigma}{\theta} + \usubstapp{U}{\sigma}{\eta}
  \\
  \usubstapp{U}{\sigma}{\usubstgroup{\theta\cdot\eta}} &=& \usubstapp{U}{\sigma}{\theta} \cdot \usubstapp{U}{\sigma}{\eta}
  \\
  \usubstapp{U}{\sigma}{\usubstgroup{\der{\theta}}} &=& \der{\usubstapp{\allvars}{\sigma}{\theta}}
  \\
  \hline
  \usubstapp{U}{\sigma}{\usubstgroup{\theta\geq\eta}} &\mequiv& \usubstapp{U}{\sigma}{\theta} \geq \usubstapp{U}{\sigma}{\eta}\\
    \usubstapp{U}{\sigma}{\usubstgroup{p(\theta)}} &\mequiv& (\usubstapp{U}{\sigma}{p})(\usubstapp{U}{\sigma}{\theta})
  \mdefequiv \usubstapp{\emptyset}{\{\usarg\mapsto\usubstapp{U}{\sigma}{\theta}\}}{\applysubst{\sigma}{p(\usarg)}} &
  \text{if}~\freevarsdef{\applysubst{\sigma}{p(\usarg)}} \cap U = \emptyset
  \\
    \usubstapp{U}{\sigma}{\usubstgroup{\lnot\phi}} &\mequiv& \lnot\usubstapp{U}{\sigma}{\phi}\\
    \usubstapp{U}{\sigma}{\usubstgroup{\phi\land\psi}} &\mequiv& \usubstapp{U}{\sigma}{\phi} \land \usubstapp{U}{\sigma}{\psi}\\
    \usubstapp{U}{\sigma}{\usubstgroup{\lexists{x}{\phi}}} &\mequiv& \lexists{x}{\usubstapp{U\cup\{x\}}{\sigma}{\phi}} \\
    \usubstapp{U}{\sigma}{\usubstgroup{\ddiamond{\alpha}{\phi}}} &\mequiv& \ddiamond{\usubstappp{V}{U}{\sigma}{\alpha}}{\usubstapp{V}{\sigma}{\phi}}
    \\
  \hline
    \usubstappp{U\cup\boundvarsdef{\applysubst{\sigma}{a}}}{U}{\sigma}{\usubstgroup{a}} &\mequiv& \applysubst{\sigma}{a} &\text{for game symbol}~ a
    \\
    \usubstappp{U\cup\{x\}}{U}{\sigma}{\usubstgroup{\pupdate{\umod{x}{\theta}}}} &\mequiv& \pupdate{\umod{x}{\usubstapp{U}{\sigma}{\theta}}}\\
    \usubstappp{U\cup\{x,\D{x}\}}{U}{\sigma}{\usubstgroup{\pevolvein{\D{x}=\genDE{x}}{\ivr}}} &\mequiv&
    (\pevolvein{\D{x}=\usubstapp{U\cup\{x,\D{x}\}}{\sigma}{\genDE{x}}}{\usubstapp{U\cup\{x,\D{x}\}}{\sigma}{\ivr}}) \\
    \usubstappp{U}{U}{\sigma}{\usubstgroup{\ptest{\ivr}}} &\mequiv& \ptest{\usubstapp{U}{\sigma}{\ivr}}\\
    \usubstappp{V\cup W}{U}{\sigma}{\usubstgroup{\pchoice{\alpha}{\beta}}} &\mequiv& \pchoice{\usubstappp{V}{U}{\sigma}{\alpha}} {\usubstappp{W}{U}{\sigma}{\beta}}\\
    \usubstappp{W}{U}{\sigma}{\usubstgroup{\alpha;\beta}} &\mequiv& \usubstappp{V}{U}{\sigma}{\alpha}; \usubstappp{W}{V}{\sigma}{\beta}\\
    \usubstappp{V}{U}{\sigma}{\usubstgroup{\prepeat{\alpha}}} &\mequiv& \prepeat{(\usubstappp{V}{V}{\sigma}{\alpha})} &\text{where $\usubstappp{V}{U}{\sigma}{\alpha}$ is defined}\\
    \usubstappp{V}{U}{\sigma}{\usubstgroup{\pdual{\alpha}}} &\mequiv& \pdual{(\usubstappp{V}{U}{\sigma}{\alpha})} 
    \end{array}%
  \end{displaymath}%
  \caption{Recursive application of one-pass uniform substitution~$\sigma$ for taboo $U\subseteq\allvars$}%
  \index{substitution!uniform|textbf}%
  \label{fig:usubst-one}%
\end{figure}%

The result \(\usubstapp{U}{\sigma}{\phi}\) of \emph{applying uniform substitution $\sigma$ for taboo set $U\subseteq\allvars$ to a \dGL formula $\phi$} (or term $\theta$ or hybrid game $\alpha$, respectively) is defined in \rref{fig:usubst-one}.
For proof rule \irref{US}, the expression \m{\applyusubst{\sigma}{\phi}} is, then, defined to be \m{\usubstapp{\emptyset}{\sigma}{\phi}} without taboos.

The case for \(\lexists{x}{\phi}\) in \rref{fig:usubst-one} conjoins the variable $x$ to the taboo set in the homomorphic application of $\sigma$ to $\phi$, because any \emph{newly introduced} free uses of $x$ within that scope would refer to a different semantic value than outside that scope.
In addition to computing the substituted hybrid game \(\usubstappp{V}{U}{\sigma}{\alpha}\), the recursive application of one-pass uniform substitution $\sigma$ to hybrid game $\alpha$ under taboo set $U$ also performs an analysis that results in a new output taboo set $V$, written in subscript notation, that will be tabooed after this hybrid game.
Superscripts as inputs and subscripts as outputs follows static analysis notation and makes the $\alpha;\beta$ case reminiscent of Einstein's summation:
the output taboos $V$ of \(\usubstappp{V}{U}{\sigma}{\alpha}\) become the input taboos $V$ for \(\usubstappp{W}{V}{\sigma}{\beta}\), whose output $W$ is that of \(\usubstappp{W}{U}{\sigma}{\usubstgroup{\alpha;\beta}}\).
Similarly, the output taboos $V$ resulting from the uniform substitute \(\usubstappp{V}{U}{\sigma}{\alpha}\) of a hybrid game $\alpha$ become taboo during the uniform substitution application forming \(\usubstapp{V}{\sigma}{\phi}\) in the postcondition of a modality to build \(\usubstapp{U}{\sigma}{\usubstgroup{\ddiamond{\alpha}{\phi}}}\).

Repetitions \(\usubstappp{V}{U}{\sigma}{\usubstgroup{\prepeat{\alpha}}}\) are the only complication in \rref{fig:usubst-one}, where taboo $U$ would be too lax during the recursion, because earlier repetitions of $\alpha$ bind variables of $\alpha$ itself, so only the taboos $V$ obtained after one round \(\usubstappp{V}{U}{\sigma}{\alpha}\) are  correct input taboos for the loop body.
These two passes per loop are linear in the output when considering repetitions $\prepeat{\alpha}$ as their equivalent \(\pchoice{\ptest{\ltrue}}{\alpha;\prepeat{\alpha}}\) of double size.

Unlike in Church-style uniform substitution \cite{Church_1956,DBLP:journals/jar/Platzer17,DBLP:conf/cade/Platzer18}, attention is needed at the replacement sites of function and predicate symbols in order to make up for the neglected admissibility checks during all other operators.
The result \(\usubstapp{U}{\sigma}{\usubstgroup{p(\theta)}}\) of applying uniform substitution $\sigma$ with taboo $U$ to a predicate application \(p(\theta)\) is \emph{only} defined if the replacement \(\applysubst{\sigma}{p(\usarg)}\) for $p$ does not introduce free any tabooed variable,
i.e., \(\freevarsdef{\applysubst{\sigma}{p(\usarg)}} \cap U = \emptyset\).
Arguments are put in for placeholder $\usarg$ recursively by the taboo-free use of uniform substitution \({\{\usarg\mapsto\usubstapp{U}{\sigma}{\theta}\}}{}\), which replaces arity 0 function symbol $\usarg$ by $\usubstapp{U}{\sigma}{\theta}$.
Taboos $U$ are respected when forming (\emph{once!}) the uniform substitution to be used for argument $\usarg$, but empty taboos $\emptyset$ suffice when substituting the resulting \(\usubstapp{U}{\sigma}{\theta}\) for $\usarg$ in the replacement \(\applysubst{\sigma}{p(\usarg)}\) for $p$.

All variables $\allvars$ become taboos during uniform substitutions into differentials $\der{\theta}$, because any newly introduced occurrence of a variable $x$ would cause additional dependencies on its respective associated differential variable $\D{x}$.

If the conditions in \rref{fig:usubst-one} are not met, the substitution $\sigma$ is said to \dfn{clash} for taboo $U$ and its result \m{\usubstapp{U}{\sigma}{\phi}} is not defined and cannot be used.
\emph{All subsequent applications of uniform substitutions are required to be defined} (no clash).

Whether a substitution clashes is only checked once at each replacement, instead of also once per operator around it as in Church style from equation \rref{eq:multipass-usubst-compose}.
The free variables \(\freevarsdef{\applyusubst{\sigma}{p(\usarg)}}\) of each (function and) predicate symbol replacement are best stored with $\sigma$ to avoid repeated computation of free variables.

This inference would unsoundly equate linear solutions with exponential ones:
\[
\linfer[clash]
{\ddiamond{\pupdate{\pumod{v}{f}}}{p(v)} \lbisubjunct p(f)}
{\ddiamond{\pupdate{\pumod{v}{-x}}}{\dbox{\pevolve{\D{x}=v}}{\,x\geq0}} \lbisubjunct \dbox{\pevolve{\D{x}=-x}}{\,x\geq0}}
\]
Indeed, 
\(\sigma=\usubstlist{\usubstmod{p(\usarg)}{\dbox{\pevolve{\D{x}=\usarg}}{\,x\geq0}},\usubstmod{f}{-x}}\) clashes so rejects the above inference since the substitute $-x$ for $f$ has free variable $x$ that is taboo in the context \(\dbox{\pevolve{\D{x}=\usarg}}{\,x\geq0}\).
By contrast, a sound use of rule \irref{US}, despite its change in multiple binding contexts with
\(\sigma=\usubstlist{\usubstmod{p(\usarg)}{\dbox{\prepeat{(\pupdate{\pumod{x}{x+\usarg}};\pevolve{\D{x}=\usarg})}}{\,x+\usarg\geq0}},\usubstmod{f}{-v}}\), is:
\[
~~
\linfer[US]
{\ddiamond{\pupdate{\pumod{v}{f}}}{p(v)} \lbisubjunct p(f)}
{\ddiamond{\pupdate{\pumod{v}{-v}}}{\dbox{\prepeat{(\pupdate{\pumod{x}{x+v}};\pevolve{\D{x}=v})}}{\,x+v\geq0}} \lbisubjunct \dbox{\prepeat{(\pupdate{\pumod{x}{x-v}};\pevolve{\D{x}=-v})}}{\,x-v\geq0}}
\]
Uniform substitution accurately distinguishes such sound inferences from unsound ones even if the substitutions take effect deep down within a \dGL formula.
Uniform substitutions enable other syntactic transformations that require a solid understanding of variable occurrence patterns such as common subexpression elimination, for example, by using the above inference from right to left.

\subsection{Taboo Lemmas}

The only soundness-critical property of output taboos is that they correctly add bound variables and never forget variables that were already input taboos.

\begin{lemma}[Taboo set computation] \label{lem:usubst-taboos}
One-pass uniform substitution application monotonously computes taboos with correct bound variables for games:
\[
\text{if}~\usubstappp{V}{U}{\sigma}{\alpha}~\text{is defined,}
~\text{then}~
V \supseteq U\cup\boundvarsdef{\usubstappp{V}{U}{\sigma}{\alpha}}
\]
\end{lemma}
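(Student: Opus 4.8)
The plan is to prove the statement by structural induction on the hybrid game $\alpha$, inspecting each defining equation of $\usubstappp{V}{U}{\sigma}{\alpha}$ in \rref{fig:usubst-one} and comparing the computed output taboo $V$ against $U\cup\boundvarsdef{\usubstappp{V}{U}{\sigma}{\alpha}}$ via the recursive clauses of the bound-variable computation $\boundvarsdef{\cdot}$. Since syntactic bound-variable computations over-approximate the semantic $\boundvarsdef{\cdot}$ of \rref{def:static-semantics}, it is enough to carry out the argument for the syntactic version, from which the semantic statement follows a fortiori.

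The atomic games are read off directly. For a game symbol $a$ we have $\usubstappp{V}{U}{\sigma}{a}=\applysubst{\sigma}{a}$ with $V=U\cup\boundvarsdef{\applysubst{\sigma}{a}}=U\cup\boundvarsdef{\usubstappp{V}{U}{\sigma}{a}}$, so even equality holds. For the assignment game $\pupdate{\umod{x}{\theta}}$ the output is $U\cup\{x\}$ and the substituted assignment still binds exactly $x$; for $\pevolvein{\D{x}=\genDE{x}}{\ivr}$ the output is $U\cup\{x,\D{x}\}$ and the substituted ODE still binds exactly $\{x,\D{x}\}$; for $\ptest{\ivr}$ the output is $U$ and the substituted test binds nothing. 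The propagating cases $\pchoice{\alpha}{\beta}$, $\alpha;\beta$, and $\pdual{\alpha}$ follow from the induction hypotheses together with the recursive clauses of $\boundvarsdef{\cdot}$ for choice, sequential composition, and dual, plus monotonicity and transitivity of $\supseteq$. In the sequential case in particular, the induction hypothesis on $\alpha$ gives $V\supseteq U\cup\boundvarsdef{\usubstappp{V}{U}{\sigma}{\alpha}}$ and that on $\beta$, run with input taboo $V$, gives $W\supseteq V\cup\boundvarsdef{\usubstappp{W}{V}{\sigma}{\beta}}$; chaining these through $W\supseteq V\supseteq U$ yields $W\supseteq U\cup\boundvarsdef{\usubstappp{V}{U}{\sigma}{\alpha}}\cup\boundvarsdef{\usubstappp{W}{V}{\sigma}{\beta}}$, which is exactly the claim for $\alpha;\beta$.

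The one delicate case is repetition $\prepeat{\alpha}$, whose substituted form is $\prepeat{(\usubstappp{V}{V}{\sigma}{\alpha})}$ for $V$ the output taboo of the \emph{first} pass $\usubstappp{V}{U}{\sigma}{\alpha}$. To justify that the \emph{second} pass $\usubstappp{V}{V}{\sigma}{\alpha}$ really reproduces the same output taboo $V$, I would first record an auxiliary observation, provable by the very same induction, that every output taboo set has the shape $U\cup B$ with $B$ depending only on $\alpha$ and $\sigma$ but not on the input taboo $U$; then running with input $V=U\cup B$ gives output $V\cup B=V$. With this in hand, the induction hypothesis on $\alpha$ at input $V$ gives $V\supseteq V\cup\boundvarsdef{\usubstappp{V}{V}{\sigma}{\alpha}}$, i.e.\ $V\supseteq\boundvarsdef{\usubstappp{V}{V}{\sigma}{\alpha}}=\boundvarsdef{\prepeat{(\usubstappp{V}{V}{\sigma}{\alpha})}}$, and $V=U\cup B\supseteq U$; combining these proves the claim. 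I expect this two-pass bookkeeping for the loop, and the small idempotence argument that both passes share the output $V$, to be the main obstacle; the remaining cases are routine checks against the $\boundvarsdef{\cdot}$ clauses.
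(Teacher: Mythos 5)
Your proposal is correct and follows essentially the same route as the paper: a direct structural induction on $\alpha$, reading off the atomic cases, chaining the induction hypotheses through the intermediate taboo set for $\alpha;\beta$, and applying the induction hypothesis to both passes $\usubstappp{V}{U}{\sigma}{\alpha}$ and $\usubstappp{V}{V}{\sigma}{\alpha}$ for loops (using $\boundvarsdef{\prepeat{\alpha}}\subseteq\boundvarsdef{\alpha}$). Your auxiliary observation that the output taboo has the form $U\cup B$ with $B$ independent of $U$, so that the second loop pass reproduces $V$, is exactly the justification the paper records in the remark immediately following the lemma rather than inside the proof itself.
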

\begin{proofatend}
The proof is by direct structural induction on $\alpha$:
\begin{compactenum}
\item
  \(\usubstappp{U\cup\boundvarsdef{\applyusubst{\sigma}{a}}}{U}{\sigma}{\usubstgroup{a}} = \applyusubst{\sigma}{a}\),
  then \(V=U\cup\boundvarsdef{\applyusubst{\sigma}{a}}=U\cup\boundvarsdef{\usubstappp{V}{U}{\sigma}{a}}\) 
\item 
  \(\usubstappp{U\cup\{x\}}{U}{\sigma}{\usubstgroup{\pumod{x}{\theta}}} = (\pumod{x}{\usubstapp{U}{\sigma}{\theta}})\),
  then \(U\cup\{x\} \supseteq U\cup\boundvarsdef{\pumod{x}{\usubstapp{U}{\sigma}{\theta}}}\).

\item
  \(\usubstappp{U\cup\{x,\D{x}\}}{U}{\sigma}{\usubstgroup{\pevolvein{\D{x}=\genDE{x}}{\ivr}}} 
  = (\pevolvein{\D{x}=\usubstapp{U\cup\{x,\D{x}\}}{\sigma}{\genDE{x}}}
  {\usubstapp{U\cup\{x,\D{x}\}}{\sigma}{\ivr}})\),
  then it is, indeed, the case that \(U\cup\{x,\D{x}\} \supseteq 
  U\cup\boundvarsdef{\pevolvein{\D{x}=\usubstapp{U\cup\{x,\D{x}\}}{\sigma}{\genDE{x}}}
  {\usubstapp{U\cup\{x,\D{x}\}}{\sigma}{\ivr}}}\).

\item
  \(\usubstappp{U}{U}{\sigma}{\usubstgroup{\ptest{\ivr}}} =\, \ptest{\usubstapp{U}{\sigma}{\ivr}}\),
  then output $U$ is correct as \(\boundvarsdef{\ptest{\usubstapp{U}{\sigma}{\ivr}}}=\emptyset\).

\item
  \(\usubstappp{V\cup W}{U}{\sigma}{\usubstgroup{\pchoice{\alpha}{\beta}}} = \pchoice{\usubstappp{V}{U}{\sigma}{\alpha}}{\usubstappp{W}{U}{\sigma}{\beta}}\),
  then, by IH,
  \(V \supseteq U\cup\boundvarsdef{\usubstappp{V}{U}{\sigma}{\alpha}}\)
  and \(W \supseteq U\cup\boundvarsdef{\usubstappp{W}{U}{\sigma}{\beta}}\).
  Thus,
  \(V\cup W  \supseteq  U\cup\boundvarsdef{\usubstappp{V}{U}{\sigma}{\alpha}} \cup U\cup\boundvarsdef{\usubstappp{W}{U}{\sigma}{\beta}}
  \supseteq U\cup\boundvarsdef{\pchoice{\usubstappp{V}{U}{\sigma}{\alpha}}{\usubstappp{W}{U}{\sigma}{\beta}}}\).

\item
  \(\usubstappp{W}{U}{\sigma}{\usubstgroup{\alpha;\beta}} = \usubstappp{V}{U}{\sigma}{\alpha}; \usubstappp{W}{V}{\sigma}{\beta}\)
  then, by IH,
  \(V \supseteq U\cup\boundvarsdef{\usubstappp{V}{U}{\sigma}{\alpha}}\)
  and \(W \supseteq V\cup\boundvarsdef{\usubstappp{W}{V}{\sigma}{\beta}}\).
  Hence,
  \(W \supseteq U\cup\boundvarsdef{\usubstappp{V}{U}{\sigma}{\alpha}}\cup\boundvarsdef{\usubstappp{W}{V}{\sigma}{\beta}} \supseteq U\cup\boundvarsdef{\usubstappp{V}{U}{\sigma}{\alpha};\usubstappp{W}{V}{\sigma}{\beta}}\).

\item
  \(\usubstappp{V}{U}{\sigma}{\usubstgroup{\prepeat{\alpha}}} = \prepeat{(\usubstappp{V}{V}{\sigma}{\alpha})}\)
  if \(\usubstappp{V}{U}{\sigma}{\alpha}\) is defined.
  By IH on \(\usubstappp{V}{U}{\sigma}{\alpha}\),
  \(V \supseteq U\cup\boundvarsdef{\usubstappp{V}{U}{\sigma}{\alpha}}\).
  By IH on \(\usubstappp{V}{V}{\sigma}{\alpha}\),
  \(V \supseteq \boundvarsdef{\usubstappp{V}{V}{\sigma}{\alpha}}\).
  Hence,
  \(V \supseteq U\cup\boundvarsdef{\prepeat{(\usubstappp{V}{V}{\sigma}{\alpha})}}\) as \(\boundvarsdef{\alpha}\supseteq\boundvarsdef{\prepeat{\alpha}}\) for all games $\alpha$.

\item
  \(\usubstappp{V}{U}{\sigma}{\usubstgroup{\pdual{\alpha}}} = \pdual{(\usubstappp{V}{U}{\sigma}{\alpha})}\),
  then, by IH, 
  \(V \supseteq U\cup\boundvarsdef{\usubstappp{V}{U}{\sigma}{\alpha}}\).
  So,
  \(V \supseteq U\cup\boundvarsdef{\pdual{(\usubstappp{V}{U}{\sigma}{\alpha})}} \supseteq U\cup\boundvarsdef{\usubstappp{V}{U}{\sigma}{\alpha}}\).
\qedhere
\end{compactenum}
\end{proofatend}

Any superset of such taboo computations (or the free variable sets used in \rref{fig:usubst-one}) remains correct, just more conservative.
The change from input taboo $U$ to output taboo $V$ is a function of the hybrid game $\alpha$, justifying the construction of \(\usubstappp{V}{U}{\sigma}{\usubstgroup{\prepeat{\alpha}}}\):
if \(\usubstappp{V}{U}{\sigma}{\alpha}\) and \(\usubstappp{W}{V}{\sigma}{\alpha}\) are defined, then \(\usubstappp{V}{V}{\sigma}{\alpha}\) is defined and equal to \(\usubstappp{W}{V}{\sigma}{\alpha}\).
By \rref{lem:usubst-taboos}, no implementation of bound variables is needed when defining game symbols via 
\(\usubstappp{U\cup V}{U}{\sigma}{\usubstgroup{a}} = \applysubst{\sigma}{a}\) where \(\usubstappp{V}{\emptyset}{\usubstid}{\usubstgroup{\applysubst{\sigma}{a}}}\) with identity substitution $\usubstid$.
But bound variable computations speed up loops via
\(\usubstappp{V}{U}{\sigma}{\usubstgroup{\prepeat{\alpha}}} = \prepeat{(\usubstappp{V}{U\cup B}{\sigma}{\alpha})}\)
since \(B=\boundvarsdef{\usubstappp{M}{\emptyset}{\sigma}{\alpha}}\) can be computed and used correctly in one pass when \(U\cup B=V\).

\subsection{Uniform Substitution Lemmas}

Uniform substitutions are syntactic transformations on syntactic expressions.
Their semantic counterpart is the semantic transformation that maps an interpretation $\iget[const]{\I}$ and a state $\iget[state]{\I}$ to the adjoint interpretation $\iget[const]{\Ia}$ that changes the meaning of all symbols according to the syntactic substitution $\sigma$.
The interpretation \(\iget[const]{\imodif[const]{\I}{\,\usarg}{d}}\) agrees with $I$ except that function symbol $\usarg$ is interpreted as $d\in\reals$.%

\begin{definition}[Substitution adjoints] \label{def:adjointUsubst}
The \emph{adjoint} to substitution $\sigma$ is the operation that maps $\iportray{\I}$ to the \emph{adjoint} interpretation $\iget[const]{\Ia}$ in which the interpretation of each function symbol $f\ignore{\in\replacees{\sigma}}$, predicate symbol $p\ignore{\in\replacees{\sigma}}$, and game symbol $a\ignore{\in\replacees{\sigma}}$ are modified according to $\sigma$ (it is enough to consider those that $\sigma$ changes):
\allowdisplaybreaks%
\begin{align*}
  \iget[const]{\Ia}(f) &: \reals\to\reals;\, d\mapsto\ivaluation{\imodif[const]{\I}{\,\usarg}{d}}{\applysubst{\sigma}{f}(\usarg)}\\
  \iget[const]{\Ia}(p) &= \{d\in\reals \with \imodels{\imodif[const]{\I}{\,\usarg}{d}}{\applysubst{\sigma}{p}(\usarg)}\}
  \\
  \iget[const]{\Ia}(a) &: \powerset{\linterpretations{\Sigma}{\allvars}}\to\powerset{\linterpretations{\Sigma}{\allvars}};\, X\mapsto\iwinreg[\applysubst{\sigma}{a}]{\I}{X}
\end{align*}
\end{definition}

The uniform substitution lemmas below are key to the soundness and equate the syntactic effect that a uniform substitution $\sigma$ has on a syntactic expression in $\iportray{\I}$ with the semantic effect that the switch to the adjoint interpretation $\iget[const]{\Ia}$ has on the original expression.
The technical challenge compared to Church-style uniform substitution \cite{DBLP:journals/jar/Platzer17,DBLP:conf/cade/Platzer18} is that no admissibility conditions are checked at the game operators that need them, because the whole point of one-pass uniform substitution is that it homomorphically recurses in a linear complexity sweep by postponing admissibility checks.
All that happens during the substitution is that different taboo sets are passed along.
Yet, still, there is a crucial interplay of the particular taboos imposed henceforth at binding operators and the retroactive checking at function and predicate symbol replacement sites.

In order to soundly deal with the negligence in admissibility checking of one-pass uniform substitutions in a modular way, the main insight is that it is imperative to generalize the range of applicability of uniform substitution lemmas beyond the state $\iget[state]{\I}$ of original interest where the adjoint $\iget[const]{\Ia}$ was formed, and make them cover \emph{all} variations of states that are so similar that they might arise during soundness justifications.
By demanding more comprehensive care at replacement sites, soundness arguments make up for the temporary lapses in attention during all other operators.
This gives the uniform substitution algorithm broader liberties at binding operators, while simultaneously demanding broader compatibility in semantic neighborhoods on its parts.
Due to the recursive nature of function substitutions, the proof of the following result is by structural induction lexicographically on the structure of $\sigma$ and $\theta$, for all $U,\iget[state]{\Ie},\iget[state]{\I}$.

\begin{lemma}[Uniform substitution for terms] \label{lem:usubst-term}
The uniform substitution $\sigma$ for taboo $U\subseteq\allvars$ and its adjoint interpretation $\iget[const]{\Ia}$ for $\iportray{\I}$ have the same semantics on $U$-variations for all \emph{terms} $\theta$:
\[\text{for all $U$-variations $\iget[state]{\Ie}$ of $\iget[state]{\I}$:}~~\ivaluation{\Ie}{\usubstapp{U}{\sigma}{\theta}} = \ivaluation{\Iae}{\theta}\]
\end{lemma}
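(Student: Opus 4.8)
The plan is to prove the statement by structural induction lexicographically on the pair $(\sigma,\theta)$, i.e., primarily on the structure of $\theta$ but with an appeal at the function-symbol case to the induction hypothesis for a structurally smaller term paired with a substitution $\sigma$ that has been reduced to the singleton substitution $\{\usarg\mapsto\usubstapp{U}{\sigma}{\theta}\}$. Throughout, we fix the interpretation $\iget[const]{\I}$ and state $\iget[state]{\I}$ that define the adjoint $\iget[const]{\Ia}$, and we quantify over \emph{all} $U$-variations $\iget[state]{\Ie}$ of $\iget[state]{\I}$ simultaneously; this generality in the statement is exactly what makes the induction go through, since the argument of a function symbol is substituted under the \emph{same} taboo $U$, while differentials force a jump to taboo $\allvars$, and we need the lemma available for those wider variations.

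First, the base case $\theta = x$: here $\usubstapp{U}{\sigma}{x} = x$ and $\ivaluation{\Ie}{x} = \iget[state]{\Ie}(x) = \ivaluation{\Iae}{x}$ because the adjoint changes only the interpretation $\iget[const]{\Ia}$ of symbols, never the state, and $\iget[state]{\Iae} = \iget[state]{\Ie}$. The arithmetic cases $\theta+\eta$ and $\theta\cdot\eta$ are immediate by the induction hypothesis applied to $\theta$ and $\eta$ (same $U$, same $\iget[state]{\Ie}$) together with compositionality of the term semantics. For the differential $\der{\theta}$, the substitution descends with taboo $\allvars$, producing $\der{\usubstapp{\allvars}{\sigma}{\theta}}$; by the induction hypothesis at taboo $\allvars$ we get $\ivaluation{\Jt}{\usubstapp{\allvars}{\sigma}{\theta}} = \ivaluation{\vdLint[const={\sigma^*_{\omega}I},state=\mu]}{\theta}$ for every $\allvars$-variation $\iget[state]{\Jt}$ of $\iget[state]{\I}$, hence the two terms have equal value in \emph{every} state, so all their partial derivatives $\Dp[x]{\cdot}$ agree, and the differential-form semantics (the sum $\sum_{x} \iget[state]{\Ie}(\D{x})\,\Dp[x]{\cdot}$) then agrees at $\iget[state]{\Ie}$ as well; note $\iget[state]{\Ie}$ itself is a $U$-variation hence an $\allvars$-variation, so the hypothesis applies to it.

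The main obstacle is the function-symbol case $\theta = f(\eta)$, which is only defined when $\freevarsdef{\applysubst{\sigma}{f(\usarg)}}\cap U = \emptyset$, and where by definition $\usubstapp{U}{\sigma}{f(\eta)} = \usubstapp{\emptyset}{\{\usarg\mapsto\usubstapp{U}{\sigma}{\eta}\}}{\applysubst{\sigma}{f(\usarg)}}$. Here one chains: (i) the induction hypothesis on the structurally smaller term $\eta$ (same $\sigma$, same $U$, same $\iget[state]{\Ie}$) gives $r \defeq \ivaluation{\Ie}{\usubstapp{U}{\sigma}{\eta}} = \ivaluation{\Iae}{\eta}$; (ii) the induction hypothesis in the \emph{first} lexicographic component, applied to the singleton substitution $\{\usarg\mapsto\usubstapp{U}{\sigma}{\eta}\}$ and the term $\applysubst{\sigma}{f(\usarg)}$ with taboo $\emptyset$, evaluated at the state $\iget[state]{\Ie}$ — but to apply the substitution lemma with taboo $\emptyset$ we must check that $\iget[state]{\Ie}$ is an $\emptyset$-variation of the state at which the relevant adjoint is formed; this is where the taboo condition $\freevarsdef{\applysubst{\sigma}{f(\usarg)}}\cap U = \emptyset$ does its work, via \rref{lem:coincidence-term}, letting us replace $\iget[state]{\Ie}$ by $\iget[state]{\I}$ on the free variables of $\applysubst{\sigma}{f(\usarg)}$ and thereby reconcile the $U$-variation with the $\emptyset$-variation demanded by the recursive call. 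Finally one unwinds the definition of the adjoint: $\iget[const]{\Ia}(f)(r) = \ivaluation{\imodif[const]{\I}{\,\usarg}{r}}{\applysubst{\sigma}{f}(\usarg)}$, and matching $\iget[const]{\imodif[const]{\I}{\,\usarg}{r}}$ against the singleton substitution $\{\usarg\mapsto \cdot\}$ closes the loop, giving $\ivaluation{\Ie}{\usubstapp{U}{\sigma}{f(\eta)}} = \iget[const]{\Ia}(f)\big(\ivaluation{\Iae}{\eta}\big) = \ivaluation{\Iae}{f(\eta)}$. Carefully tracking which state each adjoint is formed at, and confirming that the coincidence argument legitimately converts the $U$-variation bookkeeping into the $\emptyset$-variation needed by the inner induction, is the delicate point; everything else is routine.
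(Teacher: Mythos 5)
Your proposal is correct and follows essentially the same route as the paper: lexicographic induction in which the function-symbol case uses the induction hypothesis once on the smaller argument and once on the structurally simpler singleton substitution $\{\usarg\mapsto\cdot\}$, with the clash condition $\freevarsdef{\applysubst{\sigma}{f(\usarg)}}\cap U=\emptyset$ feeding \rref{lem:coincidence-term} to move the evaluation from $\iget[state]{\Ie}$ back to $\iget[state]{\I}$, and with the $\allvars$ taboo making the differential case hold in all states so that partial derivatives agree. One small correction: the lexicographic order must be primarily on $\sigma$ rather than on $\theta$ as your opening sentence suggests, since $\applysubst{\sigma}{f(\usarg)}$ may well be larger than the original term and only the substitution gets simpler---which is exactly what your step (ii) in fact correctly invokes.
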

\begin{proofatend}
\def\Im{\vdLint[const=\usebox{\ImnI},state=\nu\oplus(x\mapsto\usebox{\Imnx})]}%
The proof is by structural induction lexicographically on the structure of $\sigma$ and of $\theta$, for all $U,\iget[state]{\Ie},\iget[state]{\I}$.
Fix any $U$-variation $\iget[state]{\Ie}$ of $\iget[state]{\I}$.
\begin{compactenum}
\item
  \(\ivaluation{\Ie}{\usubstapp{U}{\sigma}{x}} 
  = \ivaluation{\Ie}{x} = \iget[state]{\Ie}(x) \)
  = \(\ivaluation{\Iae}{x}\)
  since $\sigma$ changes no variables $x\in\allvars$

\item \label{case:usubst-funcapp0}
  Consider the arity zero case of function application, written $f()$ for emphasis:
  \(\ivaluation{\Ie}{\usubstapp{U}{\sigma}{\usubstgroup{f()}}}
  = \ivaluation{\Ie}{\applysubst{\sigma}{f()}}\),
  which, by \rref{lem:coincidence-term}, equals
  \(\ivaluation{\I}{\applysubst{\sigma}{f()}}
  = \iget[const]{\Iae}(f) = \ivaluation{\Iae}{f()}\),
  because $\iget[state]{\Ie}$ is a $U$-variation of $\iget[state]{\I}$
  and $\freevarsdef{\applysubst{\sigma}{f()}}\cap U=\emptyset$.

\item
  \newcommand{\Iea}{\iadjointSubst{\sigma}{\Ie}}%
  Let \(d\mdefeq\ivaluation{\Ie}{\usubstapp{U}{\sigma}{\theta}} 
  \overset{\text{IH}}{=} \ivaluation{\Iae}{\theta}\)
  by IH. 
  \(\ivaluation{\Ie}{\usubstapp{U}{\sigma}{\usubstgroup{f(\theta)}}}
  = \ivaluation{\Ie}{(\usubstapp{U}{\sigma}{f})\big(\usubstapp{U}{\sigma}{\theta}\big)}
  = \ivaluation{\Ie}{\usubstapp{\emptyset}{\{\usarg\mapsto\usubstapp{U}{\sigma}{\theta}\}}{\applysubst{\sigma}{f(\usarg)}}}\)
  \(\overset{\text{IH}}{=} \ivaluation{\Iminner}{\applysubst{\sigma}{f(\usarg)}}\),
    which equals \(\ivaluation{\Iiminner}{\applysubst{\sigma}{f(\usarg)}}=(\iget[const]{\Iae}(f))(d)\)
    by \rref{lem:coincidence-term} since 
    $\iget[state]{\Iea}$ is a $U$-variation of $\iget[state]{\I}$
    and $\freevarsdef{\applysubst{\sigma}{f(\usarg)}}\cap U=\emptyset$.
    Continuing,
    \((\iget[const]{\Iae}(f))(d)\)
  \(= (\iget[const]{\Iae}(f))(\ivaluation{\Iae}{\theta})
  = \ivaluation{\Iae}{f(\theta)}\).

  This proof used the induction hypothesis twice:
  once for \(\usubstapp{U}{\sigma}{\theta}\) on the smaller $\theta$ 
  and once for
  \(\usubstapp{\emptyset}{\{\usarg\mapsto\usubstapp{U}{\sigma}{\theta}\}}{\applysubst{\sigma}{f(\usarg)}}\)
  on the possibly bigger term
  \({\applysubst{\sigma}{f(\usarg)}}\)
  but the structurally simpler uniform substitution
  \(\applyusubst{\{\usarg\mapsto\usubstapp{U}{\sigma}{\theta}\}}{}\)
  that substitutes arity 0 symbol $\usarg$ instead of arity 1 function symbol $f$.
  For well-foundedness of the induction note that the $\usarg$ substitution only happens for function symbols $f$ with at least one argument $\theta$
  so not for $\usarg$ itself, which, as an arity zero function, is covered in \rref{case:usubst-funcapp0}.
  
\item
  \(\ivaluation{\Ie}{\usubstapp{U}{\sigma}{\usubstgroup{\theta+\eta}}}
  = \ivaluation{\Ie}{\usubstapp{U}{\sigma}{\theta} + \usubstapp{U}{\sigma}{\eta}}
  = \ivaluation{\Ie}{\usubstapp{U}{\sigma}{\theta}} + \ivaluation{\Ie}{\usubstapp{U}{\sigma}{\eta}}\)
  \(\overset{\text{IH}}{=} \ivaluation{\Iae}{\theta} + \ivaluation{\Iae}{\eta}\)
  \(= \ivaluation{\Iae}{\theta+\eta}\).
  The proof for multiplication \(\theta\cdot\eta\) is accordingly.

\item
\(
\ivaluation{\Ie}{\usubstapp{U}{\sigma}{\usubstgroup{\der{\theta}}}}
= \ivaluation{\Ie}{\der{\usubstapp{\allvars}{\sigma}{\theta}}}
= \sum_x \iget[state]{\Ie}(\D{x}) \itimes \Dp[x]{\ivaluation{\Ie}{\usubstapp{\allvars}{\sigma}{\theta}}}
\overset{\text{IH}}{=} \sum_x \iget[state]{\Ie}(\D{x}) \itimes \Dp[x]{\ivaluation{\Iae}{\theta}}\)
which is
\(\ivaluation{\Iae}{\der{\theta}}\)
since IH yields \(\ivaluation{\Ie}{\usubstapp{\allvars}{\sigma}{\theta}} = \ivaluation{\Iae}{\theta}\) for all states $\iget[state]{\Ie},\iget[state]{\I}$ (which are trivially $\allvars$-variations), including states used for partial derivatives.
\qedhere
\end{compactenum}
\end{proofatend}

Recall that all uniform substitutions are only defined when they meet the side conditions from \rref{fig:usubst-one}.
A mention such as \(\usubstapp{U}{\sigma}{\theta}\) in \rref{lem:usubst-term} implies that its side conditions during the application of $\sigma$ to $\theta$ with taboos $U$ are met.
Substitutions are antimonotone in taboos: If \(\usubstapp{U}{\sigma}{\theta}\) is defined, then \(\usubstapp{V}{\sigma}{\theta}\) is defined and equal to \(\usubstapp{U}{\sigma}{\theta}\) for all $V\subseteq U$ (accordingly for $\phi,\alpha$).
The more taboos a use of a substitution tolerates, the more broadly its adjoint generalizes to state variations.

The corresponding results for formulas and games are proved by simultaneous induction since formulas and games are defined by simultaneous induction, as games may occur in formulas and, vice versa.
The inductive proof is lexicographic over the structure of $\sigma$ and $\phi$ or $\alpha$, with a nested induction over the closure ordinals of the loop fixpoints, simultaneously for all $\nu,\omega,U,X$.

\begin{lemma}[Uniform substitution for formulas] \label{lem:usubst}
The uniform substitution $\sigma$ for taboo $U\subseteq\allvars$ and its adjoint interpretation $\iget[const]{\Ia}$ for $\iportray{\I}$ have the same semantics on $U$-variations for all \emph{formulas} $\phi$:
\[\text{for all $U$-variations $\iget[state]{\Ie}$ of $\iget[state]{\I}$:}~~\imodels{\Ie}{\usubstapp{U}{\sigma}{\phi}} ~\text{iff}~ \imodels{\Iae}{\phi}\]
\end{lemma}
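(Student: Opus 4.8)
The plan is to prove Lemmas~\ref{lem:usubst} (formulas) and a companion statement for games by simultaneous induction, mirroring the simultaneous definition of formulas and hybrid games, with the lexicographic ordering on the structure of $\sigma$ first and then $\phi$ or $\alpha$, and an innermost induction on the closure ordinals of the loop fixpoint. The companion game statement I would state is: if $\usubstappp{V}{U}{\sigma}{\alpha}$ is defined, then for all $U$-variations $\iget[state]{\Ie}$ of $\iget[state]{\I}$ and all $X\subseteq\linterpretations{\Sigma}{\allvars}$, we have $\iget[state]{\Ie}\in\iwinreg[\usubstappp{V}{U}{\sigma}{\alpha}]{\Ie}{X}$ iff $\iget[state]{\Ie}\in\iwinreg[\alpha]{\Iae}{X}$. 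Throughout, I would use freely: antimonotonicity of substitutions in taboos, the taboo monotonicity Lemma~\ref{lem:usubst-taboos} ($V\supseteq U\cup\boundvarsdef{\usubstappp{V}{U}{\sigma}{\alpha}}$), the coincidence Lemmas~\ref{lem:coincidence}--\ref{lem:coincidence-HG}, the bound effect Lemma~\ref{lem:bound}, the already-proved term Lemma~\ref{lem:usubst-term}, and the variation calculus (monotonicity/transitivity of $U$-variations).

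The routine cases go through quickly. For atomic comparisons $\theta\geq\eta$ and Boolean connectives $\lnot\phi$, $\phi\land\psi$, one unfolds the semantics and applies Lemma~\ref{lem:usubst-term} or the induction hypothesis at the same taboo $U$. For $p(\theta)$, the argument parallels the function-symbol case of Lemma~\ref{lem:usubst-term}: set $d=\ivaluation{\Ie}{\usubstapp{U}{\sigma}{\theta}}=\ivaluation{\Iae}{\theta}$ by Lemma~\ref{lem:usubst-term}, recurse (structurally simpler substitution $\{\usarg\mapsto\usubstapp{U}{\sigma}{\theta}\}$) to reduce to $\imodels{\imodif[const]{\Ie}{\,\usarg}{d}}{\applysubst{\sigma}{p(\usarg)}}$, then use coincidence (Lemma~\ref{lem:coincidence}) together with $\freevarsdef{\applysubst{\sigma}{p(\usarg)}}\cap U=\emptyset$ and the fact that $\iget[state]{\Ie}$ is a $U$-variation of $\iget[state]{\I}$ to swap $\iget[state]{\Ie}$ for $\iget[state]{\I}$, landing on $\iget[const]{\Iae}(p)$. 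For $\lexists{x}{\phi}$, the substitution passes taboo $U\cup\{x\}$ to $\phi$; given a witness $r$, the state $\iget[state]{\imodif[state]{\Ie}{x}{r}}$ is a $(U\cup\{x\})$-variation of $\iget[state]{\I}$, so the induction hypothesis at taboo $U\cup\{x\}$ applies directly in both directions. For the modality $\ddiamond{\alpha}{\phi}$, one uses the game induction hypothesis with winning condition $X=\imodels{\Iae}{\phi}$ — but here the postcondition is substituted at the \emph{output} taboo $V$, so I first apply the formula induction hypothesis (at taboo $V$, legal since $\iget[state]{\Ie}$ is a $V$-variation of itself, and for every state reachable in the game, which by bound effect and Lemma~\ref{lem:usubst-taboos} is a $V$-variation of $\iget[state]{\Ie}$) to identify $\imodel{\Ie}{\usubstapp{V}{\sigma}{\phi}}$ with the set of states satisfying $\phi$ under the adjoint, then feed that into the game statement.

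The genuinely delicate cases are the game constructs, and there the heart of the matter is propagating the adjoint correctly as the taboo set grows. The key structural point I would make explicit is a \emph{compatibility} observation: if $\iget[state]{\Ie}$ is a $U$-variation of $\iget[state]{\I}$, then because the adjoint $\iget[const]{\Ia}$ is built only from replacements whose free variables avoid the taboos accumulated at their use sites, the adjoints $\iget[const]{\iadjointSubst{\sigma}{\Ie}}$ and $\iget[const]{\Ia}$ agree well enough — more precisely, for assignment $\pupdate{\umod{x}{\theta}}$ the output taboo is $U\cup\{x\}$ and after the update the new state is a $(U\cup\{x\})$-variation, matching the recursion; for the ODE $\pevolvein{\D{x}=\genDE{x}}{\ivr}$ the output taboo $U\cup\{x,\D{x}\}$ is exactly the set of variables that the flow can change, so every state along the solution is a $(U\cup\{x,\D{x}\})$-variation of $\iget[state]{\Ie}$ and the induction hypotheses for $\genDE{x}$ and $\ivr$ at that enlarged taboo apply uniformly along the flow; for $\alpha;\beta$ the Einstein-summation threading $U\to V\to W$ together with Lemma~\ref{lem:usubst-taboos} guarantees that after playing $\usubstappp{V}{U}{\sigma}{\alpha}$ one is at a $V$-variation, which is the precondition needed to invoke the induction hypothesis for $\usubstappp{W}{V}{\sigma}{\beta}$; the choice and dual cases are immediate from the semantics (for $\pdual{\alpha}$, complementation commutes with everything and the taboo is unchanged). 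The hardest case is the loop $\prepeat{\alpha}$, substituted as $\prepeat{(\usubstappp{V}{V}{\sigma}{\alpha})}$ where $V$ is obtained from one pass $\usubstappp{V}{U}{\sigma}{\alpha}$. Here I would argue by the inner induction on the fixpoint closure ordinal: the winning region is $\capfold\{Z\with X\cup\iwinreg[\alpha]{\I}{Z}\subseteq Z\}$, realized as a transfinite union of the ordinal-indexed approximants; by Lemma~\ref{lem:usubst-taboos} $V\supseteq U\cup\boundvarsdef{\usubstappp{V}{U}{\sigma}{\alpha}}$ and the idempotence remark ($\usubstappp{V}{V}{\sigma}{\alpha}$ is defined and equals $\usubstappp{W}{V}{\sigma}{\alpha}$) ensures that using the stable taboo $V$ at every unrolling is both well-defined and semantically faithful, so the induction hypothesis for $\alpha$ at taboo $V$ applies at each approximant, and since every state reachable by finitely many rounds of $\alpha$ is still a $V$-variation of the start, the fixpoints on the two sides coincide level by level, hence in the limit. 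I expect this loop case — specifically, justifying that the single stabilized taboo $V$ is simultaneously correct for the outer entry (a $U$-variation, $U\subseteq V$, handled by antimonotonicity) and for all deeper unrollings — to be the main obstacle, and the neighborhood/variation generalization in the statement of the lemma is exactly what makes it go through.
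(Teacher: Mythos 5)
Your proposal is correct and follows essentially the same route as the paper: the same simultaneous lexicographic induction on $\sigma$ and $\phi$/$\alpha$ with an inner transfinite induction on the loop closure ordinals, the same companion game lemma, and the same key move of combining the bound effect lemma with \rref{lem:usubst-taboos} to restrict winning conditions to variations on which the formula induction hypothesis at the output taboo $V$ applies. The only nit is that in the modality case the states produced by the bound-effect restriction must be recognized as $V$-variations of $\iget[state]{\I}$ (not merely of $\iget[state]{\Ie}$) before the induction hypothesis can be invoked, which follows by the transitivity of variations exactly as you use it in the loop case.
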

\begin{proofatend}
The proof is by structural induction lexicographically on the structure of $\sigma$ and of $\phi$, with a simultaneous induction with the subsequent proof of \rref{lem:usubst-HG}, simultaneously for all $U,\iget[state]{\Ie},\iget[state]{\I}$.
Fix any $U$-variation $\iget[state]{\Ie}$ of $\iget[state]{\I}$.
\begin{compactenum}
\item
  \(\imodels{\Ie}{\usubstapp{U}{\sigma}{\usubstgroup{\theta\geq\eta}}}\)
  \(= \imodel{\Ie}{\usubstapp{U}{\sigma}{\theta} \geq \usubstapp{U}{\sigma}{\eta}}\)
  iff \(\ivaluation{\Ie}{\usubstapp{U}{\sigma}{\theta}} \geq \ivaluation{\Ie}{\usubstapp{U}{\sigma}{\eta}}\),
  by \rref{lem:usubst-term},
  iff \(\ivaluation{\Iae}{\theta} \geq \ivaluation{\Iae}{\eta}\)
  iff \(\imodels{\Iae}{\theta\geq\eta}\).

\item \label{case:usubst-predappskip}
  Consider a predicate symbol $q$ that is not substituted to anything else by $\sigma$:
  \(\imodels{\Ie}{\usubstapp{U}{\sigma}{\usubstgroup{q(\theta)}}}
  = \imodel{\Ie}{q(\usubstapp{U}{\sigma}{\theta})}\)
  iff \(\ivaluation{\Ie}{\usubstapp{U}{\sigma}{\theta}} \in \iget[const]{\Ie}(q)\)
  iff, by \rref{lem:usubst-term}, \(\ivaluation{\Iae}{\usubstapp{U}{\sigma}{\theta}} \in \iget[const]{\Ie}(q)\)
  iff \(\ivaluation{\Iae}{\usubstapp{U}{\sigma}{\theta}} \in \iget[const]{\Iae}(q)\)
  iff \(\imodels{\Iae}{q(\theta)}\)

\item
  Let \(d\mdefeq\ivaluation{\Ie}{\usubstapp{U}{\sigma}{\theta}} = \ivaluation{\Iae}{\theta}\)
  by \rref{lem:usubst-term} since $\iget[state]{\Ie}$ is a $U$-variation of $\iget[state]{\I}$.
  \(\imodels{\Ie}{\usubstapp{U}{\sigma}{\usubstgroup{p(\theta)}}}
  = \imodel{\Ie}{(\usubstapp{U}{\sigma}{p})\big(\usubstapp{U}{\sigma}{\theta}\big)}
  = \imodel{\Ie}{\usubstapp{\emptyset}{\{\usarg\mapsto\usubstapp{U}{\sigma}{\theta}\}}{\applysubst{\sigma}{p(\usarg)}}}\)
  iff \(\imodels{\Iminner}{\applysubst{\sigma}{p(\usarg)}}\) by IH,
  iff \(\imodels{\Iminnerorg}{\applysubst{\sigma}{p(\usarg)}}\) by \rref{lem:coincidence} as $\iget[state]{\Iminner}$ is a $U$-variation of $\iget[state]{\Iminnerorg}$ and \(\freevarsdef{\applyusubst{\sigma}{p(\usarg)}}\cap U=\emptyset\),
  iff \(d \in \iget[const]{\Ia}(p)\)
  iff \((\ivaluation{\Iae}{\theta}) \in \iget[const]{\Iae}(p)\)
  iff \(\imodels{\Iae}{p(\theta)}\).
  The IH for \(\usubstapp{\emptyset}{\{\usarg\mapsto\usubstapp{U}{\sigma}{\theta}\}}{\applysubst{\sigma}{p(\usarg)}}\)
  is used on the possibly bigger formula \({\applysubst{\sigma}{p(\usarg)}}\) but the structurally simpler uniform substitution \(\applyusubst{\{\usarg\mapsto\usubstapp{U}{\sigma}{\theta}\}}{}\) only substitutes function symbol $\usarg$ of arity zero, not predicates, thus is covered by \rref{case:usubst-predappskip}.

\item
  \(\imodels{\Ie}{\usubstapp{U}{\sigma}{\usubstgroup{\lnot\phi}}}
  = \imodel{\Ie}{\lnot\usubstapp{U}{\sigma}{\phi}}\)
  iff \(\inonmodels{\Ie}{\usubstapp{U}{\sigma}{\phi}}\)
  by IH
  iff \(\inonmodels{\Iae}{\phi}\)
  iff \(\imodels{\Iae}{\lnot\phi}\)

\item
  \(\imodels{\Ie}{\usubstapp{U}{\sigma}{\usubstgroup{\phi\land\psi}}}
  = \imodel{\Ie}{\usubstapp{U}{\sigma}{\phi} \land \usubstapp{U}{\sigma}{\psi}}
  = \imodel{\Ie}{\usubstapp{U}{\sigma}{\phi}} \cap \imodel{\Ie}{\usubstapp{U}{\sigma}{\psi}}\),
  by induction hypothesis,
  iff \(\imodels{\Iae}{\phi} \cap \imodel{\Iae}{\psi}
  = \imodel{\Iae}{\phi\land\psi}\)

\item
\def\Imd{\imodif[state]{\Ie}{x}{d}}%
\def\Iaemd{\imodif[state]{\Iae}{x}{d}}%
\def\Imda{\iadjointSubst{\sigma}{\Imd}}%
  \(\imodels{\Ie}{\usubstapp{U}{\sigma}{\usubstgroup{\lexists{x}{\phi}}}}
  = \imodel{\Ie}{\lexists{x}{\usubstapp{U\cup\{x\}}{\sigma}{\phi}}}\)
  iff for some $d$ \(\imodels{\Imd}{\usubstapp{U\cup\{x\}}{\sigma}{\phi}}\),
  so, by IH,
  iff (for some $d$ for any $(U\cup\{x\})$-variation $\iget[state]{\Iaemd}$ of $\iget[state]{\I}$: \(\imodels{\Iaemd}{\phi}\)),
  iff (for some $d$ for any $U$-variation $\iget[state]{\Ie}$ of $\iget[state]{\I}$: \(\imodels{\Iaemd}{\phi}\)),
  Thus, this is equivalent to
  \(\imodels{\Iae}{\lexists{x}{\phi}}\),
  because $\iget[state]{\Iae}$, indeed, is a $U$-variation of $\iget[state]{\I}$. 
  
\item
  \(\imodels{\Ie}{\usubstapp{U}{\sigma}{\usubstgroup{\ddiamond{\alpha}{\phi}}}}
  = \imodel{\Ie}{\ddiamond{\usubstappp{V}{U}{\sigma}{\alpha}}{\usubstapp{V}{\sigma}{\phi}}}\)
  = \(\iwinreg[{\usubstappp{V}{U}{\sigma}{\alpha}}]{\I}{\imodel{\Ie}{\usubstapp{V}{\sigma}{\phi}}}\)
  iff (by \rref{lem:bound})\\
  \(\iwin[\usubstappp{V}{U}{\sigma}{\alpha}]{\Ie}{\iselectlike{\imodel{\Ie}{\usubstapp{V}{\sigma}{\phi}}}{\Ie}{\scomplement{\boundvarsdef{\usubstappp{V}{U}{\sigma}{\alpha}}}}}\).
  Conversely:
  \(\imodels{\Iae}{\ddiamond{\alpha}{\phi}}\)
  = \(\iwinreg[\alpha]{\Ia}{\imodel{\Ia}{\phi}}\)
  iff (by \rref{lem:usubst-HG})
  \(\iwin[\usubstappp{V}{U}{\sigma}{\alpha}]{\Ie}{\imodel{\Ia}{\phi}}\)
  as \(\usubstappp{V}{U}{\sigma}{\alpha}\) is defined and $\iget[state]{\Ie}$ a $U$-variation of $\iget[state]{\I}$,
  iff (\rref{lem:bound})
  \(\iwin[\usubstappp{V}{U}{\sigma}{\alpha}]{\Ie}{\iselectlike{\imodel{\Ia}{\phi}}{\Ie}{\scomplement{\boundvarsdef{\usubstappp{V}{U}{\sigma}{\alpha}}}}}\).
  The conditions equate
  \[{\iselectlike{\imodel{\I}{\usubstapp{V}{\sigma}{\phi}}}{\Ie}{\scomplement{\boundvarsdef{\usubstappp{V}{U}{\sigma}{\alpha}}}}}
  = {\iselectlike{\imodel{\Ia}{\phi}}{\Ie}{\scomplement{\boundvarsdef{\usubstappp{V}{U}{\sigma}{\alpha}}}}}\]
  For this, consider any $\boundvarsdef{\usubstappp{V}{U}{\sigma}{\alpha}}$-variation $\iget[state]{\Iz}$ of $\iget[state]{\Ie}$ and show:
  \(\imodels{\Iaz}{\phi}\)
  iff \(\imodels{\Iz}{\usubstapp{V}{\sigma}{\phi}}\).
  By induction hypothesis,
  the latter is equivalent to
  \(\imodels{\Iaz}{\phi}\)
  when $\iget[state]{\Iaz}$ is a $V$-variation of $\iget[state]{\I}$,
  which it is, because $\iget[state]{\Iz}$ is a $\boundvarsdef{\usubstappp{V}{U}{\sigma}{\alpha}}$-variation of $\iget[state]{\Ie}$, which is, in turn, a $U$-variation of $\iget[state]{\I}$,
  so $\iget[state]{\Iz}$ is a $(U\cup\boundvarsdef{\usubstappp{V}{U}{\sigma}{\alpha}})$-variation of $\iget[state]{\I}$,
  hence also a $V$-variation, because $V\supseteq U\cup\boundvarsdef{\usubstappp{V}{U}{\sigma}{\alpha}}$ by \rref{lem:usubst-taboos}.
\qedhere
\end{compactenum}
\end{proofatend}

\begin{lemma}[Uniform substitution for games] \label{lem:usubst-HG}
The uniform substitution $\sigma$ for taboo $U\subseteq\allvars$ and its adjoint interpretation $\iget[const]{\Ia}$ for $\iportray{\I}$ have the same semantics on $U$-variations for all \emph{games} $\alpha$:
\[\text{for all $U$-variations $\iget[state]{\Ie}$ of $\iget[state]{\I}$:}~~\iwin[{\usubstappp{V}{U}{\sigma}{\alpha}}]{\Ie}{X} ~\text{iff}~ \iwin[\alpha]{\Iae}{X}\]
\end{lemma}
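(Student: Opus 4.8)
The plan is to prove this by a structural induction that is lexicographic on the pair $(\sigma,\alpha)$, run simultaneously with the proof of \rref{lem:usubst} (games occur in formulas and conversely), and, for repetitions, carrying a nested transfinite induction on the closure ordinals of the loop fixpoint. Throughout I would keep the statement quantified over \emph{all} $U$-variations $\iget[state]{\Ie}$ of $\iget[state]{\I}$, not just over the state $\iget[state]{\I}$ at which the adjoint $\iget[const]{\Ia}$ was formed — this uniformity is exactly what makes the modular argument work despite the deferred admissibility checks. I would lean on three facts repeatedly: winning regions are sets of states depending only on the interpretation part, so $\iwinreg[\gamma]{\Ie}{Y}=\iwinreg[\gamma]{\I}{Y}$ and $\iwinreg[\gamma]{\Iae}{Y}=\iwinreg[\gamma]{\Ia}{Y}$ while $\iget[state]{\Iae}=\iget[state]{\Ie}$; the bound-effect \rref{lem:bound}; and the taboo-monotonicity \rref{lem:usubst-taboos}, i.e. $V\supseteq U\cup\boundvarsdef{\usubstappp{V}{U}{\sigma}{\alpha}}$.

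First I would clear the easy cases. For a game symbol $a$ the substitute is $\applysubst{\sigma}{a}$, and \rref{def:adjointUsubst} makes $\iwinreg[a]{\Iae}{X}=\iget[const]{\Ia}(a)(X)$ literally equal to $\iwinreg[\applysubst{\sigma}{a}]{\I}{X}$, so there is nothing to prove (no variation is even needed). For $\pupdate{\pumod{x}{\theta}}$ I would unfold the assignment semantics on both sides and match $\ivaluation{\Ie}{\usubstapp{U}{\sigma}{\theta}}$ with $\ivaluation{\Iae}{\theta}$ via \rref{lem:usubst-term}. For $\ptest{\ivr}$, unfold the test semantics and use \rref{lem:usubst} on the subformula $\ivr$. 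For $\pchoice{\alpha}{\beta}$ and $\pdual{\alpha}$, the winning regions are a union resp. a complement and the induction hypothesis applies directly to the subgames (with post-set $\scomplement{X}$ in the dual case), the differing output taboos of the two choice branches being immaterial here. For the differential equation $\pevolvein{\D{x}=\genDE{x}}{\ivr}$, where $\sigma$ is applied under taboo $W := U\cup\{x,\D{x}\}$ and $x,\D{x}$ stay bound, I would fix any flow $\varphi$ of duration $r$: each state $\varphi(\zeta)$ agrees with $\iget[state]{\Ie}$ off $\{x,\D{x}\}$, hence is a $W$-variation of $\iget[state]{\I}$, so \rref{lem:usubst-term} on $\genDE{x}$ and \rref{lem:usubst} on $\ivr$, both at taboo $W$, transfer the ODE-solution and evolution-domain conditions pointwise along $\varphi$, while the conditions mentioning only $x,\D{x}$ and the target membership $\varphi(r)\in X$ are untouched by $\sigma$; so the same flows witness both sides.

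The substantive cases are sequential composition and repetition. For $\alpha;\beta$, whose substitute is $\usubstappp{V}{U}{\sigma}{\alpha};\usubstappp{W}{V}{\sigma}{\beta}$, I would unfold the semantics of sequential composition (\rref{def:HG-semantics}, which composes winning regions) and set $Y_1 := \iwinreg[\usubstappp{W}{V}{\sigma}{\beta}]{\I}{X}$ and $Y_2 := \iwinreg[\beta]{\Ia}{X}$. The induction hypothesis for $\beta$ at its input taboo $V$ says $Y_1$ and $Y_2$ agree on all $V$-variations of $\iget[state]{\I}$. By \rref{lem:bound}, whether $\iget[state]{\Ie}\in\iwinreg[\usubstappp{V}{U}{\sigma}{\alpha}]{\I}{Y_i}$ depends only on $Y_i$ restricted to $\boundvarsdef{\usubstappp{V}{U}{\sigma}{\alpha}}$-variations of $\iget[state]{\Ie}$; each such state is a $\boundvarsdef{\usubstappp{V}{U}{\sigma}{\alpha}}$-variation of $\iget[state]{\Ie}$, which is in turn a $U$-variation of $\iget[state]{\I}$, hence (monotonicity, transitivity, and $V\supseteq U\cup\boundvarsdef{\usubstappp{V}{U}{\sigma}{\alpha}}$ from \rref{lem:usubst-taboos}) a $V$-variation of $\iget[state]{\I}$, where $Y_1=Y_2$. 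So $\iget[state]{\Ie}\in\iwinreg[\usubstappp{V}{U}{\sigma}{\alpha}]{\I}{Y_1}$ iff $\iget[state]{\Ie}\in\iwinreg[\usubstappp{V}{U}{\sigma}{\alpha}]{\I}{Y_2}$, and the induction hypothesis for $\alpha$ at taboo $U$ with the common post-set $Y_2$ turns this into $\iget[state]{\Iae}\in\iwinreg[\alpha]{\Ia}{\iwinreg[\beta]{\Ia}{X}}$, i.e. $\iwin[\alpha;\beta]{\Iae}{X}$. For $\prepeat{\alpha}$, whose substitute is $\prepeat{(\usubstappp{V}{V}{\sigma}{\alpha})}$, I would write the least fixpoint as the union of its ordinal-indexed approximants (starting from $X$, at each successor adding $X$ to the winning region of the body applied to the previous approximant, unions at limits). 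Since $V\supseteq U$, every $U$-variation is a $V$-variation, so it suffices to show by transfinite induction on $\kappa$, for all $V$-variations $\iget[state]{\Ie}$ of $\iget[state]{\I}$, that $\iget[state]{\Ie}$ lies in the $\kappa$-th approximant of $\iwinreg[\prepeat{(\usubstappp{V}{V}{\sigma}{\alpha})}]{\I}{X}$ iff $\iget[state]{\Iae}$ lies in the $\kappa$-th approximant of $\iwinreg[\prepeat{\alpha}]{\Ia}{X}$: the base and limit stages are immediate unions, and the successor stage repeats the $\alpha;\beta$ reasoning with the loop body under input taboo $V$, using the outer structural hypothesis for $\alpha$, \rref{lem:bound} and \rref{lem:usubst-taboos} for $\usubstappp{V}{V}{\sigma}{\alpha}$, and the inner (ordinal) hypothesis to supply the agreement of the two $\kappa$-th post-sets. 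Taking the union over $\kappa$ and using that the increasing chains stabilize finishes the case.

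I expect the repetition case, together with the bound-effect bookkeeping it shares with sequential composition, to be the main obstacle: the crux is that the \emph{output} taboo $V$ must be shown at once to retain the incoming taboos $U$ and to cover every variable the substituted game can bind, since only then does the generalization over $U$-variations survive passage through a modality, a sequential cut, and — via the nested fixpoint induction — an unbounded repetition. Getting this variation arithmetic to line up at each such point, so that ``more taboos imposed at binders'' exactly compensates ``admissibility deferred to replacement sites'', is the delicate part; everything else reduces to \rref{lem:usubst-term}, \rref{lem:usubst}, and the state-insensitivity of winning regions.
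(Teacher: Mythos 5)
Your proposal is correct and follows essentially the same route as the paper's proof: lexicographic structural induction simultaneous with the formula lemma, generalization over all $U$-variations, the bound effect lemma plus \rref{lem:usubst-taboos} to reconcile the two post-sets in the sequential composition case, and a nested transfinite induction over the inflationary fixpoint approximants (quantified over all $V$-variations) for repetition. The only cosmetic difference is your closing remark about the approximant chain stabilizing, which is not needed since the union over all ordinals already equals the loop's winning region.
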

\begin{proofatend}
The proof is by lexicographic structural induction on $\sigma$ and $\alpha$, simultaneously with \rref{lem:usubst}, for all $U,\iget[state]{\Ie},\iget[state]{\I}$ and $X$.
Fix any $U$-variation $\iget[state]{\Ie}$ of $\iget[state]{\I}$.
\begin{compactenum}
\item
  \(\iwin[\usubstappp{U\cup\boundvarsdef{\applysubst{\sigma}{a}}}{U}{\sigma}{\usubstgroup{a}}]{\Ie}{X} = \iwinreg[\applysubst{\sigma}{a}]{\I}{X} = \iget[const]{\Ia}(a)(X) = \iwinreg[a]{\Ia}{X}\)
  for game $a$

\item 
  \(\iwin[\usubstappp{U\cup\{x\}}{U}{\sigma}{\usubstgroup{\pumod{x}{\theta}}}]{\Ie}{X}
  = \iwinreg[\pumod{x}{\usubstapp{U}{\sigma}{\theta}}]{\Ie}{X}\)
  iff \(X \ni \modif{\iget[state]{\Ie}}{x}{\ivaluation{\Ie}{\usubstapp{U}{\sigma}{\theta}}}\)
  = \(\modif{\iget[state]{\Ie}}{x}{\ivaluation{\Iae}{\theta}}\)
  by \rref{lem:usubst-term}, which is, thus, equivalent to
  \(\iwin[\pumod{x}{\theta}]{\Iae}{X}\).

\item
\newcommand{\Izeta}{\iconcat[state=\varphi(t)]{\I}}
\def\Izetaa{\iadjointSubst{\sigma}{\Izeta}}%
\newcommand{\Iazeta}{\iconcat[state=\varphi(t)]{\Ia}}
  \(\iwin[\usubstappp{U\cup\{x,\D{x}\}}{U}{\sigma}{\usubstgroup{\pevolvein{\D{x}=\genDE{x}}{\ivr}}}]{\Ie}{X}
  = \iwinreg[\pevolvein{\D{x}=\usubstapp{U\cup\{x,\D{x}\}}{\sigma}{\genDE{x}}}
  {\usubstapp{U\cup\{x,\D{x}\}}{\sigma}{\ivr}}]{\I}{X}\)
  iff \(\mexists{\varphi:[0,T]\to\linterpretations{\Sigma}{\allvars}}\)
  such that \(\varphi(0)=\iget[state]{\Ie}\) on $\scomplement{\{\D{x}\}}$, \(\varphi(T)\in X\) and for all \m{t\geq0}:
  \(\D[s]{\varphi(s)(x)}(t) = \ivaluation{\Izeta}{\usubstapp{U\cup\{x,\D{x}\}}{\sigma}{\genDE{x}}}
  = \ivaluation{\Iazeta}{\genDE{x}}\) by \rref{lem:usubst-term}
  and it also holds that
  \(\imodels{\Izeta}{\usubstapp{U\cup\{x,\D{x}\}}{\sigma}{\ivr}}\),
  which, by \rref{lem:usubst}, holds iff
  \(\imodels{\Iazeta}{\ivr}\).
  Here, \rref{lem:usubst-term} and~\ref{lem:usubst} are applicable,
  because $\iget[state]{\Izeta}$ is a $(U\cup\{x,\D{x}\})$-variation of $\iget[state]{\I}$, since
  $\iget[state]{\Izeta}$ is a $\{x,\D{x}\}$-variation of $\iget[state]{\Ie}$, which is a $U$-variation of $\iget[state]{\I}$.
  The latter two conditions are equivalent to  
  \(\iwin[\pevolvein{\D{x}=\genDE{x}}{\ivr}]{\Iae}{X}\).
  
\item
  \(\iwin[\usubstappp{U}{U}{\sigma}{\usubstgroup{\ptest{\ivr}}}]{\Ie}{X}
  = \iwinreg[\ptest{\usubstapp{U}{\sigma}{\ivr}}]{\Ie}{X} = \imodel{\Ie}{\usubstapp{U}{\sigma}{\ivr}} \cap X\)
  iff, by \rref{lem:usubst},
  \(\iget[state]{\Iae} \in \imodel{\Iae}{\ivr} \cap X\)
  \(= \iwinreg[\ptest{\ivr}]{\Iae}{X}\).

\item  
  \(\iwin[\usubstappp{V\cup W}{U}{\sigma}{\usubstgroup{\pchoice{\alpha}{\beta}}}]{\Ie}{X}
  = \iwinreg[\pchoice{\usubstappp{V}{U}{\sigma}{\alpha}}{\usubstappp{W}{U}{\sigma}{\beta}}]{\Ie}{X}\)
  = \(\iwinreg[\usubstappp{V}{U}{\sigma}{\alpha}]{\Ie}{X} \cup \iwinreg[\usubstappp{W}{U}{\sigma}{\beta}]{\Ie}{X}\),
  which, by IH, is equivalent to
  \(\iget[state]{\Iae} \in \iwinreg[\alpha]{\Iae}{X} \cup \iwinreg[\beta]{\Iae}{X}\)
  \(= \iwinreg[\pchoice{\alpha}{\beta}]{\Iae}{X}\).
  
\item
  \(\iwin[\usubstappp{W}{U}{\sigma}{\usubstgroup{\alpha;\beta}}]{\Ie}{X}
  = \iwinreg[\usubstappp{V}{U}{\sigma}{\alpha}; \usubstappp{W}{V}{\sigma}{\beta}]{\I}{X}\)
  = \(\iwinreg[\usubstappp{V}{U}{\sigma}{\alpha}]{\I}{\iwinreg[\usubstappp{W}{V}{\sigma}{\beta}]{\I}{X}}\) 
  iff, by \rref{lem:bound},
  \(\iwin[\usubstappp{V}{U}{\sigma}{\alpha}]{\Ie}{\iselectlike{\iwinreg[\usubstappp{W}{V}{\sigma}{\beta}]{\Ie}{X}}{\Ie}{\scomplement{\boundvarsdef{\usubstappp{V}{U}{\sigma}{\alpha}}}}}\).
  Starting conversely:\\
  \(\iwin[\alpha;\beta]{\Iae}{X}
  = \iwinreg[\alpha]{\Iae}{\iwinreg[\beta]{\Iae}{X}}\),
  iff, by IH, 
  \(\iwin[\usubstappp{V}{U}{\sigma}{\alpha}]{\Ie}{\iwinreg[\beta]{\Iae}{X}}\)
  iff, by Lem.\,\ref{lem:bound},
  \(\iwin[\usubstappp{V}{U}{\sigma}{\alpha}]{\Ie}{\iselectlike{\iwinreg[\beta]{\Iae}{X}}{\Ie}{\scomplement{\boundvarsdef{\usubstappp{V}{U}{\sigma}{\alpha}}}}}\).
  Both conditions equate:  
  \[
  {\iselectlike{\iwinreg[\usubstappp{W}{V}{\sigma}{\beta}]{\Ie}{X}}{\Ie}{\scomplement{\boundvarsdef{\usubstappp{V}{U}{\sigma}{\alpha}}}}}
  =
  {\iselectlike{\iwinreg[\beta]{\Ia}{X}}{\Ie}{\scomplement{\boundvarsdef{\usubstappp{V}{U}{\sigma}{\alpha}}}}}
  \]
  Consider any $\boundvarsdef{\usubstappp{V}{U}{\sigma}{\alpha}}$-variation $\iget[state]{\Iz}$ of $\iget[state]{\Ie}$ to show:
  \(\iwin[\usubstappp{W}{V}{\sigma}{\beta}]{\Iz}{X}\)
  iff \(\iwin[\beta]{\Iaz}{X}\).
  This holds by IH, 
  because $\iget[state]{\Iz}$ is a $V$-variation of $\iget[state]{\I}$:
  $\iget[state]{\Iz}$ is a $\boundvarsdef{\usubstappp{V}{U}{\sigma}{\alpha}}$-variation of $\iget[state]{\Ie}$,
  which, in turn, is a $U$-variation of $\iget[state]{\I}$,
  so $\iget[state]{\Iz}$ is a $(U\cup\boundvarsdef{\usubstappp{V}{U}{\sigma}{\alpha}})$-variation of $\iget[state]{\I}$,
  hence a $V$-variation by \rref{lem:usubst-taboos}. 
  
\item
{%
\newcommand{\inflop}[2][]{\tau^{#1}(#2)}%
\newcommand{\oinflop}[2][]{\varrho^{#1}(#2)}%
  The case \(\iwin[\usubstappp{V}{U}{\sigma}{\usubstgroup{\prepeat{\alpha}}}]{\Ie}{X}
  = \iwinreg[\prepeat{(\usubstappp{V}{V}{\sigma}{\alpha})}]{\I}{X}\)
  (when \(\usubstappp{U}{V}{\sigma}{\alpha}\) is defined)
  uses an equivalent inflationary fixpoint formulation \cite[Thm.\,3.5]{DBLP:journals/tocl/Platzer15}:%
  \allowdisplaybreaks%
  \begin{align*}
    \inflop[0]{X} &\mdefeq X\\
    \inflop[\kappa+1]{X} &\mdefeq X \cup \iwinreg[\usubstappp{V}{V}{\sigma}{\alpha}]{\I}{\inflop[\kappa]{X}} && \kappa+1~\text{a successor ordinal}\\
    \inflop[\lambda]{X} &\mdefeq \cupfold_{\kappa<\lambda} \inflop[\kappa]{X} && \lambda\neq0~\text{a limit ordinal}
  \intertext{%
  where the union \(\inflop[\infty]{X} = \cupfold_{\kappa<\infty} \inflop[\kappa]{X}\) over all ordinals is \(\iwinreg[\prepeat{(\usubstappp{V}{V}{\sigma}{\alpha})}]{\I}{X}\).
  A similar fixpoint works for the other side \(\iwinreg[\prepeat{\alpha}]{\Ia}{X} = \oinflop[\infty]{X}\) where:}%
\iflongversion
    \oinflop[0]{X} &\mdefeq X\\
\else\\[-1.7\baselineskip]
\fi
    \oinflop[\kappa+1]{X} &\mdefeq X \cup \iwinreg[\alpha]{\Ia}{\oinflop[\kappa]{X}} && \kappa+1~\text{a successor ordinal}\\
\iflongversion
    \oinflop[\lambda]{X} &\mdefeq \cupfold_{\kappa<\lambda} \oinflop[\kappa]{X} && \lambda\neq0~\text{a limit ordinal}
\else\\[-2\baselineskip]
\fi
  \end{align*}%
  The equivalence
  \(\iwin[\usubstappp{V}{V}{\sigma}{\usubstgroup{\prepeat{\alpha}}}]{\Ie}{X} = \inflop[\infty]{X}\)
  iff
  \(\iwin[\prepeat{\alpha}]{\Iae}{X} = \oinflop[\infty]{X}\)
  for all $U$-variations $\iget[state]{\Ie}$ of $\iget[state]{\I}$
  follows, with $V\supseteq U$ by \rref{lem:usubst-taboos}, from proving:
  \[
  \text{for all}~\kappa~
  \text{and all}~X
  ~\text{and all $V$-variations}~\iget[state]{\Ie} ~\text{of}~\iget[state]{\I}:
  ~
  \iget[state]{\Ie} \in \inflop[\kappa]{X} ~\text{iff}~ \iget[state]{\Ie} \in \oinflop[\kappa]{X}
  \]
  This is proved by induction on ordinal $\kappa$ (0, limit ordinal $\lambda\neq0$, or successor):
  \begin{enumerate}
  \item[$\kappa=0$:]
  \(\iget[state]{\Ie} \in \inflop[0]{X} ~\text{iff}~ \iget[state]{\Ie} \in \oinflop[0]{X}\), because both sets equal $X$.
  
  \item[$\lambda$:]
  \(\iget[state]{\Ie} \in \inflop[\lambda]{X} 
  = \cupfold_{\kappa<\lambda} \inflop[\kappa]{X}\)
  iff there is a $\kappa<\lambda$ such that
  \(\iget[state]{\Ie} \in \inflop[\kappa]{X}\)
  iff, by IH,
  \(\iget[state]{\Ie} \in \oinflop[\kappa]{X}\) for some $\kappa<\lambda$,
  iff
  \(\iget[state]{\Ie} \in \cupfold_{\kappa<\lambda} \oinflop[\kappa]{X}
  = \oinflop[\lambda]{X}\).
  
  \item[$\kappa+1$:]
  \(\iget[state]{\Ie} \in \inflop[\kappa+1]{X}
  = X \cup \iwinreg[\usubstappp{V}{V}{\sigma}{\alpha}]{\Ie}{\inflop[\kappa]{X}}\),
  is equivalent, by \rref{lem:bound}, to
  \(\iget[state]{\Ie} \in X \cup \iwinreg[\usubstappp{V}{V}{\sigma}{\alpha}]{\Ie}{\iselectlike{\inflop[\kappa]{X}}{\Ie}{\scomplement{\boundvarsdef{\usubstappp{V}{V}{\sigma}{\alpha}}}}}\).
  Conversely,
  \(\iget[state]{\Ie} \in \oinflop[\kappa+1]{X}
  = X \cup \iwinreg[\alpha]{\Iae}{\oinflop[\kappa]{X}}\)
  iff, by IH on $\alpha$,
  \(\iget[state]{\Ie} \in X \cup \iwinreg[\usubstappp{V}{V}{\sigma}{\alpha}]{\Ie}{\oinflop[\kappa]{X}}\)
  for any $V$-variations $\iget[state]{\Ie}$ of $\iget[state]{\I}$,
  iff, by \rref{lem:bound},
  \(\iget[state]{\Ie} \in X \cup \iwinreg[\usubstappp{V}{V}{\sigma}{\alpha}]{\Ie}{\iselectlike{\oinflop[\kappa]{X}}{\Ie}{\scomplement{\boundvarsdef{\usubstappp{V}{V}{\sigma}{\alpha}}}}}\).
  Now
  \(
  {\iselectlike{\inflop[\kappa]{X}}{\Ie}{\scomplement{\boundvarsdef{\usubstappp{V}{V}{\sigma}{\alpha}}}}}
  =
  {\iselectlike{\oinflop[\kappa]{X}}{\Ie}{\scomplement{\boundvarsdef{\usubstappp{V}{V}{\sigma}{\alpha}}}}}
  \)
  holds as follows.
  Consider any $\boundvarsdef{\usubstappp{V}{V}{\sigma}{\alpha}}$-variation $\iget[state]{\Iz}$ of $\iget[state]{\Ie}$ and show:
  \(\iget[state]{\Iz} \in \inflop[\kappa]{X}\)
  iff
  \(\iget[state]{\Iz} \in \oinflop[\kappa]{X}\),
  which is by IH on $\kappa<\kappa+1$, as $\iget[state]{\Iz}$ is a $V$-variation of $\iget[state]{\I}$:
  $\iget[state]{\Iz}$ is a $\boundvarsdef{\usubstappp{V}{V}{\sigma}{\alpha}}$-variation of $\iget[state]{\Ie}$, so by $V\supseteq\boundvarsdef{\usubstappp{V}{V}{\sigma}{\alpha}}$ from \rref{lem:usubst-taboos}, $\iget[state]{\Iz}$ is a $V$-variation of $\iget[state]{\Ie}$, which, in turn, is a $U$-variation of $\iget[state]{\I}$, hence, by $V\supseteq U$ from \rref{lem:usubst-taboos} as \(\usubstappp{V}{U}{\sigma}{\alpha}\) is defined, also a $V$-variation of $\iget[state]{\I}$, so $\iget[state]{\Iz}$ itself is a $V$-variation of $\iget[state]{\I}$.
  \end{enumerate}%
}%

\item
  \(\iwin[\usubstappp{V}{U}{\sigma}{\usubstgroup{\pdual{\alpha}}}]{\Ie}{X}
  = \iwinreg[\pdual{(\usubstappp{V}{U}{\sigma}{\alpha})}]{\Ie}{X}
  = \scomplement{\big(\iwinreg[\usubstappp{V}{U}{\sigma}{\alpha}]{\Ie}{\scomplement{X}}\big)}\)
  iff \(\inowin[\usubstappp{V}{U}{\sigma}{\alpha}]{\Ie}{\scomplement{X}}\),
  iff, by IH, \(\inowin[\alpha]{\Iae}{\scomplement{X}}\),
  iff
  \(\iget[state]{\Iae} \in \scomplement{\big(\iwinreg[\alpha]{\Iae}{\scomplement{X}}\big)}
  = \iwinreg[\pdual{\alpha}]{\Iae}{X}\).
  \qedhere
\end{compactenum}
\end{proofatend}

\subsection{Soundness}

With the uniform substitution lemmas having established the crucial equivalence of syntactic substitution and adjoint interpretation, the soundness of uniform substitution uses in proofs is now immediate.
The notation \m{\applyusubst{\sigma}{\phi}} in proof rule \irref{US} is short for \m{\usubstapp{\emptyset}{\sigma}{\phi}}, so the result of applying $\sigma$ to $\phi$ without taboos (more taboos may still arise during the substitution application), and only defined if \m{\usubstapp{\emptyset}{\sigma}{\phi}} is.
A proof rule is \emph{sound} when its conclusion is valid if all its premises are valid.

\begin{theorem}[Soundness of uniform substitution]%
  \label{thm:usubst-sound}%
  Proof rule \irref{US} is sound.%
  {\upshape\[
  \cinferenceRuleQuote{US}
  \]}%
\end{theorem}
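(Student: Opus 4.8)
The plan is to obtain soundness as an immediate corollary of the uniform substitution lemma for formulas, \rref{lem:usubst}, which already carries the entire technical burden. Recall that rule \irref{US} has premise $\phi$ and conclusion $\applyusubst{\sigma}{\phi}$, which by definition abbreviates $\usubstapp{\emptyset}{\sigma}{\phi}$, and that writing down an instance of the rule presupposes this expression is defined, i.e.\ $\sigma$ does not clash for the empty taboo set on $\phi$. Soundness then amounts to: whenever $\phi$ is valid, so is $\usubstapp{\emptyset}{\sigma}{\phi}$.

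First I would assume $\entails\phi$, i.e.\ $\imodels{\I}{\phi}$ for every interpretation $\iget[const]{\I}$ and every state $\iget[state]{\I}$, and fix an arbitrary interpretation $\iget[const]{\I}$ and state $\iget[state]{\I}$ in which to establish truth of $\usubstapp{\emptyset}{\sigma}{\phi}$. The decisive observation is that the taboo set is empty, so the hypothesis of \rref{lem:usubst} --- that $\iget[state]{\Ie}$ be a $U$-variation of $\iget[state]{\I}$ --- is vacuous for $U=\emptyset$ and is in particular met by $\iget[state]{\Ie}=\iget[state]{\I}$ itself (every state is an $\emptyset$-variation of every state). Instantiating \rref{lem:usubst} with $U=\emptyset$ and $\iget[state]{\Ie}=\iget[state]{\I}$ therefore gives $\imodels{\I}{\usubstapp{\emptyset}{\sigma}{\phi}}$ iff $\imodels{\Ia}{\phi}$, where $\iget[const]{\Ia}$ is the adjoint interpretation of $\sigma$ for $\iportray{\I}$ (at the state $\iget[state]{\I}$). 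Since $\phi$ is valid, it is true in the interpretation $\iget[const]{\Ia}$ at the state $\iget[state]{\I}$ as well, so $\imodels{\Ia}{\phi}$ holds, hence $\imodels{\I}{\usubstapp{\emptyset}{\sigma}{\phi}}$. As $\iget[const]{\I}$ and $\iget[state]{\I}$ were arbitrary, $\usubstapp{\emptyset}{\sigma}{\phi}=\applyusubst{\sigma}{\phi}$ is valid, which is exactly the soundness of \irref{US}.

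I expect no real obstacle at this stage: all the difficulty of one-pass substitution has been discharged into \rref{lem:usubst}, \rref{lem:usubst-term}, \rref{lem:usubst-HG} and the taboo bookkeeping of \rref{lem:usubst-taboos} --- especially the strengthening of the substitution lemmas to range over \emph{all} $U$-variations, which is what lets the modular induction survive the postponed admissibility checks. The only points worth spelling out carefully are the definitional ones: that $\applyusubst{\sigma}{\phi}$ is precisely $\usubstapp{\emptyset}{\sigma}{\phi}$; that an application of \irref{US} requires this to be defined (no clash), which is exactly what makes \rref{lem:usubst} applicable; and that the empty taboo set collapses the ``$U$-variation'' quantifier to a triviality, so the lemma can be read off directly at the very interpretation and state under consideration.
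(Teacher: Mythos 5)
Your proof is correct and follows essentially the same route as the paper: instantiate \rref{lem:usubst} at taboo set $\emptyset$ with $\iget[state]{\Ie}=\iget[state]{\I}$ to reduce truth of $\usubstapp{\emptyset}{\sigma}{\phi}$ in $\iportray{\I}$ to truth of $\phi$ in the adjoint $\iportray{\Ia}$, which validity of the premise supplies. One small slip worth fixing: by the paper's definition an $\emptyset$-variation of $\iget[state]{\I}$ must agree with $\iget[state]{\I}$ on $\scomplement{\emptyset}=\allvars$, so it is the \emph{most} restrictive case (only $\iget[state]{\I}$ itself qualifies), not a vacuous condition --- your parenthetical ``every state is an $\emptyset$-variation of every state'' describes $\allvars$-variations instead, but the instance you actually use ($\iget[state]{\Ie}=\iget[state]{\I}$) is valid either way, so the argument stands.
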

\begin{proof}
\def\Ia{\iadjointSubst{\sigma}{\I}}%
Let the premise $\phi$ of \irref{US} be valid, i.e., \m{\imodels{\I}{\phi}} for all interpretations $\iget[const]{\I}$ and states $\iget[state]{\I}$.
To show that the conclusion is valid, consider any $\iget[const]{\I}$ and state $\iget[state]{\I}$ and show
\(\imodels{\I}{\applyusubst{\sigma}{\phi}} = \imodel{\I}{\usubstapp{\emptyset}{\sigma}{\phi}}\).
By \rref{lem:usubst}, \(\imodels{\I}{\usubstapp{\emptyset}{\sigma}{\phi}}\) iff \(\imodels{\Ia}{\phi}\).
Now \(\imodels{\Ia}{\phi}\) holds, because \(\imodels{\I}{\phi}\) for all $\iportray{\I}$, including $\iportray{\Ia}$, by premise.
\qedhere
\end{proof}

\rref{thm:usubst-sound} is all it takes to soundly instantiate concrete axioms.
Uniform substitutions can instantiate whole inferences \cite{DBLP:journals/jar/Platzer17}, which makes it possible to avoid proof rule schemata by instantiating axiomatic proof rules consisting of pairs of concrete formulas.
This enables uniformly substituting premises and conclusions of entire proofs of \emph{locally sound} inferences, i.e., those whose conclusion is valid in any interpretation that all their premises are valid in.

\begin{theorem}[Soundness of uniform substitution of rules] \label{thm:usubst-rule}
  All uniform substitution instances for taboo $\allvars$ of locally sound inferences are locally sound:%
  \[
\linfer
{\phi_1 \quad \dots \quad \phi_n}
{\psi}
~~\text{locally sound}\quad\text{implies}\quad
\linfer%
{\usubstapp{\allvars}{\sigma}{\phi_1} \quad \dots \quad \usubstapp{\allvars}{\sigma}{\phi_n}}
{\usubstapp{\allvars}{\sigma}{\psi}}
~~\text{locally sound}
\irlabel{USR|USR}
  \]
\end{theorem}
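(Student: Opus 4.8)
The plan is to transfer local soundness across the substitution by means of the uniform substitution lemma for formulas (\rref{lem:usubst}), exploiting that the top-level taboo $\allvars$ makes every replacement closed. First I would take an arbitrary interpretation $\iget[const]{\I}$ in which all premises $\usubstapp{\allvars}{\sigma}{\phi_1},\dots,\usubstapp{\allvars}{\sigma}{\phi_n}$ are valid (all of these substitutions, including $\usubstapp{\allvars}{\sigma}{\psi}$, being defined, as required for using the substituted rule), and aim to derive that $\usubstapp{\allvars}{\sigma}{\psi}$ is valid in $\iget[const]{\I}$, which is precisely local soundness of the substituted inference.

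The key observation is that taboo sets never shrink along the recursion of \rref{fig:usubst-one} and $\allvars$ is already the largest possible taboo, so every function and predicate symbol occurring in $\phi_1,\dots,\phi_n,\psi$ is substituted by $\sigma$ at an occurrence whose taboo is $\allvars$; the clash condition there forces its replacement $r$ to satisfy $\freevarsdef{r}\cap\allvars=\emptyset$, i.e.\ $\freevarsdef{r}=\emptyset$. Hence, by \rref{def:adjointUsubst} together with the coincidence lemmas (\rref{lem:coincidence-term}, \rref{lem:coincidence}), the adjoint interpretation $\iget[const]{\Ia}$ of $\iget[const]{\I}$ does not depend on the state on any symbol occurring in the inference, so, once a reference state is fixed, it can be viewed as a single interpretation $J$ (its reading of symbols absent from $\phi_1,\dots,\phi_n,\psi$ being immaterial).

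Now, for taboo $U=\allvars$ every state is an $\allvars$-variation of every other state, so \rref{lem:usubst} yields, for every state $\iget[state]{\Ie}$, that $\imodels{\Ie}{\usubstapp{\allvars}{\sigma}{\phi_i}}$ iff $\phi_i$ is true at $J$ in state $\iget[state]{\Ie}$; thus validity of $\usubstapp{\allvars}{\sigma}{\phi_i}$ in $\iget[const]{\I}$ is equivalent to $J\models\phi_i$. From $J\models\phi_i$ for all $i$ and local soundness of the original inference I obtain $J\models\psi$, and one further application of \rref{lem:usubst} to $\psi$ (again because all states are $\allvars$-variations) turns $J\models\psi$ back into $\imodels{\Ie}{\usubstapp{\allvars}{\sigma}{\psi}}$ for every state $\iget[state]{\Ie}$, i.e.\ into validity of $\usubstapp{\allvars}{\sigma}{\psi}$ in $\iget[const]{\I}$, as needed. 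The step I expect to be the crux, and the reason the statement insists on taboo $\allvars$, is exactly the passage from the a~priori state-dependent family of adjoints to one fixed interpretation $J$: for a smaller taboo the adjoint genuinely varies with the state, so ``$\phi_i$ holds at the adjoint in every state'' would no longer be the validity of $\phi_i$ in a single interpretation and local soundness of the premise/conclusion pair could not be invoked. A minor point to get right is that $\sigma$ is constrained only at the symbols it rewrites that actually occur in the inference, so one should restrict $\sigma$ to those symbols without loss of generality (or argue that the truth of $\phi_i,\psi$ in $J$ is independent of $J$'s reading of the remaining symbols).
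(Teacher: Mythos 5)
Your proof is correct and follows essentially the same route as the paper's: apply \rref{lem:usubst} with taboo $\allvars$ (so that every state is an $\allvars$-variation), transfer validity of the premises to a single adjoint interpretation, invoke local soundness of the original inference there, and transfer the conclusion back. The only difference is how you justify working with one fixed interpretation: the paper simply fixes a reference state once and uses its adjoint $\iget[const]{\Ia}$ for all states via \rref{lem:usubst}, whereas you additionally observe that the clash conditions under taboo $\allvars$ force all replacements to be variable-free, making the adjoint literally state-independent --- a correct but not strictly necessary strengthening.
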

\begin{proof}
\def\locproof{\mathcal{D}}%
Fix any state $\iget[state]{\I}$.
Let $\locproof$ be the locally sound inference on the left and $\applyusubst{\sigma}{\locproof}$ the substituted inference on the right.
To prove $\applyusubst{\sigma}{\locproof}$ locally sound, consider any interpretation $\iget[const]{\I}$ in which all premises of $\applyusubst{\sigma}{\locproof}$ are valid, i.e.,
\(\iget[const]{\Ie}\models{\usubstapp{\allvars}{\sigma}{\phi_j}}\) for all $j$, i.e.,
\(\imodels{\Ie}{\usubstapp{\allvars}{\sigma}{\phi_j}}\) for all $\iget[state]{\Ie}$ and $j$.
By \rref{lem:usubst}, \(\imodels{\Ie}{\usubstapp{\allvars}{\sigma}{\phi_j}}\) is equivalent to
\(\imodels{\Iae}{\phi_j}\)
(since $\iget[state]{\Ie}$ is a $\allvars$-variation of $\iget[state]{\I}$),
which also holds for all $\iget[state]{\Ie}$ and $j$.

Consequently, all premises of $\locproof$ are valid in the same interpretation $\iget[const]{\Iae}$, i.e. \(\iget[const]{\Iae}\models{\phi_j}\) for all $j$.
Thus, \(\iget[const]{\Iae}\models{\psi}\) by local soundness of $\locproof$.
That is, \(\imodels{\Iae}{\psi}\) for all $\iget[state]{\Iae}$.
By \rref{lem:usubst}, \(\imodels{\Iae}{\psi}\) is equivalent to \(\imodels{\Ie}{\usubstapp{\allvars}{\sigma}{\psi}}\) (since $\iget[state]{\Ie}$ trivially is a $\allvars$-variation of $\iget[state]{\I}$),
which continues to hold for all $\iget[state]{\Ie}$.
Thus, \(\iget[const]{\Ie}\models{\usubstapp{\allvars}{\sigma}{\psi}}\), i.e., the conclusion of $\applyusubst{\sigma}{\locproof}$ is valid in $\iget[const]{\I}$, hence $\applyusubst{\sigma}{\locproof}$ is locally sound.
\qedhere
\end{proof}

\noindent
\irref{USR} marks the use of \rref{thm:usubst-rule} in proofs. 
If $n=0$ (so $\psi$ has a proof), \irref{USR} preserves local soundness for taboo-free \(\usubstapp{\emptyset}{\sigma}{\psi}\) instead of \(\usubstapp{\allvars}{\sigma}{\psi}\), as \irref{US} proves \(\usubstapp{\emptyset}{\sigma}{\psi}\) from the provable $\psi$ and soundness is equivalent to local soundness for $n=0$.

\subsection{Completeness}

Soundness is the property that every formula with a proof is valid. This is the most important consideration for something as fundamental as a uniform substitution mechanism.
But the converse question of completeness, i.e., that every valid formula has a proof, is of interest as well, especially given the fact that one-pass uniform substitutions check differently for soundness during the substitution application, which had better not lose otherwise perfectly valid proofs.

Completeness is proved in an easy modular style based on all the nontrivial findings summarized in schematic relative completeness results, first for schematic \dGL \cite[Thm.\,4.5]{DBLP:journals/tocl/Platzer15}, and then for a uniform substitution formulation of \dL \cite[Thm.\,40]{DBLP:journals/jar/Platzer17}.
The combination of both schematic completeness results makes it fairly easy to lift completeness to the setting in this paper.
The challenge is to show that all instances of axiom schemata that are used for \dGL's schematic relative completeness result are provable by one-pass uniform substitution.

A \dGL formula $\phi$ is called \emph{surjective} iff rule \irref{US} can instantiate $\phi$ to any of its axiom schema instances, i.e., those formulas that are obtained by just replacing game symbols $a$ uniformly by any game, etc.
An axiomatic rule is called \emph{surjective} iff \irref{USR} of \rref{thm:usubst-rule} can instantiate it to any of its proof rule schema instances.

\begin{lemma}[Surjective axioms] \label{lem:surjectiveaxiom}
  If $\phi$ is a \dGL formula that is built only from  game symbols but no function or predicate symbols, then $\phi$ is surjective.
  Axiomatic rules consisting of surjective \dGL formulas are surjective.
\end{lemma}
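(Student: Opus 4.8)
The plan is to exploit the observation that the only side conditions in \rref{fig:usubst-one}---the ones whose violation makes a substitution \emph{clash}---occur in the function symbol case $f(\theta)$ and the predicate symbol case $p(\theta)$, and both of these cases are simply absent from a formula built only from game symbols.

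For the first claim, fix an arbitrary axiom schema instance $\phi'$ of $\phi$: by definition $\phi'$ arises from $\phi$ by uniformly replacing each game symbol $a$ occurring in $\phi$ by some hybrid game $\gamma_a$ (there is nothing else to replace, since $\phi$ has no function or predicate symbols). Let $\sigma$ be the uniform substitution with $\applysubst{\sigma}{a}=\gamma_a$ on exactly those game symbols, and the identity otherwise. I would then prove, by a plain simultaneous structural induction on the \dGL formula $\phi$ and on the hybrid games occurring in it (no nested ordinal induction over loop fixpoints is needed, since this is a purely syntactic statement and $\prepeat{\alpha}$ just recurses into the structurally smaller $\alpha$), the following: for every taboo set $U\subseteq\allvars$, the application $\usubstapp{U}{\sigma}{\phi}$ is defined and equals the formula obtained from $\phi$ by uniformly replacing each $a$ by $\gamma_a$ (and likewise $\usubstappp{V}{U}{\sigma}{\alpha}$ is defined and is the analogous replacement for every subgame $\alpha$). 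In the induction: terms are left untouched, because $\sigma$ changes no symbol occurring in a term; every propositional, quantifier, modality, assignment, ODE, test, choice, composition, repetition, and dual case merely propagates $\sigma$ homomorphically, possibly enlarging the taboo set passed along, but imposes no side condition; the game symbol case returns $\gamma_a$ outright with no side condition whatsoever; and the only two cases that carry a clash condition, $f(\theta)$ and $p(\theta)$, never arise because $\phi$ contains no function or predicate symbols. Hence $\usubstapp{U}{\sigma}{\phi}$ is defined for \emph{every} $U$; in particular $\applyusubst{\sigma}{\phi}=\usubstapp{\emptyset}{\sigma}{\phi}=\phi'$. Since $\phi'$ was an arbitrary schema instance, \irref{US} reaches every instance of $\phi$, so $\phi$ is surjective.

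For the second claim, consider an axiomatic rule with premises $\phi_1,\dots,\phi_n$ and conclusion $\psi$, all of which are built only from game symbols (and hence surjective by the first part). A proof rule schema instance of it is a single globally consistent choice of hybrid games $\gamma_a$ replacing the game symbols $a$ throughout all of the premises and the conclusion. Take the one uniform substitution $\sigma$ with $\applysubst{\sigma}{a}=\gamma_a$. Applying the claim from the first part with $U=\allvars$ to each of $\phi_1,\dots,\phi_n$ and to $\psi$ shows that $\usubstapp{\allvars}{\sigma}{\phi_1},\dots,\usubstapp{\allvars}{\sigma}{\phi_n},\usubstapp{\allvars}{\sigma}{\psi}$ are all defined and equal to the intended uniform replacements. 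Consequently \irref{USR} of \rref{thm:usubst-rule} instantiates the rule to exactly the given schema instance, and since the instance was arbitrary, the axiomatic rule is surjective.

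This proof is essentially bookkeeping, and I expect no genuine obstacle. The one point to get right is the induction setup: because \dGL formulas and games are mutually recursive, the claim of the first part must be proved by one simultaneous induction over both, and it must be stated for \emph{all} taboo sets $U$ rather than only $U=\emptyset$, so that the inductive calls with enlarged taboos---for example $U\cup\{x\}$ under a quantifier, $U\cup\{x,\D{x}\}$ under an ODE, and $\allvars$ inside a differential---are covered by the induction hypothesis. Note, however, that the particular value of $U$ never affects the output here, precisely because the taboo set is consulted only at function and predicate replacement sites, of which there are none; the only other thing one must check is that ``schema instance'' is read as a uniform, globally consistent replacement of game symbols, which is exactly what a single uniform substitution $\sigma$ performs.
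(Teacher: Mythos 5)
Your proposal is correct and follows essentially the same route as the paper: a simultaneous structural induction on formulas and games showing that a substitution replacing only game symbols never triggers a clash (since the only side conditions in Fig.~2 sit at function and predicate replacement sites) and hence reproduces any uniform schema instance, with the rule case then reduced to \irref{US}/\irref{USR} on the constituent formulas. The only organizational differences are that you fix one global $\sigma$ upfront and carry the claim for all taboo sets $U$, whereas the paper builds $\sigma$ bottom-up as a union of the substitutions supplied by the induction hypotheses and works at the fixed taboo $\allvars$ (using \rref{lem:usubst-taboos} to keep the output taboo at $\allvars$); both variants are sound and cover the $U=\emptyset$ and $U=\allvars$ instantiations needed for \irref{US} and \irref{USR}.
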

\begin{proofatend}
Let $\tilde{\phi}$ be the desired instance of schema $\phi$. So, $\tilde{\phi}$ is obtained from $\phi$ by uniformly replacing each game symbol $a$ by some hybrid game, na\"ively but consistently (same replacement for $a$ in all places).
A straightforward structural induction on $\phi$ proves that there is a uniform substitution $\sigma$ such that \(\usubstapp{\allvars}{\sigma}{\phi} = \tilde{\phi}\)
simultaneously with showing for games $\alpha$ with desired instance $\tilde{\alpha}$ that there is a uniform substitution $\sigma$ such that \(\usubstappp{\allvars}{\allvars}{\sigma}{\alpha} = \tilde{\alpha}\).
The output taboo $W$ of \(\usubstappp{W}{\allvars}{\sigma}{\alpha}\) equals $\allvars$ by \rref{lem:usubst-taboos}, because all variables $\allvars$ are already input taboos.
Nothing needs to be shown for terms as game symbols cannot occur in terms.
\begin{compactenum}
\item Case \(\phi\land\psi\) with desired instance \(\tilde{\phi}\land\tilde{\psi}\)
  (which has to have this shape to qualify as a schema instance).
  By IH, there are substitutions $\sigma,\tau$ such that \(\usubstapp{\allvars}{\sigma}{\phi}=\tilde{\phi}\)
  and \(\usubstapp{\allvars}{\tau}{\psi}=\tilde{\psi}\).
  The union \(\usubstjoin{\phi}{\psi}\) is defined, because the same replacements have been used consistently in all occurrences of the instantiation.
  Thus, \(\usubstapp{\allvars}{(\usubstjoin{\sigma}{\tau})}{\usubstgroup{\phi\land\psi}}
  =\usubstapp{\allvars}{(\usubstjoin{\sigma}{\tau})}{\phi} \land \usubstapp{\allvars}{(\usubstjoin{\sigma}{\tau})}{\psi}
  = \usubstapp{\allvars}{\sigma}{\phi} \land \usubstapp{\allvars}{\tau}{\psi} 
  = \tilde{\phi} \land \tilde{\psi}\)
  as desired.
  The proof is accordingly for $\lnot$ etc.
  
\item Case \(\lexists{x}{\phi}\) with desired instance \(\lexists{x}{\tilde{\phi}}\).
  By IH, there is a substitution $\sigma$ such that \(\usubstapp{\allvars}{\sigma}{\phi}=\tilde{\phi}\).
  Thus, \(\usubstapp{\allvars}{\sigma}{\usubstgroup{\lexists{x}{\phi}}}
  = \lexists{x}{\usubstapp{\allvars\cup\{x\}}{\sigma}{\phi}}
  = \lexists{x}{\usubstapp{\allvars}{\sigma}{\phi}}
  = \lexists{x}{\tilde{\phi}}\)
  as desired.

\item Case \(\ddiamond{\alpha}{\phi}\) with desired instance \(\ddiamond{\tilde{\alpha}}{\tilde{\phi}}\).
  By IH, there are substitutions $\sigma,\tau$ such that \(\usubstappp{\allvars}{\allvars}{\sigma}{\alpha}=\tilde{\alpha}\)
  and \(\usubstapp{\allvars}{\tau}{\phi}=\tilde{\phi}\).
  Thus, the union \(\usubstjoin{\sigma}{\tau}\) is defined and
  \(\usubstapp{\allvars}{(\usubstjoin{\sigma}{\tau})}{\usubstgroup{\ddiamond{\alpha}{\psi}}}
  =\ddiamond{\usubstappp{\allvars}{\allvars}{(\usubstjoin{\sigma}{\tau})}{\alpha}} {\usubstapp{\allvars}{(\usubstjoin{\sigma}{\tau})}{\phi}}
  = \ddiamond{\usubstappp{\allvars}{\allvars}{\sigma}{\alpha}} {\usubstapp{\allvars}{\tau}{\phi}} 
  = \ddiamond{\tilde{\alpha}}{\tilde{\phi}}\)
  as desired.

\item Case $a$ of a game symbol with desired instance $\tilde{\alpha}$
  is handled with the substitution \(\sigma = \usubstlist{\usubstmod{a}{\tilde{\alpha}}}\),
  which satisfies \(\usubstappp{\allvars}{\allvars}{\sigma}{a}=\applyusubst{\sigma}{a}=\tilde{\alpha}\)
  as desired.
  
\item Case \(\pevolvein{\D{x}=\theta}{\psi}\) with desired instance \(\pevolvein{\D{x}=\tilde{\theta}}{\tilde{\psi}}\).
  By IH, there are substitutions $\sigma,\tau$ such that \(\usubstapp{\allvars}{\sigma}{\theta}=\tilde{\theta}\)
  and \(\usubstapp{\allvars}{\tau}{\psi}=\tilde{\psi}\).
  Thus, the union \(\usubstjoin{\sigma}{\tau}\) is defined and
  \(\usubstapp{\allvars}{(\usubstjoin{\sigma}{\tau})}{\theta}
  = \usubstapp{\allvars}{\sigma}{\theta}=\tilde{\theta}\)
  and \(\usubstapp{\allvars}{(\usubstjoin{\sigma}{\tau})}{\psi}
  = \usubstapp{\allvars}{\tau}{\psi}=\tilde{\psi}\),
  hence,
  \(\usubstapp{\allvars}{(\usubstjoin{\sigma}{\tau})}{\usubstgroup{\pevolvein{\D{x}=\theta}{\psi}}}
  = (\pevolvein{\D{x}=\tilde{\theta}}{\tilde{\psi}})\)
  as desired.
  
\item Case \(\pchoice{\alpha}{\beta}\) with desired instance \(\pchoice{\tilde{\alpha}}{\tilde{\beta}}\).
  By IH there are substitutions $\sigma,\tau$ such that \(\usubstappp{\allvars}{\allvars}{\sigma}{\alpha}=\tilde{\alpha}\)
  and \(\usubstappp{\allvars}{\allvars}{\tau}{\beta}=\tilde{\beta}\).
  Thus, the union \(\usubstjoin{\sigma}{\tau}\) is defined and
  \(\usubstapp{\allvars}{(\usubstjoin{\sigma}{\tau})}{\usubstgroup{\pchoice{\alpha}{\beta}}}
  =\pchoice{\usubstappp{\allvars}{\allvars}{(\usubstjoin{\sigma}{\tau})}{\alpha}} {\usubstappp{\allvars}{\allvars}{(\usubstjoin{\sigma}{\tau})}{\beta}}
  = \pchoice{\usubstappp{\allvars}{\allvars}{\sigma}{\alpha}}{\usubstappp{\allvars}{\allvars}{\tau}{\beta}}
  = \pchoice{\tilde{\alpha}}{\tilde{\beta}}\)
  using that \(\allvars=\allvars\cup\allvars\).

\item Case \(\alpha;\beta\) with desired instance \(\tilde{\alpha};\tilde{\beta}\).
  By IH there are substitutions $\sigma,\tau$ such that \(\usubstappp{\allvars}{\allvars}{\sigma}{\alpha}=\tilde{\alpha}\)
  and \(\usubstappp{\allvars}{\allvars}{\tau}{\beta}=\tilde{\beta}\).
  Thus, the union \(\usubstjoin{\sigma}{\tau}\) is defined and
  \(\usubstapp{\allvars}{(\usubstjoin{\sigma}{\tau})}{\usubstgroup{\alpha;\beta}}
  =\usubstappp{\allvars}{\allvars}{(\usubstjoin{\sigma}{\tau})}{\alpha}; \usubstappp{\allvars}{\allvars}{(\usubstjoin{\sigma}{\tau})}{\beta}
  = \usubstappp{\allvars}{\allvars}{\sigma}{\alpha}; \usubstappp{\allvars}{\allvars}{\tau}{\beta} 
  = \tilde{\alpha};\tilde{\beta}\)
  as desired.

\item Case \(\prepeat{\alpha}\) with desired instance \(\prepeat{\tilde{\alpha}}\).
  By IH there is a substitution $\sigma$ such that \(\usubstappp{\allvars}{\allvars}{\sigma}{\alpha}=\tilde{\alpha}\).
  Thus,
  \(\usubstapp{\allvars}{\sigma}{\usubstgroup{\prepeat{\alpha}}}
  =\prepeat{(\usubstappp{\allvars}{\allvars}{\sigma}{\alpha})}
  = \prepeat{\tilde{\alpha}}\)
  as desired, because \(\usubstappp{\allvars}{\allvars}{\sigma}{\alpha}\) is defined.
\end{compactenum}
Case $\pdual{\alpha}$ is accordingly.
Axiomatic proof rules built from surjective formulas are surjective, because \irref{USR} can instantiate the rule to any instance as long as \irref{US} can instantiate all premises and the conclusion to any instance.
\qedhere
\end{proofatend}

Instead of following previous completeness arguments for uniform substitution \cite{DBLP:conf/cade/Platzer18}, this paper presents a pure game-style uniform substitution formulation in \rref{fig:dGL} of a \dGL axiomatization that makes the overall completeness proof most straightforward.
For that purpose, the \dGL axiomatization in \rref{fig:dGL} uses properties $\pusall$ of a game symbol $c$, which, as a game, can impose arbitrary conditions on the state even for a trivial postcondition (the formula $\ltrue$ is always true).

\begin{figure}[tb]
  \centering
  \renewcommand*{\irrulename}[1]{\text{#1}}%
  \renewcommand{\linferenceRuleNameSeparation}{~~}%
  \newdimen\linferenceRulehskipamount%
  \linferenceRulehskipamount=1mm%
  \linferenceRulevskipamount=0.4em%
  \begin{calculuscollections}{\columnwidth}
    \begin{calculus}
      \cinferenceRule[box|$\dibox{\cdot}$]{box axiom}
      {\linferenceRule[equiv]
        {\lnot\ddiamond{a}{\lnot \pusall}}
        {\axkey{\dbox{a}{\pusall}}}
      }
      {}
      \cinferenceRule[assigndeq|$\didia{:=}_{=}$]{assignment / equational axiom}
      {\linferenceRule[equiv]
        {\lexists{x}{(x=f \land \pusall)}}
        {\axkey{\ddiamond{\pupdate{\umod{x}{f}}}{\pusall}}}
      }
      {}%
      \cinferenceRule[DSd|DS]{(constant) differential equation solution} %
      {\linferenceRule[viuqe]
        {\axkey{\ddiamond{\pevolve{\D{x}=f}}{\pusall}}}
        {\lexists{t{\geq}0}{\ddiamond{\pupdate{\pumod{x}{x{+}f\itimes t}}}{\ddiamond{\pupdate{\pumod{\D{x}}{f}}}{\pusall}}}}
      }
      {}
      \cinferenceRule[testd|$\didia{?}$]{test}
      {\linferenceRule[equiv]
        {q \land p}
        {\axkey{\ddiamond{\ptest{q}}{p}}}
      }{}
      \cinferenceRule[choiced|$\didia{\cup}$]{axiom of nondeterministic choice}
      {\linferenceRule[equiv]
        {\ddiamond{a}{\pusall} \lor \ddiamond{b}{\pusall}}
        {\axkey{\ddiamond{\pchoice{a}{b}}{\pusall}}}
      }{}
      \cinferenceRule[composed|$\didia{{;}}$]{composition}
      {\linferenceRule[equiv]
        {\ddiamond{a}{\ddiamond{b}{\pusall}}}
        {\axkey{\ddiamond{a;b}{\pusall}}}
      }{}
      \cinferenceRule[iterated|$\didia{{}^*}$]{iteration/repeat unwind pre-fixpoint, even fixpoint}
      {\linferenceRule[equiv]
        {\pusall \lor \ddiamond{a}{\ddiamond{\prepeat{a}}{\pusall}}}
        {\axkey{\ddiamond{\prepeat{a}}{\pusall}}}
      }{}
      \cinferenceRule[duald|$\didia{{^d}}$]{dual}
      {\linferenceRule[equiv]
        {\lnot\ddiamond{a}{\lnot \pusall}}
        {\axkey{\ddiamond{\pdual{a}}{\pusall}}}
      }{}
    \end{calculus}%
    \hfill%
    \begin{calculus}
      \cinferenceRule[M|M]{$\ddiamond{}{}$ monotone / $\ddiamond{}{}$-generalization} %
      {\linferenceRule[formula]
        {\pusall\limply \qusall}
        {\ddiamond{a}{\pusall}\limply\ddiamond{a}{\qusall}}
      }{}
      \cinferenceRule[FP|FP]{iteration is least fixpoint / reflexive transitive closure RTC, equivalent to invind in the presence of R}
      {\linferenceRule[formula]
        {\pusall \lor \ddiamond{a}{\qusall} \limply \qusall}
        {\ddiamond{\prepeat{a}}{\pusall} \limply \qusall}
      }{}
      \cinferenceRule[MP|MP]{modus ponens}
      {\linferenceRule[formula]
        {p \quad p\limply q}
        {q}
      }{}
      \cinferenceRule[gena|$\forall$]{$\forall{}$ generalisation}
      {\linferenceRule[formula]
        {\pusall}
        {\lforall{x}{\pusall}}
      }{}
    \end{calculus}%
  \end{calculuscollections}
  \vspace*{-\baselineskip} %
  \caption{Differential game logic axioms and axiomatic proof rules}
  \label{fig:dGL}
\end{figure}%

All axioms of \rref{fig:dGL}, except test \irref{testd}, equational assignment \irref{assigndeq}, and constant solution \irref{DSd}, are surjective by \rref{lem:surjectiveaxiom}.
The \irref{US} requirement that no substitute of $f$ may depend on $x$ is important for the soundness of \irref{DSd} and \irref{assigndeq}.
Axiom \irref{testd} is surjective, as it has no bound variables, so generates no taboos and none of its instances clash:
\(
\usubstapp{\emptyset}{\sigma}{\usubstgroup{\ddiamond{\ptest{q}}{p} \lbisubjunct q \land p}}
=
(\ddiamond{\usubstappp{\emptyset}{\emptyset}{\sigma}{q}}{\usubstapp{\emptyset}{\sigma}{p}} \lbisubjunct \usubstapp{\emptyset}{\sigma}{q} \land \usubstapp{\emptyset}{\sigma}{p})
\).
Similarly, rule \irref{MP} is surjective \cite{DBLP:journals/jar/Platzer17}, and the other rules are surjective by \rref{lem:surjectiveaxiom}.
Other differential equation axioms are elided but work as previously \cite{DBLP:journals/jar/Platzer17}.

Besides rule \irref{US}, \emph{bound variable renaming} (rule \irref{BR}) is the only schematic principle, mostly for generalizing assignment axiom \irref{assigndeq} to other variables.

\begin{lemma}[Bound renaming] \label{lem:brename}
Rule \irref{BR} is locally sound, where $\urename[\psi]{x}{y}$ is the result of uniformly renaming $x$ to $y$ in $\psi$ (also $\D{x}$ to $\D{y}$ but no $\D[2]{x},\D[3]{x}$ etc.\ or game symbols occur in $\psi$, where the rule \irref{BR} for \(\dbox{\pupdate{\pumod{x}{\theta}}}{\psi}\) is accordingly):%
{\upshape\[
\cinferenceRule[BR|BR]{bound renaming}
{\linferenceRule[formula]
  {\phi \limply \ddiamond{\pupdate{\pumod{y}{\theta}}}{\ddiamond{\Dupdate{\Dumod{\D{y}}{\D{x}}}}{\urename[\psi]{x}{y}}}}
  {\phi \limply \ddiamond{\pupdate{\pumod{x}{\theta}}}{\psi}}
}{y,\D{y}\not\in\psi}
\]}%
\end{lemma}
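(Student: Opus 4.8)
The plan is to prove local soundness of \irref{BR} directly from the \dGL semantics, using a uniform renaming lemma together with the coincidence \rref{lem:coincidence}. Fix an interpretation $\iget[const]{\I}$ in which the premise is valid and an arbitrary state $\iget[state]{\I}$ with $\imodels{\I}{\phi}$, and write $e\mdefeq\ivaluation{\I}{\theta}$. By \rref{def:HG-semantics} the conclusion holds at $\iget[state]{\I}$ iff $\imodels{\imodif[state]{\I}{x}{e}}{\psi}$, so that is the goal. Unfolding the premise at $\iget[state]{\I}$ with two applications of the assignment clause of \rref{def:HG-semantics} gives that $\urename[\psi]{x}{y}$ holds under $\iget[const]{\I}$ at the state $\mu$ that agrees with $\iget[state]{\I}$ everywhere except $\mu(y)=e$ and $\mu(\D{y})=\iget[state]{\I}(\D{x})$. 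Here one uses that $\theta$ is evaluated at the pre-assignment state $\iget[state]{\I}$ in both the premise and the conclusion, yielding the \emph{same} value $e$, and that the assignment to $y$ does not touch $\D{x}$, so $\D{x}$ still carries $\iget[state]{\I}(\D{x})$ when $\D{y}$ is assigned the value of $\D{x}$.

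Next I would invoke the uniform renaming lemma for \dGL formulas from prior uniform substitution work \cite{DBLP:journals/jar/Platzer17} (or reprove it by a routine simultaneous structural induction on terms, formulas, and games): since the side condition $y,\D{y}\not\in\psi$ makes renaming $x$ to $y$ (and $\D{x}$ to $\D{y}$) in $\psi$ coincide with the transposition $x\leftrightarrow y$, $\D{x}\leftrightarrow\D{y}$, a state satisfies $\urename[\psi]{x}{y}$ under $\iget[const]{\I}$ iff the state obtained from it by swapping the values at $x,y$ and at $\D{x},\D{y}$ satisfies $\psi$ under $\iget[const]{\I}$. Applying this to $\mu$ yields that $\psi$ holds under $\iget[const]{\I}$ at the swapped state $\hat\mu$. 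A short computation then identifies $\hat\mu$ with $\imodif[state]{\I}{x}{e}$ up to the variables $y,\D{y}$: indeed $\hat\mu(x)=\mu(y)=e$ and $\hat\mu(\D{x})=\mu(\D{y})=\iget[state]{\I}(\D{x})$, while on every variable outside $\{x,y,\D{x},\D{y}\}$ the states $\hat\mu$, $\mu$, $\iget[state]{\I}$, and $\imodif[state]{\I}{x}{e}$ all agree; hence $\hat\mu$ and $\imodif[state]{\I}{x}{e}$ agree on $\scomplement{\{y,\D{y}\}}$.

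Finally, the side condition $y,\D{y}\not\in\psi$ gives $\{y,\D{y}\}\cap\freevarsdef{\psi}=\emptyset$, so the coincidence \rref{lem:coincidence} transfers truth of $\psi$ from $\hat\mu$ to $\imodif[state]{\I}{x}{e}$, which is exactly the goal; hence the conclusion is valid in $\iget[const]{\I}$. The variant of \irref{BR} for $\dbox{\pupdate{\pumod{x}{\theta}}}{\psi}$ is handled identically, since deterministic assignment makes the box and diamond modalities agree there. I expect the only real obstacle to be the bookkeeping in the renaming step — tracking exactly which coordinates the transposition moves and checking that the result lands back, modulo the non-free variables $y,\D{y}$, on the single-assignment state $\imodif[state]{\I}{x}{e}$; everything else is routine unfolding of the semantics plus the renaming and coincidence lemmas.
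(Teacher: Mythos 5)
Your proposal is correct and follows essentially the same route as the paper's proof: unfold the assignment semantics on both sides, apply a swap-based uniform renaming fact (proved by routine structural induction) relating $\psi$ and $\urename[\psi]{x}{y}$ under transposition of the values of $x,y$ and $\D{x},\D{y}$, and then use coincidence (\rref{lem:coincidence}) to discard the variables on which the two states may still disagree. The only cosmetic difference is the direction in which the swap and coincidence steps are applied (you discard $y,\D{y}$, not free in $\psi$ by the side condition, whereas the paper discards $x,\D{x}$, which do not occur in $\urename[\psi]{x}{y}$), which does not change the substance of the argument.
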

\begin{proofatend}
This proof is the only one using that no higher-order differential variables $\D[(i)]{x}$ for $i\geq2$ occur.
It also assumes that no game symbols $a$ occur, because \(\urename[a]{x}{y}\) has no syntactic representation.
Local soundness follows from:
\[
\ddiamond{\pupdate{\pumod{x}{\theta}}}{\psi}
\lbisubjunct \ddiamond{\pupdate{\pumod{y}{\theta}}}{\ddiamond{\Dupdate{\Dumod{\D{y}}{\D{x}}}}{\urename[\psi]{x}{y}}}
\quad(y,\D{y}\not\in\psi)
\]
Consider any state $\iget[state]{\I}$ in which to show this equivalence.
Then \(\imodels{\I}{\ddiamond{\pupdate{\pumod{x}{\theta}}}{\psi}}\)
iff \(\modif{\iget[state]{\I}}{x}{\ivaluation{\I}{\theta}} \in \imodel{\I}{\psi}\)
iff, by \rref{eq:urenameadjoint} below, 
\({\modif{\iget[state]{\I}}{y}{\ivaluation{\I}{\theta}}}\modif{\,}{\D{y}}{\iget[state]{\I}(\D{x})} \in \imodel{\I}{\urename[\psi]{x}{y}}\)
iff \(\imodels{\I}{\ddiamond{\pupdate{\pumod{y}{\theta}}}{\ddiamond{\Dupdate{\Dumod{\D{y}}{\D{x}}}}{\urename[\psi]{x}{y}}}}\).
The values of $x,\D{x}$ are irrelevant for $\urename[\psi]{x}{y}$ by \rref{lem:coincidence}. No $\D[(i)]{y}$ for $i\geq2$ occur.
It uses a fact about uniform renaming of $\D[(i)]{x}$ to $\D[(i)]{y}$ and vice versa, for all $i$:
\begin{equation}
\imodels{\I}{\psi} ~~\text{iff}~~ {\modif{\iget[state]{\I}}{\D[(i)]{x}}{\iget[state]{\I}(\D[(i)]{y})}}\modif{}{\D[(i)]{y}}{\iget[state]{\I}(\D[(i)]{x})} \in \imodel{\I}{\urename[\psi]{x}{y}}
~\text{where the state is modified for all~$i$}
\tag{$\ast$}
\label{eq:urenameadjoint}
\end{equation}
Property \rref{eq:urenameadjoint} is proved by straightforward induction on the structure of $\psi$ using that $x$ and $\D{x}$ etc.\ are consistently swapped with $y$ and $\D{y}$ etc.\ syntactically in the uniformly renamed formula \(\urename[\psi]{x}{y}\) as well as semantically in the state.
\qedhere
\end{proofatend}

\newcommand{\reduct}[1]{#1^\flat}%
\newcommand{\LBase}{\textit{L}\xspace}%

\begin{theorem}[Relative completeness] \label{thm:dGL-complete}%
  The \dGL calculus is a \emph{sound and complete axiomatization} of hybrid games relative to \emph{any} differentially expressive logic \LBase, i.e., every valid \dGL formula is provable in \dGL from \LBase tautologies.
\end{theorem}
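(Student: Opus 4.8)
The plan is to derive completeness from the schematic relative completeness of \dGL already established in \cite[Thm.\,4.5]{DBLP:journals/tocl/Platzer15} together with the uniform substitution completeness for \dL in \cite[Thm.\,40]{DBLP:journals/jar/Platzer17}, by showing that every axiom-schema and rule-schema instance those arguments rely on is reachable in the one-pass uniform substitution calculus of \rref{fig:dGL} via \irref{US}, \irref{USR}, and \irref{BR}. Soundness is the easy direction and comes essentially for free from the machinery already built: \irref{US} is sound by \rref{thm:usubst-sound}, \irref{USR} by \rref{thm:usubst-rule}, \irref{BR} by \rref{lem:brename}, and each concrete axiom of \rref{fig:dGL} is valid (resp.\ each concrete rule is locally sound) by direct inspection against \rref{def:dGL-semantics} and \rref{def:HG-semantics} --- where it matters that the clash condition of \rref{fig:usubst-one} is exactly what blocks the unsound instances of \irref{DSd} and \irref{assigndeq} whose substitute for $f$ would mention the bound variable $x$.

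For completeness, fix a valid \dGL formula $\phi$. By \cite[Thm.\,4.5]{DBLP:journals/tocl/Platzer15}, $\phi$ has a proof in the schematic \dGL calculus from \LBase tautologies, where the differentially expressive base logic \LBase supplies the first-order and real-arithmetic reasoning and, via the solution and differential-equation axioms, the reduction of modalities over continuous dynamics. It then remains to replay this proof one inference at a time: each schematic axiom instance is produced as an \irref{US}-instance of the matching concrete axiom of \rref{fig:dGL}, and each schematic rule instance as an \irref{USR}-instance of the matching concrete rule, possibly preceded by \irref{BR}. The modality-structural axioms \irref{box}, \irref{choiced}, \irref{composed}, \irref{iterated}, \irref{duald} and the rules \irref{M}, \irref{FP}, \irref{gena} are built solely from game symbols (with $\pusall,\qusall$ themselves built only from the game symbols $c,d$, so that instantiating ``the postcondition'' to an arbitrary formula is unconditional game-symbol substitution), hence are surjective by \rref{lem:surjectiveaxiom}; \irref{MP} is surjective by the \dL analysis \cite{DBLP:journals/jar/Platzer17}. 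Axiom \irref{testd} is surjective since it binds no variable, generates no taboos, and so cannot clash. For \irref{assigndeq} and \irref{DSd}, the instances called for are precisely those whose substitute for $f$ omits the bound variable $x$ --- the same restriction already carried by the schematic side conditions --- so $\sigma=\{f\mapsto\tilde f,\,p(\usarg)\mapsto\tilde p\}$ does not clash and \irref{US} delivers the instance; assignments or differential equations on a variable $y$ other than the fixed $x$ of \rref{fig:dGL} are obtained by first applying \irref{BR} (\rref{lem:brename}), and the remaining elided differential-equation axioms are treated exactly as in the \dL uniform substitution calculus \cite{DBLP:journals/jar/Platzer17} since they contain no game symbols.

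I expect the principal difficulty to be careful bookkeeping rather than a new idea: one must verify, axiom by axiom and rule by rule, that the cumulative taboo sets threaded through nested modalities by \rref{fig:usubst-one} never forbid a variable that a schematically admissible instance genuinely introduces free --- notably in the loop reduction via \irref{FP} and \irref{iterated} and in the ODE reduction via \irref{DSd}. This is exactly where \rref{lem:usubst-taboos} (monotone taboo computation whose output includes the correct bound variables) and the antimonotonicity of substitutions in taboos are used: because the schematic side conditions were designed to keep the schematic calculus sound, and because the clash check of \rref{fig:usubst-one} rejects the replacements that would be unsound, the one-pass notion of admissibility and the schematic one agree on the instances that actually occur, so every schematically admissible instance is one-pass admissible. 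Splicing \irref{US}, \irref{USR}, and \irref{BR} in at the leaves where schemata were instantiated thus turns the schematic proof of $\phi$ into a \dGL proof of $\phi$ from \LBase tautologies, which --- together with soundness above --- establishes the claim.
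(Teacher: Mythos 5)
Your overall strategy---replay the schematic completeness proofs of \cite[Thm.\,4.5]{DBLP:journals/tocl/Platzer15} and \cite[Thm.\,40]{DBLP:journals/jar/Platzer17}, producing each needed instance by \irref{US}/\irref{USR} with \rref{lem:surjectiveaxiom} and patching assignments with \irref{BR}---is exactly the paper's strategy, and your soundness direction is fine. But two of the three places where the paper has to do genuine work are missing or misdiagnosed. First, a small gap: the axioms of \rref{fig:dGL} state postconditions as $\pusall=\ddiamond{c}{\ltrue}$, and game-symbol substitution alone only ever yields $\ddiamond{\tilde{c}}{\ltrue}$, never an arbitrary postcondition $\psi$; the paper obtains $\psi$ by instantiating $c\mapsto\ptest{\psi}$ and then discharging $\ddiamond{\ptest{\psi}}{\ltrue}\lbisubjunct\psi\land\ltrue$ with an additional use of \irref{testd} and first-order reasoning in \LBase, a step your ``unconditional game-symbol substitution'' remark skips. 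Second, and more seriously, your treatment of \irref{assigndeq} is wrong as stated: the completeness argument needs instances $\ddiamond{\pupdate{\pumod{y}{\theta}}}{G}$ in which $\theta$ mentions $y$ (e.g.\ $y:=y+1$), so the instances called for are \emph{not} ``precisely those whose substitute for $f$ omits the bound variable''---those instances clash. \irref{BR} is needed not merely to move from the axiom's literal $x$ to a differently named variable, but to rename to a variable $x$ that is \emph{fresh} for both $\theta$ and $G$; this drags in the differential assignment $\D{x}:=\D{y}$ and the renamed postcondition, after which two applications of \irref{assigndeq} reduce the goal to a first-order formula of lower complexity that is closed by the induction hypothesis. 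The paper also needs the derivable stuttering identity $G\lbisubjunct\ddiamond{\pupdate{\pumod{x}{x}}}{G}$ as a by-product of this patch.

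Third, the diamond-loop case is not ``careful bookkeeping'' but a separate idea. To derive $F\limply\ddiamond{\prepeat{\beta}}{G}$, the paper introduces a fresh predicate symbol $p$ applied to the finite vector $x$ of free variables of $\ddiamond{\prepeat{\beta}}{G}$, derives the least-pre-fixpoint implication for $p(x)$ by the induction hypothesis (the games lost one repetition), and only then applies \irref{US} with \(\sigma=\usubstlist{\usubstmod{p(\usarg)}{\ddiamond{\pupdate{\pumod{x}{\usarg}}}{\ddiamond{\prepeat{\beta}}{G}}}}\), whose non-clashing rests precisely on the containment $\freevars{\ddiamond{\prepeat{\beta}}{G}}\subseteq\{x\}$; three uses of the stuttering identity and an instance of \irref{iterated} (via \rref{lem:surjectiveaxiom}) then finish the derivation. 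Without this construction the replay does not go through, because the semantic fixpoint witness of the schematic proof is not directly available as a substitution target. So the skeleton of your argument matches the paper, but the assignment and loop cases need these repairs before the proof is complete.
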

\begin{proofatend}
\let\Oracle\LBase%
\newcommand{\precondf}{F}%
\newcommand{\postcondf}{G}%
The axioms and axiomatic rules in \rref{fig:dGL} are concrete instances of sound schemata or rules from prior work \cite{DBLP:journals/tocl/Platzer15,DBLP:journals/jar/Platzer17} except for a slight modification in axiom \irref{DSd}, which is sound, because the effect of a differential equation \(\pevolve{\D{x}=f}\) on $\D{x}$ is that its value equals $f$ while following the ODE.

The completeness proof is by induction on a well-founded partial order~$\prec$ induced by the lexicographic ordering of the overall structural complexity of the hybrid games in the formula and the structural complexity of the formula itself, with the logic \LBase placed at the bottom of the partial order \cite{DBLP:journals/tocl/Platzer15}.
Even if all axioms and rules in \rref{fig:dGL} except \irref{assigndeq+DSd} are surjective by \rref{lem:surjectiveaxiom}, most do not have the form used in the schematic completeness result for \dGL \cite[Thm.\,4.5]{DBLP:journals/tocl/Platzer15}.
All required schematic instances of all axioms (except assignments) for that completeness result can, nevertheless, be obtained by instantiating game symbol $c$ to the test game $\ptest{\psi}$ for the desired instance $\psi$, which is possible by \rref{lem:surjectiveaxiom}.
Uniform substitution then turns each respective occurrence of $\pusall$ into \(\ddiamond{\ptest{\psi}}{\ltrue}\),
which an additional use of surjective axiom \irref{testd} turns into \(\psi\land\ltrue\), which first-order logic equivalences in \LBase simplify to the desired $\psi$.

  For example, consider the representative case \m{\entails \precondf \limply \ddiamond{\pdual{\beta}}{\postcondf}}, which implies
   \m{\entails \precondf \limply \lnot\ddiamond{\beta}{\lnot \postcondf}}, which implies
   \m{\entails \precondf \limply \dbox{\beta}{\postcondf}}.
   Since \(\dbox{\beta}{\postcondf} \prec \ddiamond{\pdual{\beta}}{\postcondf}\), because $\pdual{\beta}$ is more complex than $\beta$ even if the modality changed,
   \m{\infers[\Oracle] \precondf \limply \dbox{\beta}{\postcondf}} can be derived by IH.
   Axiom \irref{box}, thus, derives
   \m{\infers[\Oracle] \precondf \limply \lnot\ddiamond{\beta}{\lnot \postcondf}},
   from which, with \rref{lem:surjectiveaxiom} and the above observations about axiom \irref{testd}, axiom \irref{duald} derives 
   \m{\infers[\Oracle] \precondf \limply \ddiamond{\pdual{\beta}}{\postcondf}}.
   
Thus, \rref{lem:surjectiveaxiom} makes the previous completeness proof \cite[Thm.\,4.5]{DBLP:journals/tocl/Platzer15} with the uniform substitution relative completeness refinements \cite[Thm.\,40]{DBLP:journals/jar/Platzer17} transfer to \rref{fig:dGL},
but only if all uses of the assignment axiom, which is not surjective, can be patched.
The only such case is in the proof that
\m{\entails \precondf \limply \ddiamond{\pupdate{\pumod{x}{\theta}}}{\postcondf}}
implies that this formula can be proved in the \dGL calculus from \Oracle, which, because of the different axioms, works differently than the corresponding case of \m{\entails \precondf \limply \dbox{\pupdate{\pumod{x}{\theta}}}{\postcondf}} in the completeness proof for \dL \cite[Thm.\,40]{DBLP:journals/jar/Platzer17}.

   If \m{\entails \precondf \limply \ddiamond{\pupdate{\pumod{y}{\theta}}}{\postcondf}}, then this formula can be proved, using a fresh variable $x$ not occurring in $\theta$ or $\postcondf$, with the following derivation by renaming (\rref{lem:brename})
  \begin{sequentdeduction}[array]
    \linfer[BR] %
    {\linfer[assigndeq] %
      {\linfer[assigndeq] %
        {\lsequent{\precondf} {\lexists{x}{(x=\theta\land\lexists{\D{x}}{(\D{x}=\D{y}\land\urename[\postcondf]{y}{x})})}}}
      {\lsequent{\precondf} {\lexists{x}{(x=\theta\land\urename[\ddiamond{\Dupdate{\Dumod{\D{x}}{\D{y}}}}{\postcondf}]{y}{x})}}}
      }
    {\lsequent{\precondf} {\ddiamond{\pupdate{\pumod{x}{\theta}}}{\ddiamond{\Dupdate{\Dumod{\D{x}}{\D{y}}}}{\urename[\postcondf]{y}{x}}}}}
    }%
    {\lsequent{\precondf} {\ddiamond{\pupdate{\pumod{y}{\theta}}}{\postcondf}}}
  \end{sequentdeduction}
  In the above proof, the two instantiations of axiom \irref{assigndeq} succeed, because $x$ and $\D{x}$ are fresh, so do not occur in either $\theta$ or $\D{y}$.
  The above proof only used equivalence transformations, so its premise is valid iff its conclusion is, which it is by assumption, so implies 
  \(\entails \precondf \limply \lexists{x}{(x=\theta\land\lexists{\D{x}}{(\D{x}=\D{y}\land\urename[\postcondf]{y}{x})})}\).
  Since \(\big(\precondf \limply \lexists{x}{(x=\theta\land\lexists{\D{x}}{(\D{x}=\D{y}\land\urename[\postcondf]{y}{x})})}\big) \prec (\precondf \limply \ddiamond{\pupdate{\pumod{y}{\theta}}}{\postcondf})\),
  because there are less hybrid games,
  \(\infers[\Oracle] \precondf \limply \lexists{x}{(x=\theta\land\lexists{\D{x}}{(\D{x}=\D{y}\land\urename[\postcondf]{y}{x})})}\)
  by IH.
  The above proof, thus, derives \(\infers[\Oracle] \precondf \limply \ddiamond{\pupdate{\pumod{y}{\theta}}}{\postcondf}\).
  For later, also note the derivability of:
  \begin{equation}
  \postcondf \lbisubjunct \ddiamond{\pupdate{\pumod{x}{x}}}{\postcondf}
  \label{eq:stutterd}
  \end{equation}
  Since it is valid, this stuttering identity derives with an additional derivation of the converse \(\ddiamond{\pupdate{\pumod{x}{x}}}{\postcondf} \limply \postcondf\).
  That follows from similarly deriving \(\ddiamond{\pupdate{\pumod{x}{x}}}{\postcondf} \limply \precondf\) by contraposition like above with a fresh $x$ if \(\entails \ddiamond{\pupdate{\pumod{x}{x}}}{\postcondf} \limply \precondf\):
  \begin{sequentdeduction}[array]
  \linfer%
  {\linfer[box]
    {\linfer[BR] %
    {\linfer[box]
    {\linfer[assigndeq] %
      {\linfer[assigndeq] %
        {\lsequent{\lnot\precondf} {\lnot\lexists{x}{(x=\theta\land\lexists{\D{x}}{(\D{x}=\D{y}\land\urename[\postcondf]{y}{x})})}}}
      {\lsequent{\lnot\precondf} {\lnot\lexists{x}{(x=\theta\land\urename[\ddiamond{\Dupdate{\Dumod{\D{x}}{\D{y}}}}{\postcondf}]{y}{x})}}}
      }
    {\lsequent{\lnot\precondf} {\lnot\ddiamond{\pupdate{\pumod{x}{\theta}}}{\ddiamond{\Dupdate{\Dumod{\D{x}}{\D{y}}}}{\urename[\postcondf]{y}{x}}}}}
    }%
    {\lsequent{\lnot\precondf} {\dbox{\pupdate{\pumod{x}{\theta}}}{\dbox{\Dupdate{\Dumod{\D{x}}{\D{y}}}}{\urename[\lnot\postcondf]{y}{x}}}}}
    }%
    {\lsequent{\lnot\precondf} {\dbox{\pupdate{\pumod{y}{\theta}}}{\lnot\postcondf}}}
    }%
    {\lsequent{\lnot\precondf} {\lnot\ddiamond{\pupdate{\pumod{y}{\theta}}}{\postcondf}}}
    }%
  {\lsequent{\ddiamond{\pupdate{\pumod{y}{\theta}}}{\postcondf}} {\precondf}}
  \end{sequentdeduction}

A final subtlety arises in the case of diamond properties of loops \cite{DBLP:journals/jar/Platzer17}. Let
\m{\entails \precondf \limply \ddiamond{\prepeat{\beta}}{\postcondf}}.
   \def\vec#1{#1}%
    Let $\vec{x}$ be the (\emph{finite!}) vector of free variables \m{\freevars{\ddiamond{\prepeat{\beta}}{\postcondf}}}.
    Since $\ddiamond{\prepeat{\beta}}{\postcondf}$ is a least pre-fixpoint \cite{DBLP:journals/tocl/Platzer15}, for all \dGL formulas $\psi$ with \(\freevars{\psi}\subseteq\freevars{\ddiamond{\prepeat{\beta}}{\postcondf}}\):
    \[
    \entails \lforall{\vec{x}}{(\postcondf\lor\ddiamond{\beta}{\psi}\limply\psi)} \limply (\ddiamond{\prepeat{\beta}}{\postcondf} \limply\psi)
    \]
    In particular, this holds for a fresh predicate symbol $p$ with arguments $\vec{x}$:
    \[
    \entails \lforall{\vec{x}}{(\postcondf\lor\ddiamond{\beta}{p(\vec{x})}\limply p(\vec{x}))} \limply (\ddiamond{\prepeat{\beta}}{\postcondf} \limply p(\vec{x}))
    \]
    Using \m{\entails \precondf \limply \ddiamond{\prepeat{\beta}}{\postcondf}}, this implies
    \[
    \entails \lforall{\vec{x}}{(\postcondf\lor\ddiamond{\beta}{p(\vec{x})}\limply p(\vec{x}))} \limply (\precondf \limply p(\vec{x}))
    \]
    As \((\lforall{\vec{x}}{(\postcondf\lor\ddiamond{\beta}{p(\vec{x})}\limply p(\vec{x}))} \limply (\precondf \limply p(\vec{x}))) \prec \phi\), because, even if the formula complexity increased, the structural complexity of the games decreased, since $\phi$ has one more repetition, this fact is derivable by IH:
    \[
    \infers[\Oracle] \lforall{\vec{x}}{(\postcondf\lor\ddiamond{\beta}{p(\vec{x})}\limply p(\vec{x}))} \limply (\precondf \limply p(\vec{x}))
    \]
    The uniform substitution \(\sigma=\usubstlist{\usubstmod{p(\usarg)}{\ddiamond{\pupdate{\pumod{\vec{x}}{\usarg}}}{\ddiamond{\prepeat{\beta}}{\postcondf}}}}\) does not clash since \(\freevars{\ddiamond{\prepeat{\beta}}{\postcondf}}\subseteq\{\vec{x}\}\).
    Since $p$ does not occur in $\precondf$, $\postcondf$ or $\beta$, rule \irref{US} derives:
    \begin{sequentdeduction}[array]
    \linfer%
    {\linfer[US]
      {\lsequent{} {\lforall{\vec{x}}{(\postcondf\lor\ddiamond{\beta}{p(\vec{x})}\limply p(\vec{x}))} \limply (\precondf \limply p(\vec{x}))}}
      {\lsequent{} {\lforall{\vec{x}}{(\postcondf\lor\ddiamond{\beta}{\ddiamond{\pupdate{\pumod{\vec{x}}{\vec{x}}}}{\ddiamond{\prepeat{\beta}}{\postcondf}}}\limply \ddiamond{\pupdate{\pumod{\vec{x}}{\vec{x}}}}{\ddiamond{\prepeat{\beta}}{\postcondf}})} \limply (\precondf \limply \ddiamond{\pupdate{\pumod{\vec{x}}{\vec{x}}}}{\ddiamond{\prepeat{\beta}}{\postcondf}})}}
    }%
    {\lsequent{} {\lforall{\vec{x}}{(\postcondf\lor\ddiamond{\beta}{\ddiamond{\prepeat{\beta}}{\postcondf}}\limply \ddiamond{\prepeat{\beta}}{\postcondf})} \limply (\precondf \limply \ddiamond{\prepeat{\beta}}{\postcondf})}}
    \end{sequentdeduction}
    where the last inference used the derivable stuttering identity \rref{eq:stutterd} three times.
    The iteration axiom \irref{iterated} with \rref{lem:surjectiveaxiom} completes this derivation:
\renewcommand{\linferPremissSeparation}{~}%

\begin{minipage}{\textwidth}
\advance\leftskip-0.6cm
\begin{minipage}{\textwidth}%
\renewcommand{\linferPremissSeparation}{}%
    \begin{sequentdeduction}[Hilbert+array]
    \linfer[MP]
    {
    \lsequent{} {\lforall{\vec{x}}{(\postcondf\lor\ddiamond{\beta}{\ddiamond{\prepeat{\beta}}{\postcondf}}\limply \ddiamond{\prepeat{\beta}}{\postcondf})} \limply (\precondf \limply \ddiamond{\prepeat{\beta}}{\postcondf})}
    !
    \linfer[gena]
    {\linfer[iterated]
      {\lclose}
      {\lsequent{} {\postcondf\lor\ddiamond{\beta}{\ddiamond{\prepeat{\beta}}{\postcondf}} \limply \ddiamond{\prepeat{\beta}}{\postcondf}}}
    }
    {\lsequent{} {\lforall{\vec{x}}{(\postcondf\lor\ddiamond{\beta}{\ddiamond{\prepeat{\beta}}{\postcondf}} \limply \ddiamond{\prepeat{\beta}}{\postcondf})}}}
    }
    {\lsequent{\precondf} {\ddiamond{\prepeat{\beta}}{\postcondf}}}
    \end{sequentdeduction}
    \end{minipage}
    \end{minipage}
    Observe that rules \irref{gena} and \irref{MP} instantiate as needed with \irref{USR} by \rref{lem:surjectiveaxiom}.
\qedhere
\end{proofatend}

This completeness result assumes that no game symbols occur, because uniform renaming otherwise needs to become a syntactic operator.
{%
A logic \LBase closed under first-order connectives
is \emph{differentially expressive} (for \dGL) if every \dGL formula $\phi$ has an equivalent $\reduct{\phi}$ in \LBase and all differential equation equivalences of the form \(\ddiamond{\pevolve{\D{x}=\genDE{x}}}{G} \lbisubjunct \reduct{(\ddiamond{\pevolve{\D{x}=\genDE{x}}}{G})}\) for $G$ in \LBase are provable in its calculus.
}

\section{Differential Hybrid Games}

\providecommand*{\pdiffgame}[3]{{#1}{\ifthenelse{\equal{#2}{}}{}{{}\&^{\hspace{-3pt}d}#2}\ifthenelse{\equal{#3}{}}{}{\&}#3}}%
\newcommand*{\genDG}[3]{\theta}%
\newcommand*{\genDGb}[3]{\eta}%
\newcommand*{\pdifftabooaugment}[4]{\textcolor{blue}{\bar{#1}}}%
\newcommand*{\pdifftabooaugmentdef}[4]{#1\cup\{#2,\D{#2},#3,\D{#3},#4,\D{#4}\}}
\newcommand*{\urengroup}[1]{(#1)}%

Uniform substitution generalizes from \dGL for hybrid games \cite{DBLP:journals/tocl/Platzer15} to \dGL for \emph{differential} hybrid games \cite{DBLP:journals/tocl/Platzer17}, which add differential games as a new atomic game.
A \emph{differential game} of the form \(\pdiffgame{\D{x}=\genDG{x}{y}{z}}{y\in Y}{z\in Z}\)
allows Angel to control how long to follow the differential equation \(\D{x}=\genDG{x}{y}{z}\) (in which variables $x,y,z$ may occur) while Demon provides a measurable input for $y$ over time satisfying the formula $y\in Y$ always and Angel, knowing Demon's current input, provides a measurable input for $z$ satisfying the formula $z\in Z$. 
All occurrences of $y,z$ in \(\pdiffgame{\D{x}=\genDG{x}{y}{z}}{y\in Y}{z\in Z}\) are bound, and $y\in Y$ and $z\in Z$ are formulas in the free variables $y$ or $z$, respectively.
It has been a long-standing challenge to give mathematical meaning \cite{DBLP:journals/tams/ElliottK74,DBLP:journals/indianam/EvansSouganidis84} and sound reasoning principles
\cite{DBLP:journals/tocl/Platzer17} for differential games. Both outcomes can simply be adopted here under the usual well-definedness assumptions \cite{DBLP:journals/tocl/Platzer17}.

Uniform substitution application in \rref{fig:usubst-one} lifts to differential games by adding:
\[
    \usubstappp{\pdifftabooaugment{U}{x}{y}{z}}{U}{\sigma}{\usubstgroup{\pdiffgame{\D{x}=\genDG{x}{y}{z}}{y\in Y}{z\in Z}}}
    \mnodefeq
    (\pdiffgame{\D{x}=\usubstapp{\pdifftabooaugment{U}{x}{y}{z}}{\sigma}{\genDG{x}{y}{z}}}{y\in\usubstapp{\pdifftabooaugment{U}{x}{y}{z}}{\sigma}{Y}}{z\in\usubstapp{\pdifftabooaugment{U}{x}{y}{z}}{\sigma}{Z}})
\]
where \(\pdifftabooaugment{U}{x}{y}{z}\) is \(\pdifftabooaugmentdef{U}{x}{y}{z}\).
Well-definedness assumptions on differential games \cite{DBLP:journals/tocl/Platzer17} need to hold, e.g., only first-order logic formulas denoting compact sets are allowed for controls and the differential equations need to be bounded.

As terms are unaffected by adding differential games to the syntax, \rref{lem:coincidence-term} and~\ref{lem:usubst-term} do not change.
The proofs of the coincidence lemmas~\ref{lem:coincidence} and~\ref{lem:coincidence-HG} and bound effect lemma~\ref{lem:bound} \cite{DBLP:conf/cade/Platzer18} transfer to \dGL with differential hybrid games in verbatim thanks to their use of \emph{semantically defined} free and bound variables, which carry over to differential hybrid games.
The proof of \rref{lem:usubst-taboos} generalizes easily by adding a case for differential games with the above \(\pdifftabooaugment{U}{x}{y}{z}\).
The uniform substitution lemmas~\ref{lem:usubst} and \ref{lem:usubst-HG} inductively generalize to differential hybrid games because of:%
\begin{lemma}[Uniform substitution for differential games]
Let \(U\subseteq\allvars\).
For all $U$-variations $\iget[state]{\Ie}$ of $\iget[state]{\I}$:
\[\iwin[{\usubstappp{\pdifftabooaugment{U}{x}{y}{z}}{U}{\sigma}{\usubstgroup{\pdiffgame{\D{x}=\genDG{x}{y}{z}}{y\in Y}{z\in Z}}}}]{\Ie}{X} ~\text{iff}~ \iwin[\pdiffgame{\D{x}=\genDG{x}{y}{z}}{y\in Y}{z\in Z}]{\Iae}{X}\]
\end{lemma}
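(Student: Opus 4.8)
The plan is to transfer the differential-equation case of the proof of \rref{lem:usubst-HG} following the same pattern, letting the enlarged taboo set $\pdifftabooaugment{U}{x}{y}{z}$, which equals $\pdifftabooaugmentdef{U}{x}{y}{z}$, play exactly the role that $U\cup\{x,\D{x}\}$ plays there, and using the differential-game semantics only as a black box via its well-definedness assumptions \cite{DBLP:journals/tocl/Platzer17}. Fix a $U$-variation $\iget[state]{\Ie}$ of $\iget[state]{\I}$.

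First I would unfold the two winning regions according to the differential-game semantics. Membership of $\iget[state]{\Ie}$ in the left-hand winning region amounts to the existence of a winning strategy for Angel (choosing the duration and, knowing Demon's current $y$-input, a measurable $z$-input) against every measurable Demon $y$-input, such that the induced trajectory $\varphi$ solves $\D{x}=\usubstapp{\pdifftabooaugment{U}{x}{y}{z}}{\sigma}{\genDG{x}{y}{z}}$, keeps $\usubstapp{\pdifftabooaugment{U}{x}{y}{z}}{\sigma}{Y}$ true of $y$, and keeps $\usubstapp{\pdifftabooaugment{U}{x}{y}{z}}{\sigma}{Z}$ true of $z$ throughout; membership of $\iget[state]{\Ie}$ in $\iwinreg[\pdiffgame{\D{x}=\genDG{x}{y}{z}}{y\in Y}{z\in Z}]{\Ia}{X}$ unfolds in the same way, but with $\genDG{x}{y}{z}$, $Y$, and $Z$ evaluated in the adjoint $\iget[const]{\Ia}$. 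The structural fact I need here, which is the differential-game analogue of bound effect (\rref{lem:bound}) and is part of the well-definedness assumptions \cite{DBLP:journals/tocl/Platzer17}, is that every state $\varphi(\zeta)$ occurring along such a play agrees with $\iget[state]{\Ie}$ off $\{x,\D{x},y,\D{y},z,\D{z}\}$; by monotonicity and transitivity of variations, $\varphi(\zeta)$ is then a $\pdifftabooaugment{U}{x}{y}{z}$-variation of $\iget[state]{\I}$.

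Then I would apply the induction hypotheses at each such play-state $\varphi(\zeta)$. Because $\varphi(\zeta)$ is a $\pdifftabooaugment{U}{x}{y}{z}$-variation of $\iget[state]{\I}$, \rref{lem:usubst-term} for taboo $\pdifftabooaugment{U}{x}{y}{z}$ gives $\ivaluation{\Iff[\zeta]}{\usubstapp{\pdifftabooaugment{U}{x}{y}{z}}{\sigma}{\genDG{x}{y}{z}}} = \ivaluation{\iconcat[state=\varphi(\zeta)]{\Ia}}{\genDG{x}{y}{z}}$, and \rref{lem:usubst} for the same taboo gives $\imodels{\Iff[\zeta]}{\usubstapp{\pdifftabooaugment{U}{x}{y}{z}}{\sigma}{Y}}$ iff $\imodels{\iconcat[state=\varphi(\zeta)]{\Ia}}{Y}$, and likewise for $Z$. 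So, for any fixed strategies and controls, the trajectory and the satisfaction of the $y$- and $z$-constraints are the same in the substituted game under $\iget[const]{\I}$ as in the original game under $\iget[const]{\Ia}$; hence a winning strategy for Angel in one setting is a winning strategy for Angel in the other, and $\iget[state]{\Ie}$ lies in the left-hand winning region iff it lies in $\iwinreg[\pdiffgame{\D{x}=\genDG{x}{y}{z}}{y\in Y}{z\in Z}]{\Ia}{X}$, i.e., iff $\iwin[\pdiffgame{\D{x}=\genDG{x}{y}{z}}{y\in Y}{z\in Z}]{\Iae}{X}$.

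I expect the main obstacle to be the first step: since the differential-game semantics is imported by reference rather than redeveloped here, the argument must pin down precisely the two properties of that semantics it uses — that a play is completely determined by the pointwise values of the right-hand side $\genDG{x}{y}{z}$ and of the constraint formulas $Y$ and $Z$ along the trajectory, and that a play ever modifies only $x,\D{x},y,\D{y},z,\D{z}$ — and cite the well-definedness assumptions \cite{DBLP:journals/tocl/Platzer17} for both. Everything past that is routine: the enlarged taboo $\pdifftabooaugment{U}{x}{y}{z}$ absorbs the bound variables of the differential game, so that \rref{lem:usubst-term} and \rref{lem:usubst} remain applicable at every state visited during the play, exactly as $U\cup\{x,\D{x}\}$ does for ordinary differential equations.
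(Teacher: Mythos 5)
Your proposal takes a genuinely different route from the paper's. You unfold the differential-game semantics directly and argue that, since \rref{lem:usubst-term} and \rref{lem:usubst} make the right-hand side and the control constraints agree pointwise on all $\pdifftabooaugment{U}{x}{y}{z}$-variations, the two games have identical plays and hence identical winning regions. The paper deliberately refuses to open that black box: it first uniformly renames $y,z$ to fresh $v,w$ in the substituted game, assumes without loss of generality that $\sigma$ is idempotent so that the substituted, renamed game has the same winning region in $\iget[const]{\I}$ as in the adjoint $\iget[const]{\Ia}$, and thereby reduces the lemma to the truth in $\iget[const]{\Ia}$ of a single \dGL equivalence between two differential games \emph{in the same interpretation}. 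That equivalence is then discharged purely deductively by the sound differential game refinement rule of \cite{DBLP:journals/tocl/Platzer17}, applied once per direction with witnesses $v=y$ and $w=z$, whose premises are first-order matching conditions verified by \rref{lem:usubst-term}, \rref{lem:usubst}, coincidence, and renaming. What that detour buys is precisely that no property of strategies, measurable controls, or response trajectories ever has to be examined.

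The gap in your version is the sentence claiming that a winning strategy for Angel in one setting is a winning strategy in the other. For ODEs this is immediate because \rref{def:HG-semantics} exhibits the winning region explicitly as an existential over flows, but for differential games the winning region is a limit construction over nonanticipative strategies and measurable controls, and the two facts you need --- that it depends only on the pointwise denotations of $\genDG{x}{y}{z}$, $Y$, $Z$ over the reachable variations, and that every state arising during a play is a $\{x,\D{x},y,\D{y},z,\D{z}\}$-variation of the initial state --- are exactly the ``significant complexities and semantic subtleties'' the paper says it is avoiding. Citing the well-definedness assumptions of \cite{DBLP:journals/tocl/Platzer17} does not cover this: those assumptions concern compactness of the control constraint sets and boundedness of the dynamics, not extensionality of the winning-region construction in the denotations of its components (and \rref{lem:bound} gives only the winning-region version of the bound effect, not the per-play-state version you invoke). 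To make your route rigorous you would have to state and prove that extensionality property from the actual differential-game semantics; the paper's reduction to the refinement rule is precisely the device that makes this unnecessary.
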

\begin{proofatend}
\newcommand{\Iazyv}{\vdLint[const=\sigma^*_\omega I,state={\urename[\mu]{y}{v}}]}%
\newcommand{\Iazzw}{\vdLint[const=\sigma^*_\omega I,state={\urename[\mu]{z}{w}}]}%
\newcommand{\Izyvzw}{\vdLint[const=I,state={\urename[\mu]{y}{v}\urename{z}{w}}]}%
\newcommand{\Iazyvzw}{\vdLint[const=\sigma^*_\omega I,state={\urename[\mu]{y}{v}\urename{z}{w}}]}%
Left side is
\(\iwin[{\usubstappp{\pdifftabooaugment{U}{x}{y}{z}}{U}{\sigma}{\usubstgroup{\pdiffgame{\D{x}=\genDG{x}{y}{z}}{y\in Y}{z\in Z}}}}]{\Ie}{X}\)
= \(\iwinreg[\pdiffgame{\D{x}{=}\usubstapp{\pdifftabooaugment{U}{x}{y}{z}}{\sigma}{\genDG{x}{y}{z}}}{y\in\usubstapp{\pdifftabooaugment{U}{x}{y}{z}}{\sigma}{Y}}{z\in\usubstapp{\pdifftabooaugment{U}{x}{y}{z}}{\sigma}{Z}}]{\Ie}{X}\)
= \(\iwinreg[\pdiffgame{\D{x}{=}\urename[\usubstapp{\pdifftabooaugment{U}{x}{y}{z}}{\sigma}{\genDG{x}{y}{z}}]{y}{v}\urename{z}{w}}{v\in\usubstapp{\pdifftabooaugment{U}{x}{y}{z}}{\sigma}{Y}}{w\in\usubstapp{\pdifftabooaugment{U}{x}{y}{z}}{\sigma}{Z}}]{\Ie}{X}\)
by uniform renaming of $y$ to $v$ and $z$ to $w$ (proof of \rref{lem:brename}), which are fresh.
Here \(\urename[\usubstapp{\pdifftabooaugment{U}{x}{y}{z}}{\sigma}{\genDG{x}{y}{z}}]{y}{v}\urename{z}{w}\) is the result of uniformly renaming $y$ to $v$ and $z$ to $w$ in the term \(\usubstapp{\pdifftabooaugment{U}{x}{y}{z}}{\sigma}{\genDG{x}{y}{z}}\)
and
\(v\in\usubstapp{\pdifftabooaugment{U}{x}{y}{z}}{\sigma}{Y}\) the result of uniformly renaming $y$ to $v$ in \(y\in\usubstapp{\pdifftabooaugment{U}{x}{y}{z}}{\sigma}{Y}\) (no $z$ occurs),
and
\(w\in\usubstapp{\pdifftabooaugment{U}{x}{y}{z}}{\sigma}{Z}\) the result of uniformly renaming $z$ to $w$ in \(z\in\usubstapp{\pdifftabooaugment{U}{x}{y}{z}}{\sigma}{Z}\), where $y$ does not occur.
Without loss of generality (by performing two subsequent uniform substitutions), no symbol that is being replaced by $\sigma$ occurs in any of $\sigma$'s replacements.
Hence, $\sigma$ is idempotent and
\(\iwinreg[\pdiffgame{\D{x}=\urename[\usubstapp{\pdifftabooaugment{U}{x}{y}{z}}{\sigma}{\genDG{x}{y}{z}}]{y}{v}\urename{z}{w}}{v\in\usubstapp{\pdifftabooaugment{U}{x}{y}{z}}{\sigma}{Y}}{w\in\usubstapp{\pdifftabooaugment{U}{x}{y}{z}}{\sigma}{Z}}]{\Ie}{X}\)
= \(\iwinreg[\pdiffgame{\D{x}=\urename[\usubstapp{\pdifftabooaugment{U}{x}{y}{z}}{\sigma}{\genDG{x}{y}{z}}]{y}{v}\urename{z}{w}}{v\in\usubstapp{\pdifftabooaugment{U}{x}{y}{z}}{\sigma}{Y}}{w\in\usubstapp{\pdifftabooaugment{U}{x}{y}{z}}{\sigma}{Z}}]{\Iae}{X}\).
Now that both sides are phrased in the same interpretation, the desired equivalence that
\(\iwin[\pdiffgame{\D{x}=\genDG{x}{y}{z}}{y\in Y}{z\in Z}]{\Iae}{X}\)
iff
\(\iwin[\pdiffgame{\D{x}=\urename[\usubstapp{\pdifftabooaugment{U}{x}{y}{z}}{\sigma}{\genDG{x}{y}{z}}]{y}{v}\urename{z}{w}}{v\in\usubstapp{\pdifftabooaugment{U}{x}{y}{z}}{\sigma}{Y}}{w\in\usubstapp{\pdifftabooaugment{U}{x}{y}{z}}{\sigma}{Z}}]{\Iae}{X}\)
follows provided that the following \dGL formula is true in $\iportray{\Iae}$ for a fresh game symbol $c$ with \(\imodel{\Iae}{\ddiamond{c}{\ltrue}}=X\):
\begin{equation}
\ddiamond{\pdiffgame{\D{x}=\urename[\usubstapp{\pdifftabooaugment{U}{x}{y}{z}}{\sigma}{\genDG{x}{y}{z}}]{y}{v}\urename{z}{w}}{v\in\usubstapp{\pdifftabooaugment{U}{x}{y}{z}}{\sigma}{Y}}{w\in\usubstapp{\pdifftabooaugment{U}{x}{y}{z}}{\sigma}{Z}}}{\ddiamond{c}{\ltrue}}
\lbisubjunct
\ddiamond{\pdiffgame{\D{x}=\genDG{x}{y}{z}}{y\in Y}{z\in Z}}{\ddiamond{c}{\ltrue}}
\label{eq:diffgame-usubst-key-equiv}
\end{equation}
Without loss of generality, replace free occurrences of variables \(\scomplement{\{x,\D{x},y,\D{y},z,\D{z}\}}\) by their respective real values in $\iget[state]{\Iae}$.
Now \rref{eq:diffgame-usubst-key-equiv} is true in $\iportray{\Iae}$ by the (locally sound) differential game refinement proof schema \cite{DBLP:journals/tocl/Platzer17} for $\ddiamond{}{}$ once per implication:
\[
\cinferenceRule[diffgamerefined|DGR]{differential game refinement, diamond dual}
  {\linferenceRule[sequent]
    {\lforall{y\in Y}{\lexists{v\in V}{\lforall{w\in W}{\lexists{z\in Z}{\lforall{x}{(\genDGb{x}{v}{w}=\genDG{x}{y}{z})}}}}}}
    {\ddiamond{\pdiffgame{\D{x}=\genDGb{x}{v}{w}}{v\in V}{w\in W}}{F} \limply \ddiamond{\pdiffgame{\D{x}=\genDG{x}{y}{z}}{y\in Y}{z\in Z}}{F}}
  }
  {}%
\]
By rule \irref{diffgamerefined} for both implications of \rref{eq:diffgame-usubst-key-equiv}, it suffices to show validity in $\iget[const]{\Iae}$ of:
\begin{equation}
\begin{aligned}
  &\lforall{y \in Y}{\lexists{v\in\usubstapp{\pdifftabooaugment{U}{x}{y}{z}}{\sigma}{Y}}{\lforall{w\in\usubstapp{\pdifftabooaugment{U}{x}{y}{z}}{\sigma}{Z}}{\lexists{z\in Z}{\lforall{x}{(
  \urename[\usubstapp{\pdifftabooaugment{U}{x}{y}{z}}{\sigma}{\genDG{x}{y}{z}}]{y}{v}\urename{z}{w}
  = \genDG{x}{y}{z}
  )}}}}}\\
  &\lforall{v\in\usubstapp{\pdifftabooaugment{U}{x}{y}{z}}{\sigma}{Y}}{\lexists{y\in Y}{\lforall{z\in Z}{\lexists{w\in\usubstapp{\pdifftabooaugment{U}{x}{y}{z}}{\sigma}{Z}}{\lforall{x}{(
  \urename[\usubstapp{\pdifftabooaugment{U}{x}{y}{z}}{\sigma}{\genDG{x}{y}{z}}]{y}{v}\urename{z}{w}
  = \genDG{x}{y}{z} 
  )}}}}}
\end{aligned}
\label{eq:diffgame-usubst-key}
\end{equation}
Both formulas are shown with $v=y$ and $w=z$ as witnesses.
By \rref{lem:usubst} all $\pdifftabooaugment{U}{x}{y}{z}$-variations $\iget[state]{\Iaz}$ of $\iget[state]{\I}$ satisfy
\(\imodels{\Iaz}{Y}\)
iff \(\imodels{\Iz}{\usubstapp{\pdifftabooaugment{U}{x}{y}{z}}{\sigma}{Y}}\)
iff, as $\sigma$ idempotent, \(\imodels{\Iaz}{\usubstapp{\pdifftabooaugment{U}{x}{y}{z}}{\sigma}{Y}}\)
iff, by uniform renaming and \rref{lem:coincidence} as $\D{y}$ is not in $\usubstapp{\pdifftabooaugment{U}{x}{y}{z}}{\sigma}{Y}$, \(\imodels{\Iazyv}{\urename[\urengroup{\usubstapp{\pdifftabooaugment{U}{x}{y}{z}}{\sigma}{Y}}]{y}{v}}\)
= \(\imodel{\Iazyv}{v\in\usubstapp{\pdifftabooaugment{U}{x}{y}{z}}{\sigma}{Y}}\).
Here, $\iget[state]{\Iazyv}$ is the state ${\modif{\iget[state]{\Iz}}{v}{\iget[state]{\Iz}(y)}}$ as in \rref{eq:urenameadjoint} of \rref{lem:brename}, where $y,\D{y},\D{v}$ do not occur in \(\urename[\urengroup{\usubstapp{\pdifftabooaugment{U}{x}{y}{z}}{\sigma}{Y}}]{y}{v}\).
By a similar argument: 
\(\imodels{\Iaz}{Z}\)
iff \(\imodels{\Iazzw}{\urename[\urengroup{\usubstapp{\pdifftabooaugment{U}{x}{y}{z}}{\sigma}{Z}}]{z}{w}}\)
= \(\imodel{\Iazzw}{w\in\usubstapp{\pdifftabooaugment{U}{x}{y}{z}}{\sigma}{Z}}\).
When \(v=y\) and \(w=z\), the constraints of \rref{eq:diffgame-usubst-key} are met in a state of $\iget[const]{\Iae}$ for $y,z$ iff they are met for $v,w$.

Finally, by \rref{lem:usubst-term} when $\iget[state]{\Iaz}$ is a $\pdifftabooaugment{U}{x}{y}{z}$-variation of $\iget[state]{\I}$:
\(\ivaluation{\Iaz}{\genDG{x}{y}{z}}\)
= \(\ivaluation{\Iz}{\usubstapp{\pdifftabooaugment{U}{x}{y}{z}}{\sigma}{\genDG{x}{y}{z}}}\)
which by uniform renaming and \rref{lem:coincidence-term} as $\D{y}$ and $\D{z}$ are not in \(\usubstapp{\pdifftabooaugment{U}{x}{y}{z}}{\sigma}{\genDG{x}{y}{z}}\) equals
\(\ivaluation{\Izyvzw}{\urename[\urengroup{\usubstapp{\pdifftabooaugment{U}{x}{y}{z}}{\sigma}{\genDG{x}{y}{z}}}]{y}{v}\urename{z}{w}}\),
which by idempotence of $\sigma$ equals
\(\ivaluation{\Iazyvzw}{\urename[\urengroup{\usubstapp{\pdifftabooaugment{U}{x}{y}{z}}{\sigma}{\genDG{x}{y}{z}}}]{y}{v}\urename{z}{w}}\).
Thus, the states $\iget[state]{\Iaz}$ that are $\{x,y,z,v,w\}$-variations of $\iget[state]{\Ie}$ so $\pdifftabooaugment{U}{x}{y}{z}$-variation of $\iget[state]{\I}$ satisfying \(\imodels{\Iaz}{v=y \land w=z}\) witness \rref{eq:diffgame-usubst-key}, because
\(\imodels{\Iaz}{\urename[\urengroup{\usubstapp{\pdifftabooaugment{U}{x}{y}{z}}{\sigma}{\genDG{x}{y}{z}}}]{y}{v}\urename{z}{w}=\genDG{x}{y}{z}}\) 
by \rref{lem:coincidence-term} as $v,w$ are not in \(\genDG{x}{y}{z}\),
for all values of $y,w,x$ (with $v:=y,z:=w$),
or for all values of $v,z,x$ ($y:=v,w:=z$), respectively.
\end{proofatend}

The proof makes clever use of differential game refinements \cite{DBLP:journals/tocl/Platzer17} to avoid the significant complexities and semantic subtleties of differential games.

\section{Conclusion}
This paper introduced significantly faster uniform substitution mechanisms, the dominant logical inference in axiomatic small core hybrid systems/games provers.
It is also first in proving soundness of uniform substitution for differential games.

Implementations exhibit a linear runtime complexity compared to the exponential complexity that direct implementations \cite{DBLP:conf/cade/FultonMQVP15} of prior Church-style uniform substitutions exhibit, except when applying aggressive space/time optimization tradeoffs where that drops down to a quadratic runtime in practice.

\section*{Acknowledgment}
I thank Frank Pfenning for useful discussions and the anonymous reviewers for their helpful feedback.
I appreciate the kind advice of the Isabelle group at TU Munich for the subsequent formalization \cite{DBLP:journals/afp/Platzer19} of the proofs.

This research is supported by the Alexander von Humboldt Foundation and by the AFOSR under grant number FA9550-16-1-0288.

\renewcommand{\doi}[1]{doi:\href{https://doi.org/#1}{\nolinkurl{#1}}}

The views and conclusions contained in this document are those of the author and should not be interpreted as representing the official policies, either expressed or implied, of any sponsoring institution, the U.S. government or any other entity.

\bibliographystyle{plainurl}
\bibliography{platzer,bibliography}

\begin{thebibliography}{10}

\bibitem{KeYBook2016}
Wolfgang Ahrendt, Bernhard Beckert, Richard Bubel, Reiner H{\"a}hnle, Peter~H.
  Schmitt, and Matthias Ulbrich, editors.
\newblock {\em Deductive Software Verification -- The {KeY} Book}, volume 10001
  of {\em LNCS}.
\newblock Springer, 2016.
\newblock \href {http://dx.doi.org/10.1007/978-3-319-49812-6}
  {\path{doi:10.1007/978-3-319-49812-6}}.

\bibitem{DBLP:conf/cpp/BohrerRVVP17}
Brandon Bohrer, Vincent Rahli, Ivana Vukotic, Marcus V{\"o}lp, and Andr{\'{e}}
  Platzer.
\newblock Formally verified differential dynamic logic.
\newblock In Yves Bertot and Viktor Vafeiadis, editors, {\em Certified Programs
  and Proofs - 6th ACM SIGPLAN Conference, CPP 2017, Paris, France, January
  16-17, 2017}, pages 208--221, New York, 2017. ACM.
\newblock \href {http://dx.doi.org/10.1145/3018610.3018616}
  {\path{doi:10.1145/3018610.3018616}}.

\bibitem{DBLP:journals/jsl/Church40}
Alonzo Church.
\newblock A formulation of the simple theory of types.
\newblock {\em J. Symb. Log.}, 5(2):56--68, 1940.
\newblock \href {http://dx.doi.org/10.2307/2266170}
  {\path{doi:10.2307/2266170}}.

\bibitem{Church_1956}
Alonzo Church.
\newblock {\em Introduction to Mathematical Logic}.
\newblock Princeton University Press, Princeton, 1956.

\bibitem{DeBruijn72}
N.G de~Bruijn.
\newblock Lambda calculus notation with nameless dummies, a tool for automatic
  formula manipulation, with application to the {Church-Rosser} theorem.
\newblock {\em Indagationes Mathematicae}, 75(5):381 -- 392, 1972.
\newblock \href {http://dx.doi.org/10.1016/1385-7258(72)90034-0}
  {\path{doi:10.1016/1385-7258(72)90034-0}}.

\bibitem{DBLP:journals/tams/ElliottK74}
Robert~J. Elliott and Nigel~J. Kalton.
\newblock {Cauchy} problems for certain {Isaacs-Bellman} equations and games of
  survival.
\newblock {\em Trans. Amer. Math. Soc.}, 198:45--72, 1974.
\newblock \href {http://dx.doi.org/10.1090/S0002-9947-1974-0347383-8}
  {\path{doi:10.1090/S0002-9947-1974-0347383-8}}.

\bibitem{DBLP:journals/indianam/EvansSouganidis84}
Lawrence~Craig Evans and Panagiotis~E. Souganidis.
\newblock Differential games and representation formulas for solutions of
  {Hamilton-Jacobi-Isaacs} equations.
\newblock {\em Indiana Univ. Math. J.}, 33(5):773--797, 1984.
\newblock \href {http://dx.doi.org/10.1512/iumj.1984.33.33040}
  {\path{doi:10.1512/iumj.1984.33.33040}}.

\bibitem{DBLP:conf/cade/FultonMQVP15}
Nathan Fulton, Stefan Mitsch, Jan-David Quesel, Marcus V{\"o}lp, and
  Andr{\'{e}} Platzer.
\newblock {KeYmaera X}: An axiomatic tactical theorem prover for hybrid
  systems.
\newblock In Amy Felty and Aart Middeldorp, editors, {\em CADE}, volume 9195 of
  {\em LNCS}, pages 527--538, Berlin, 2015. Springer.
\newblock \href {http://dx.doi.org/10.1007/978-3-319-21401-6_36}
  {\path{doi:10.1007/978-3-319-21401-6_36}}.

\bibitem{Harel_et_al_2000}
David Harel, Dexter Kozen, and Jerzy Tiuryn.
\newblock {\em Dynamic Logic}.
\newblock MIT Press, Cambridge, 2000.
\newblock \href {http://dx.doi.org/10.7551/mitpress/2516.001.0001}
  {\path{doi:10.7551/mitpress/2516.001.0001}}.

\bibitem{DBLP:journals/jsl/Henkin53}
Leon Henkin.
\newblock Banishing the rule of substitution for functional variables.
\newblock {\em J. Symb. Log.}, 18(3):pp. 201--208, 1953.
\newblock \href {http://dx.doi.org/10.2307/2267403}
  {\path{doi:10.2307/2267403}}.

\bibitem{HilbertAckermann28}
David Hilbert and Wilhelm Ackermann.
\newblock {\em Grundz{\"u}ge der theoretischen Logik}.
\newblock Springer, Berlin, 1928.

\bibitem{HilbertBernays34}
David Hilbert and Paul Bernays.
\newblock {\em Grundlagen der Mathematik}, volume~I.
\newblock Springer, 2 edition, 1934.

\bibitem{DBLP:journals/tac/MitchellBT05}
Ian Mitchell, Alexandre~M. Bayen, and Claire Tomlin.
\newblock A time-dependent {Hamilton-Jacobi} formulation of reachable sets for
  continuous dynamic games.
\newblock {\em IEEE T. Automat. Contr.}, 50(7):947--957, 2005.
\newblock \href {http://dx.doi.org/10.1109/TAC.2005.851439}
  {\path{doi:10.1109/TAC.2005.851439}}.

\bibitem{DBLP:conf/pldi/PfenningE88}
Frank Pfenning and Conal Elliott.
\newblock Higher-order abstract syntax.
\newblock In Richard~L. Wexelblat, editor, {\em PLDI}, pages 199--208. ACM,
  1988.
\newblock \href {http://dx.doi.org/10.1145/53990.54010}
  {\path{doi:10.1145/53990.54010}}.

\bibitem{DBLP:journals/tocl/Platzer15}
Andr{\'{e}} Platzer.
\newblock Differential game logic.
\newblock {\em {ACM} Trans. Comput. Log.}, 17(1):1:1--1:51, 2015.
\newblock \href {http://dx.doi.org/10.1145/2817824}
  {\path{doi:10.1145/2817824}}.

\bibitem{DBLP:journals/jar/Platzer17}
Andr{\'{e}} Platzer.
\newblock A complete uniform substitution calculus for differential dynamic
  logic.
\newblock {\em J. Autom. Reas.}, 59(2):219--265, 2017.
\newblock \href {http://dx.doi.org/10.1007/s10817-016-9385-1}
  {\path{doi:10.1007/s10817-016-9385-1}}.

\bibitem{DBLP:journals/tocl/Platzer17}
Andr{\'{e}} Platzer.
\newblock Differential hybrid games.
\newblock {\em {ACM} Trans. Comput. Log.}, 18(3):19:1--19:44, 2017.
\newblock \href {http://dx.doi.org/10.1145/3091123}
  {\path{doi:10.1145/3091123}}.

\bibitem{DBLP:conf/cade/Platzer18}
Andr{\'{e}} Platzer.
\newblock Uniform substitution for differential game logic.
\newblock In Didier Galmiche, Stephan Schulz, and Roberto Sebastiani, editors,
  {\em IJCAR}, volume 10900 of {\em LNCS}, pages 211--227. Springer, 2018.
\newblock \href {http://dx.doi.org/10.1007/978-3-319-94205-6_15}
  {\path{doi:10.1007/978-3-319-94205-6_15}}.

\bibitem{DBLP:journals/afp/Platzer19}
Andr{\'{e}} Platzer.
\newblock Differential game logic.
\newblock {\em Archive of Formal Proofs}, 2019, 2019.
\newblock Formal proof development.
\newblock URL: \url{http://isa-afp.org/entries/Differential\_Game\_Logic.html}.

\bibitem{DBLP:conf/cade/QueselP12}
Jan-David Quesel and Andr{\'{e}} Platzer.
\newblock Playing hybrid games with {KeYmaera}.
\newblock In Bernhard Gramlich, Dale Miller, and Ulrike Sattler, editors, {\em
  IJCAR}, volume 7364 of {\em LNCS}, pages 439--453, Berlin, 2012. Springer.
\newblock \href {http://dx.doi.org/10.1007/978-3-642-31365-3_34}
  {\path{doi:10.1007/978-3-642-31365-3_34}}.

\bibitem{Quine34}
Willard Van~Orman Quine.
\newblock {\em A System of Logistic}.
\newblock Harvard Univ. Press, 1934.

\bibitem{DBLP:journals/ndjfl/Schneider80}
Hubert~H. Schneider.
\newblock Substitutions for predicate variables and functional variables.
\newblock {\em Notre Dame J. Formal Logic}, 21(1):33--44, 01 1980.
\newblock \href {http://dx.doi.org/10.1305/ndjfl/1093882937}
  {\path{doi:10.1305/ndjfl/1093882937}}.

\end{thebibliography}

\end{document}